%% file: main.tex
\documentclass[11pt]{article}
\usepackage[normalem]{ulem}
\input{macros}

\title{Inference for Functional Data under Markov Constraints}

\author{Ulysse Naepels\thanks{ulysse.naepels@epfl.ch} }
\author{Victor M. Panaretos\thanks{victor.panaretos@epfl.ch}}

\affil{Institute of Mathematics, Ecole Polytechnique Fédérale de Lausanne}

\date{\today}

\begin{document}

\maketitle

\begin{abstract}
Smoothness has long been the dominant form of parsimony in functional data analysis, to the point of occasionally being conflated with the very notion of functional data. However, many core inferential tasks depend on the inverse covariance, where sparsity—rather than smoothness—emerges as the more natural structural constraint. In this paper, we explore Markovianity as an alternative to smoothness. Focusing on the Gaussian case as a central motivating setting, we exploit the fact that Markovianity induces a shape constraint on the covariance kernel.
Building on this observation, we introduce a Markov transform of the empirical covariance together with a corresponding estimator that enforces the Markov structure. The estimator is adaptive  and requires no regularity of the underlying covariance beyond continuity. In simulation experiments, it is seen to improve prediction performance even under model misspecification. Unlike smoothness-based assumptions, Markovianity is falsifiable. To assess its validity, we further propose a novel and computationally efficient test for the Markov property based on a new characterization of continuous graphical structure.
\end{abstract}

\newpage

\tableofcontents

\newpage

\section{Introduction}\label{sec:introduction}

\subsection{Parsimony beyond smoothness in functional data}

Functional data analysis (FDA) is  concerned with the problem of inference where data and estimands are elements of infinite dimensional function spaces -- typically random curves and their covariance kernels.  
Historically,  smoothness of the curves has been a prevailing assumption, in order to bridge the continuum nature of the data/estimands with the discrete and often sparse/noisy nature of the actual observations through smoothing  \citep{ramsay2005, yao2005functional}. 
Nevertheless, smoothness is not a defining property of functional data; rather, it is merely one particular form of parsimony that can be asserted.
It is rather the trace-class condition on the covariance operator that constitutes the genuine structural requirement separating functional from high-dimensional data -- reflecting the increasing correlation of nearby observations as the sampling resolution becomes finer \citep{hsing2015theoretical}.
For instance, diffusion processes provide a natural and core class of random functions that escape the framework of smoothness. Yet, despite being almost surely nowhere differentiable, their sample paths can still be regarded as functional observations as their covariances are traceable \citep{mohammadi2024rough}. 
If smoothness is not an intrinsic feature of functional data, it is natural to ask what other notions of parsimony could bridge finite/noisy observations and functional inferences 

Arguably, parsimony should ideally be imposed where it is most structurally relevant for inference, and this depends on the task at hand, for which covariance estimation is often but a preliminary step. Indeed, several downstream statistical tasks such as kriging, regression, prediction, and testing, involve the \emph{inverse} of the covariance rather than the covariance itself. In this context, 
smoothing adversely affects the task at hand, by increasing the ill-conditioning at the level of the inverse. It is therefore typically combined with additional layers of regularization, leading to the necessity to tune multiple smoothing parameters. When the inverse covariance is the primary object, thus, smoothing can amount to regularizing at the wrong level \citep{carroll2013unexpected}. At the level of inverse covariance, sparsity has been long recognized to be the natural parsimony-inducing assumption \citep{dempster1972covariance, yuan2007glasso, lam2009sparsistency}.
In high-dimensional setting sparsity directly targets variance control through the structure of the covariance inverse \citep{bickel2008regularized, liu2020cholesky, alghattas2024covariance}, and is interpretable as a structural assumption on the conditional dependence structure \citep{meinshausen2006graphs}. Within this perspective, the Markov property emerges as the sparsest nontrivial local dependence structure and thus a natural structural assumption for functional data.

Despite these developments in the high-dimensional case, the Markov assumption has not been systematically studied as a parsimony-inducing principle for functional inference. A notable exception is when functional data are censored, and Markovianity is introduced as a means of ensuring identifiability. For example, \citep{delaigle2016approximating} asserted the Markov property in order to carry out prediction from fragmented sample paths by way of non-parametric estimation of transition probabilities. A related approach, through diffusion modeling and bridges, was more recently proposed in the same fragment context by \citep{zhou2025dynamic}. These are special instances of the covariance completion problem, whose general solution -- and relation to Markov-type assumptions -- was obtained by \citep{waghmare2022completion}. From a complementary perspective, \citep{waghmare2025continuously} study the estimation of the Markov structure of functional data, when a consistent covariance estimator is already available.

\subsection{Contributions}

This work makes two main contributions: 
\begin{itemize}[nosep]
\item \textbf{Covariance estimation under the Markov constraint.} 
We introduce the \emph{Markov transform}, a projection of any covariance kernel onto the class of kernels corresponding to Gaussian Markov processes (\Cref{sec:markov-covariance-model}). 
This forms the basis of a non-parametric   covariance estimator (\Cref{sec:covariance-estimation}), conforming to the Markov structure while exploiting its inherent sparsity. 
We establish convergence rates across different sampling regimes (\Cref{sec:rates}), attaining parametric rates under dense designs, and nonparametric rates under sparse or irregular designs. 
In dense regimes, the estimator requires no tuning parameters, extending the paradigm of shape-constrained inference to functional data with Markov structure.

\item \textbf{Testing the Markov property.} 
Unlike smoothness assumptions, the Markov property is falsifiable, making it amenable to diagnostic testing. We provide a novel characterization of the Markov property for continuous processes under suitable identifiability conditions (\Cref{subsec:endpoints-characterization}). 
This structural result allows us to design a computationally efficient testing procedure (\Cref{subsec:test-via-conditional-correlations}) that allows for diagnostic checking of the Markov assumption. In particular, the procedure scales linearly with the number of observation points $p$, in contrast to the $O(p^3)$ complexity of standard conditional independence tests. 
Numerical experiments (\Cref{sec:numerics}) also demonstrate the practical performance of the test, confirming the computational gains and its power to detect departures from the Markov property.
\end{itemize}
Taken together, these contributions provide the basis for functional data analysis resting on Markov-based rather than smoothness-based parsimony. Before presenting our main results, we introduce the necessary background and notation (\Cref{subsec:background}).
All proofs and additional technical details are deferred to the appendix.

\subsection{Background and notation}\label{subsec:background}
\paragraph{Random elements of a Hilbert space.}
Let $\cH = L^2(I)$ be the Hilbert space of square-integrable 
real-valued functions defined on the compact interval $I$.
The space $\cH$ is equipped with the inner product
$\innerp{f}{g} = \int_I f(t)g(t)dt$ and corresponding norm $\norm{\cdot} = (\innerp{\cdot}{\cdot})^{1/2}$.
We consider random elements taking values in $\cH$. 
When such a random element $X$ has almost surely continuous paths, that is,
the map $X(w): I \to \RR$ is continuous for $\PP$-almost all $w \in \Omega$, it can equivalently be viewed as a stochastic process $\{X(t), t\in I\}$. 
Specifically, $X: I \times \Omega \to \RR$ is jointly measurable, where $(\Omega, \cF, \PP)$ is the underlying probability space. Functional data are typically assumed to be random elements of $\mathcal{H}$ with a.s.-continuous paths.
The mean and covariance functions of $X$ are defined as $\mu(t) = \EE\{X(t)\}$ and $K(s,t) = \cov\{X(s), X(t)\}$ for all $s,t\in I$, and with the latter inducing the covariance operator via
\begin{equation*}
    \bK: f \in L^2(I) \mapsto \int_I K(\cdot,t) f(t)dt \in L^2(I).
\end{equation*}
Provided $\mathbb{E}\|X\|^2<\infty$, the covariance operator is trace-class, 
and hence Hilbert-Schmidt \citep{hsing2015theoretical}. In practice, functional data are seldom observable in this idealized continuum and noiseless setting. Rather, one typically observes $n$ independent replications of the stochastic process $X$ at possibly random points, subject to measurement errors:
\begin{align}\label{eq:fda-setup}
    Y_{ij} = X_{i}(t_{ij}) + u_{ij} \quad \text{for } i \in \{1, \dots, n\}, \; j \in \{1, \dots, r\},
\end{align}
where $r \geq 2$ may grow with $n$, and $(u_{ij})_{i,j}$ are i.i.d. centered random variable with finite variance $\nu^2$, independent of the sequence $(X_i)_{i \in \NN}$.
We distinguish two regimes of observations.
First, the regular-observations regime: variables $(t_{ij})_{i,j}$ are deterministic and form a regular grid on the interval $I$. In particular, the observation points do not depend on the sample.
In this setting, $r = r_n$ may grow with $n$.
Second, the irregular regime, where $(t_{ij})_{i,j}$ are i.i.d. uniform random variables on $I$, independent of all other random quantities. 
Here, $r$ may grow with $n$ or remain bounded -- the latter case is referred to as the sparse regime. A core problem is, subsequently, to recover the underlying mean and covariance functions.

\paragraph{Gaussian processes.}
The class of Gaussian functional data is of particular interest in this study.
A stochastic process $\{X(t), t \in I\}$ is said to be Gaussian if for every $n \geq 1$, $(a_i)_{1 \leq i \leq n} \in \RR^n$ and $(t_i)_{1 \leq i \leq n} \in I^n$, the random variable given by the linear combination $\sum_{i=1}^n a_i X(t_i)$ is Gaussian.
This extends the notion of Gaussian vector to infinite index sets.
For Gaussian variables, independence and
uncorrelatedness are equivalent. Therefore,
the independence structure of a Gaussian process is
fully encoded by the zeros of its covariance kernel.
Moreover, if $(Y_1, Y_2, Y_3)$ is a Gaussian vector,
the conditional independence $Y_1 \ind Y_2 \mid Y_3$
holds if and only if $\cov(Y_1, Y_2 \mid Y_3) =0$ almost surely.

\paragraph{Graphical models.}
The conditional independence structure of a collection of random variables can be encoded by a graph, which provides a concise representation of the interactions among the variables. 
An undirected graph is a pair $\mathcal{G} = (V, E)$ where $V$ is a set of vertices and 
$E \subset V \times V$ is a set of edges between the vertices with $(v, v) \notin E$ for all $v$.
The conditional independence graph of a distribution $P$ on $\RR^p$ is the undirected graph $\mathcal{G} = (V, E)$ where given $Z \sim P$, 
\begin{equation*}
    \{ j,k \} \in E \iff Z_j \notind Z_k \mid Z_{-jk}.
\end{equation*}
Furthermore, if $Z$ is Gaussian with full-rank covariance $C$, edges in the graph correspond to non-zero entries in the precision matrix: $\{ j,k \} \in E\mid Z_{-jk} \iff (C^{-1})_{i,j} \neq 0$.

We also define graphical models on continuous sets and the graph of a stochastic process.
Let $X = \{X_t \: : \: t\in I\}$ be a collection of random variables where $I$ is a continuous index set and denote $X_W = \{X_t \: : \: t\in W\}$ the restriction of the process to the index set $W \subset I$. 
We say that $(X, \Omega)$ is a graphical model if  $X_s \ind X_t \mid X_W$ for all $s, t$ separated by $W$ in $\Omega$.
We call the graph of the process $X$ the intersection of all closed graphs $\Omega$ for which $(X, \Omega)$ is a graphical model.
See \Cref{apx:graphical-models} for more details.
 When the index set is uncountable, a precision-based characterization is precluded by the absence of an operational notion of inverse of a kernel defined on an uncountable space.
It is possible to circumvent the issue by approximating the true graph.
Let $\pi = \{I_j\}_{j\in[m]}$ be a partition of $I$ into $m$ intervals of equal length,
and let $K$ be a covariance kernel defined on $I \times I$. 
We write for $i,j \in [m]$, $K_{ij} = K|_{I_i \times I_j}$ and the corresponding integral operator $\bK_{ij}$.
By \citep{baker1973joint}, there is, for $i \neq j$, a unique bounded linear operator $\bR_{ij}$, referred to as the \emph{cross-correlation operator}, such that 
$\bK_{ij} = \bK_{ii} \bR_{ij} \bK_{jj}$ and $\norm{\bR_{ij}} \leq 1$.
When $\bR_\pi \coloneqq (\bR_{ij})_{i,j\in[m]}$ is invertible, we can define a precision operator matrix $\bP_\pi \coloneqq \bR_\pi^{-1}$ which can be used to approximate the graph of a process under suitable conditions (see \Cref{sec:test-markov} and \citep{waghmare2025continuously}).

\section{The Markov property as a shape-constrained covariance model}\label{sec:markov-covariance-model}

In the functional setting, there is a core obstruction in the sparse modeling of the inverse covariance. Namely, owing to its trace-class nature, the covariance operator is not boundedly invertible;  therefore, its inverse is not tangible as an object that can be directly modeled or manipulated. Fortunately, for Gaussian processes, the Markov property yields an explicit factorization of the covariance structure and thereby induces a shape-restricted model for the covariance kernel. The induced shape constraint
parallels the multivariate setting, where Gaussian Markov vectors have tridiagonal precision matrices, but phrasing the constraint at the level of covariance allows for its statistical operationalization in the functional context.

Assuming Gaussianity amounts to adopting a working model for second-order structure, that is, restricting attention to distributions determined by their first two moments. At the level of the Markov property, this can be interpreted as a second-order approximation to the conditional independence structure. In the non-Gaussian case, the same structure can be interpreted as a conditional uncorrelatedness model. For ease and clarity, we will focus on the canonical case of Gaussians.

Consider a Gaussian process $X$ defined on a compact interval $I$ of $\RR$. 
We assume that $X$ has mean zero and continuous covariance $K$ on $I \times I$, and is nowhere degenerate except possibly at the endpoints of $I$.
Let $(\mathcal{F}_t, \: t \in I)$ be a filtration adapted to $X$.
The process $X$ is said to satisfy the Markov property if
$\PP(X_t \in A \mid \mathcal{F}_s) = \PP(X_t \in A \mid X_s)$
for all $s<t$ and for every Borel set $A$.
Equivalently, $X$ satisfies the conditional independence property
$X_s \ind X_t \mid X_u $ for all  $s \leq u \leq t$.
For Gaussian processes, conditional independence is equivalent to conditional uncorrelatedness. 
Hence the Markov property can be written in terms of covariance as
$\cov(X_s, X_t \mid X_u) = 0$ for all $s \leq u \leq t$.
Moreover, the conditional covariance of a Gaussian process can be expressed as:
\begin{equation*}
    \cov(X_s, X_t \mid X_u) = K(s,t) - \frac{K(s, u) K(u,t)}{K(u,u)}.
\end{equation*}
This provides a core characterization of the Markov property entirely in terms of the covariance kernel.
\begin{proposition}[\citep{doob1962stochastic}]\label{prop:kernel-functional-equation}
    $X$ satisfies the Markov property if and only if its covariance
    kernel verifies the following equation:
    \begin{equation*}
        K(s,t) = \frac{K(s,u)K(u,t)}{K(u,u)} \quad \text{for all } s \leq u \leq t.
    \end{equation*}
\end{proposition}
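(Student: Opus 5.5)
The equivalence follows from the conditional covariance identity recalled just before the statement, plus one extra step for the direction from the kernel equation to the Markov property. In that direction, pairwise conditional uncorrelatedness given a single time point is not enough. We must show that $X_t$ given $\mathcal{F}_s$ depends on $X_s$ alone, and this involves conditioning on many past values at once. I would take $\mathcal{F}_s=\sigma(X_r: r\le s)$, the natural filtration. Since $X$ is a centered Gaussian process, all conditional laws are Gaussian, and conditional expectations are $L^2$ projections onto closed linear spans.

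\emph{Necessity.} Suppose $X$ is Markov and fix $s\le u\le t$. Then $X_s\ind X_t\mid X_u$. For the Gaussian vector $(X_s,X_t,X_u)$ this is equivalent to $\cov(X_s,X_t\mid X_u)=0$. By the displayed formula, this says $K(s,t)=K(s,u)K(u,t)/K(u,u)$. Here $K(u,u)>0$ by nondegeneracy, since $u$ is either interior or taken where $K(u,u)>0$. If $u$ is a degenerate endpoint, then $u=s$ or $u=t$, and the identity should be read with the convention that it is trivial there. I would remark on this convention briefly.

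\emph{Sufficiency.} Assume the kernel equation and fix $s<t$. First I show $\EE(X_t\mid \mathcal{F}_s)=\frac{K(s,t)}{K(s,s)}X_s$. Set $Z=X_t-\frac{K(s,t)}{K(s,s)}X_s$. For any $r\le s$, taking $u=s$ in the hypothesis (applied to $r\le s\le t$) gives
\[
\cov(Z,X_r)=K(r,t)-\frac{K(s,t)K(r,s)}{K(s,s)}=0 .
\]
So $Z$ is uncorrelated with every $X_r$, $r\le s$, and hence with the closed linear span of $\{X_r: r\le s\}$. By joint Gaussianity, $Z$ is independent of $\mathcal{F}_s$. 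Since $X_t=\frac{K(s,t)}{K(s,s)}X_s+Z$, the conditional law of $X_t$ given $\mathcal{F}_s$ is Gaussian with mean $\frac{K(s,t)}{K(s,s)}X_s$ and deterministic variance $\var Z$. This law depends only on $X_s$. Therefore $\PP(X_t\in A\mid\mathcal{F}_s)=\PP(X_t\in A\mid X_s)$ for every Borel set $A$, which is the Markov property. The equivalent form $X_s\ind X_t\mid X_u$ then follows from the same computation.

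\emph{Main obstacle.} The hardest step is justifying that independence from the uncountable family $\{X_r:r\le s\}$ yields independence from $\mathcal{F}_s$. I would argue through finite-dimensional distributions. For any $r_1<\dots<r_k\le s$, the vector $(Z,X_{r_1},\dots,X_{r_k})$ is Gaussian with $Z$ uncorrelated with the other coordinates, so $Z$ is independent of $(X_{r_1},\dots,X_{r_k})$. These cylinder events form a $\pi$-system generating $\mathcal{F}_s$, so Dynkin's $\pi$–$\lambda$ theorem gives independence of $Z$ from $\mathcal{F}_s$. The remaining care is with the degenerate endpoints: if $K(s,s)=0$ at the left endpoint, then $X_s=0$ almost surely and the conditioning on $\mathcal{F}_s$ is trivial there.
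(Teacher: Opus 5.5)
Your argument is correct and follows the same chain the paper relies on (the paper cites Doob instead of giving a proof): the Markov property gives $X_s \ind X_t \mid X_u$, which is equivalent to a vanishing conditional covariance, which is equivalent to the kernel identity; your sufficiency step, showing that $X_t - \frac{K(s,t)}{K(s,s)}X_s$ is orthogonal to, and hence independent of, the whole past $\mathcal{F}_s$, supplies the justification the paper skips when it calls the single-point conditional independences ``equivalent'' to the Markov property. Your explicit choice of the natural filtration is also the right one, since for a strictly larger adapted filtration the converse direction can fail.
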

\noindent The displayed characterization is equivalent to 
\begin{equation}\label{eq:kernel-decomp}
    K(s, t) = f(s) g(t), \quad \text{for} \; s \leq t,
\end{equation}
where $f$ and $g$ are two continuous functions defined on $I$ \citep{mehr1965}.
This representation implies a separable structure along each coordinate direction. Consequently, the horizontal (resp. vertical) level sets coincide up to a multiplicative constant (see \Cref{fig:kernel-decomp}).
\begin{figure}[h!]
    \centering
\begin{tikzpicture}
    \draw[red, line width=2pt] (0,0) -- (0,3);
    \draw[blue, line width=2pt] (0,3) -- (3,3); 
    \draw[line width=1pt] (3,3) -- (3,0);
    \draw[line width=1pt] (3,0) -- (0,0); 

    \draw[line width=1.5pt] (0,0) -- (3,3);

    \begin{scope}
        \clip (0,0) -- (3,3) -- (0,3) -- cycle;

        \draw[step=0.1cm, red!60, very thin] (0,0) grid[xstep=0.25cm, ystep=100cm] (3,3);

        \draw[step=0.1cm, blue!60, very thin] (0,0) grid[xstep=100cm, ystep=0.25cm] (3,3);
    \end{scope}

    \node[blue] at (1.5, 3.3) {$f$}; 
    \node[red] at (-0.2, 1.5) {$g$};

    \node at (1.1, 2.1) {$f(s)g(t)$};
\end{tikzpicture}

    \centering
    \caption{Covariance of a Gaussian Markov process.}
    \label{fig:kernel-decomp}
\end{figure}
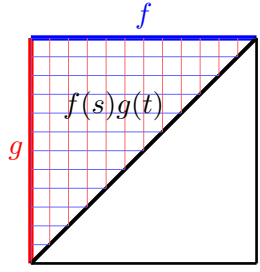

Thus, the covariance kernel on the upper triangle $s\leq t$ is fully determined by two one-dimensional functions. 
In this sense, a Markov covariance kernel is effectively a one-dimensional object embedded in a two-dimensional domain. 
Parsimony is therefore achieved through structural constraints rather than smoothness assumptions. 
This allows for inference driven by structure, rather than by regularity considerations.

\section{Covariance estimation under Markov constraints}\label{sec:covariance-estimation}

As a first step, we consider the problem of estimating the covariance based on $n$ independent realizations of a stochastic process $X$ observed without error at discrete design points. This will serve as the basis for our more general methodology. 
Our estimation strategy is guided jointly by the structure of the estimand and the discrete observation scheme, and relies on a projection principle induced by the Markov structure. 
Specifically, we first estimate the covariance on the observation grid using the Markov factorization, yielding a nodal estimator that naturally aligns with a graphical-model perspective. 
We then propagate this estimator to the continuum via linear interpolation.

\subsection{The Markov transform}
We start by defining the \emph{Markov transform} of a covariance.
Formally, let $X$ be a Gaussian process defined on the compact interval $I$. Assume $X$ is nowhere degenerate, has mean zero, and has a continuous covariance kernel $K$.
Let $(t_1, \dots, t_p) \in I^{p}$ be an increasing sequence of distinct design points. Since $X$ is a Gaussian process, the pair $(X(t_j), X(t_{j+1}))$ admits a linear regression representation: there exists $\beta_{j+1}$ such that
\begin{equation}\label{eq:gaussian-regression}
    X(t_{j+1}) = X(t_{j}) \beta_{j+1} + \epsilon_{j+1}
    \quad \text{and } \beta_{j+1} = \frac{\cov(X(t_j), X(t_{j+1}))}{\var(X(t_j))},
\end{equation}
where $\epsilon_{j+1}$ follows a normal distribution with variance 
$\sigma_{j+1}^2 = K(t_{j+1}, t_{j+1}) - K(t_{j}, t_{j+1})^2 / K(t_{j}, t_{j})$ and mean zero. If $X$ satisfies the Markov property, \Cref{prop:kernel-functional-equation} implies that

\begin{align}\label{eq:markov-transform}
    K(t_j, t_k) =  K(t_j, t_j) \cdot \prod_{l=1}^{k-j} \beta_{j+l} \eqqcolon K^M(t_j, t_k) \quad \text{for all } j \leq k.
\end{align}
We refer to the matrix defined by the right-hand side of \eqref{eq:markov-transform} as the \emph{Markov transform}. If $X$ is not Markov, the Markov transform $K^M$ can still be defined, but $K \neq K^M$.
The following proposition establishes that the Markov transform is the information projection onto the class of Markov covariances. Specifically, it minimizes the Kullback-Leibler divergence to the class of Gaussian distributions whose covariances have tridiagonal inverses.
\begin{proposition}\label{prop:markov-projection}
    Let $X = \{X(t), t \in I\}$ be a {non-degenerate} Gaussian process with mean zero, and let $K_p$ be the covariance matrix of $(X(t_1), \dots, X(t_p))$.
    The following statements are equivalent:
    \begin{itemize}[nosep]
        \item[(1)] $K_p^M$ is the Markov transform \eqref{eq:markov-transform} of $K_p$,
        \item[(2)] $K_p^M = \argmin_{\Sigma \in \mathcal{M}} KL(\mathcal{N}(0, K_p), \mathcal{N}(0, \Sigma))$, 
        where $\mathcal{M} = \{\text{PD matrices with tridiagonal inverse}\}$,
        \item[(3)] $K_p^M$ is the covariance of  $(Z(t_1), \dots, Z(t_p))$ where 
        $Z(t_{j+1}) = \beta_{j+1} Z(t_j) + \varepsilon(t_{j+1})$, 
        $Z(t_1) \sim \mathcal{N}(0, K_{p,11})$, and
        $\{\varepsilon(t_{j+1})\}_{j\in[p-1]}$ are independent centered Gaussian variables with variance $\var(X(t_{j+1})) - \beta_{j+1}^2 \var(X(t_j))$.
    \end{itemize}
\end{proposition}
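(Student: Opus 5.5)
Since each statement singles out one matrix, the cleanest route is to prove (1)$\Leftrightarrow$(3) by direct computation and (2)$\Leftrightarrow$(3) by a first-order optimality argument on $\mathcal{M}$, using that the KL problem has a unique minimizer. Throughout, write $K_p=(k_{jl})$ and $\beta_{j+1}=k_{j,j+1}/k_{jj}$.

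\textbf{(1)$\Leftrightarrow$(3).} Let $Z$ be the AR-type chain in (3). We show by induction on $j$ that $\var Z(t_j)=k_{jj}$. The base case holds since $\var Z(t_1)=K_{p,11}$. For the step, $\var Z(t_{j+1})=\beta_{j+1}^2 k_{jj}+k_{j+1,j+1}-\beta_{j+1}^2k_{jj}=k_{j+1,j+1}$. These variances are positive, because $k_{j+1,j+1}-k_{j,j+1}^2/k_{jj}>0$ by non-degeneracy of the $2\times2$ block. Independence of the innovations then gives, for $j\le k$,
\[
\cov(Z(t_j),Z(t_k))=\var Z(t_j)\prod_{l=1}^{k-j}\beta_{j+l}=k_{jj}\prod_{l=1}^{k-j}\beta_{j+l},
\]
which is exactly \eqref{eq:markov-transform}. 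Hence the covariance of $Z$ coincides with the Markov transform. It is positive definite because $Z$ is an invertible linear image of independent non-degenerate Gaussians.

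\textbf{$K^M_p\in\mathcal M$.} The chain $Z$ is a Gaussian Markov chain, so $Z(t_j)\ind Z(t_k)\mid Z_{-jk}$ whenever $|j-k|\ge2$. By the precision-matrix characterization of the conditional independence graph recalled in \Cref{subsec:background}, $(K_p^M)^{-1}$ is tridiagonal. More concretely, the joint density factorizes as $p(z_1)\prod_j p(z_{j+1}\mid z_j)$, and the quadratic form in the exponent only couples consecutive coordinates.

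\textbf{(2)$\Leftrightarrow$(3): the main step.} Up to constants, $KL(\mathcal N(0,K_p),\mathcal N(0,\Sigma))=\tfrac12[\tr(\Sigma^{-1}K_p)-\log\det(\Sigma^{-1}K_p)-p]$. As a function of $\Theta=\Sigma^{-1}$ this is strictly convex, and $\mathcal M$ corresponds to the convex set of positive definite tridiagonal $\Theta$. The objective tends to $+\infty$ at the boundary of the cone and at infinity, so a unique minimizer exists. The first-order conditions over the linear space of tridiagonal symmetric matrices read $(\Theta^{-1})_{jl}=(K_p)_{jl}$ for all $|j-l|\le1$. This is the classical maximum-entropy/covariance-selection characterization of Dempster. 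By strict convexity, these conditions characterize the minimizer uniquely.

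It therefore remains to check that $K_p^M$ satisfies them. Its diagonal is $k_{jj}$, and its first off-diagonal is $k_{jj}\beta_{j+1}=k_{j,j+1}$, so it matches $K_p$ on the tridiagonal band. Combined with $K_p^M\in\mathcal M$, this shows $K_p^M$ is the unique argmin. All three descriptions thus single out the same matrix, giving the equivalences.

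\textbf{Expected obstacle.} The main care point is the optimization step. One must justify existence and uniqueness of the minimizer (coercivity of the objective on the cone, plus strict convexity of $-\log\det$). One must also make sure the stationarity conditions are derived over the correct tangent space, namely symmetric tridiagonal perturbations $\Delta$. There the derivative $\tr(\Delta(K_p-\Theta^{-1}))$ vanishes for all such $\Delta$ exactly when the band entries agree.
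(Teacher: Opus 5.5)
Your proof is correct and follows essentially the same route as the paper. You prove (1)$\Leftrightarrow$(3) by directly computing the AR-chain covariance, and you obtain the KL characterization from strict convexity in $\Theta=\Sigma^{-1}$ together with the band-matching first-order conditions $(\Theta^{-1})_{jl}=(K_p)_{jl}$ for $|j-l|\le 1$. The one difference is direction: you show $K_p^M$ is feasible and stationary (so existence of the minimizer comes for free), whereas the paper starts from the minimizer and uses that a tridiagonal precision forces the Markov product form, pinned down by the matched band entries.
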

By (3), the Markov transform yields the Gaussian $AR(1)$ process that is closest to the original Gaussian process in information distance. In general, this $AR(1)$ process is non-stationary.\begin{remark}
    The Markov transform can further be interpreted as the covariance obtained by completing the matrix $\{K(t_j, t_k)\}_{|k-j| \leq 1}$ while enforcing a tridiagonal precision matrix. 
    This defines a global completion rule rather than a local or greedy procedure: the entire covariance matrix is determined consistently with the Markov structure. 
    In fact, it can be shown that this construction is equivalent to the canonical completion introduced in \citep{waghmare2022completion}, which extends a partially observed covariance kernel by fitting the minimal compatible graphical model.
\end{remark}

\subsection{The Markov estimator under synchronous and noise-free observations}
When the functional data are observed without measurement error on the same grid, the Markov transform suggests the following estimator on the observation grid:
\begin{align*}
    \KhatM_p(t_j, t_k) = \hat{\sigma}^2_j \cdot \prod_{l=1}^{k-j} \hat{\beta}_{j+l} \quad \text{for } j \leq k
\end{align*}
where $\hat{\beta}_{j+1}$ is obtained by regressing $X(t_{j+1})$ onto $X(t_{j})$ and $\hat{\sigma}^2_j$ is the unbiased estimator of the variance of $X(t_{j})$.
The estimator $\KhatM_p$ naturally enforces the Markov property during the estimation: the Markov factorization \Cref{prop:kernel-functional-equation} holds at every step, ensuring that the covariance estimator corresponds to a discrete-level Markov process.

We then propagate the estimator defined on the observation grid to construct a genuine functional object. 
A natural approach is to linearly interpolate the nodal Markov estimator.
By exploiting the shape restriction imposed by the Markov assumption, this approach offers both methodological and theoretical advantages: no tuning parameters are required, and the resulting surface inherently respects the shape constraint and continuity.

\begin{remark}
    Although our method does not require any smoothness assumptions, it can accommodate additional smoothing if desired. 
    One could smooth the covariance over the upper triangle \citep{mohammadi2024rough}. 
    In the case of Markov processes, the covariance factorization allows smoothing only the two univariate functions defining the kernel, significantly reducing computational complexity while preserving the Markov structure.
    However, this re-introduces bandwidth selection and implicit smoothness assumptions, and diminishes the advantage of the shape-restricted approach. 
\end{remark}

\subsection{The Markov estimator under irregular  and noisy observations}\label{subsec:markov-estimator-irregular}

Assume now that we observe each of $n$ independent realizations
$X_1, \dots, X_n$ at $r$ i.i.d. random times
$(t_{ij})_{i \in [n], j \in [r]}$, under measurement errors:
\begin{align}\label{irregular/noisy_model}
    Y_{ij} = X_{i}(t_{ij}) + u_{ij} \quad \text{for } i \in \{1, \dots, n\}, \; j \in \{1, \dots, r\}.
\end{align}
The errors $\{u_{ij}\}_{i,j}$ are i.i.d. centered random variables with finite variance $\nu^2$, such that the sequences $(X_i)_{i \in \NN}$, $(u_{ij})_{i,j}$ and $(t_{ij})_{i,j}$ are independent. The number of per-curve evaluations $r$ can grow with $n$ (dense case) or remain bounded (sparse case).

The asynchronous nature of the design does immediately lend itself to the definition of the regression coefficient estimators as in the previous paragraph. Nevertheless, we now show that the asynchronous setting can be reduced to a synchronous grid setting. Partition the interval $I$ into  equally sized and ordered sub-intervals $B_1, \dots, B_p$.
For each sample $i$, let
\begin{align*}
    \cJ_i(k) = \left\{ j \in [r]: \; t_{ij} \in B_k \right\} \quad \text{and} \quad \cI_n(k) = \left\{ i \in [n]: \; \exists (j, j') \text{ s.t. } j \in \cJ_i(k), j' \in \cJ_i(k+1) \right\}.
\end{align*}
The regular-observations regime corresponds to
the case where {$r = p$}, $\cJ_i(k) = \{t_k\}$ and $\cI_n(k) = \{1, \dots, n\}$.
Denote $m_i(k) =  | \cJ_i(k) |$ and $n_k = | \cI_n(k) |$.
We then define the regression coefficient between consecutive sub-intervals $B_k$ and $B_{k+1}$ 
\begin{align*}
    \hat{\beta}_{k+1} = \frac{1}{\hat{\sigma}_k}\sum_{i \in \cI_n(k)} \left( \frac{1}{m_i(k) m_i(k+1)} \sum_{(j, j') \in \cJ_i(k) \times \cJ_i(k+1)} Y_{ij} Y_{ij'} \right),
\end{align*}
where $\hat{\sigma}_k$ is the de-noised variance estimator for the bin $B_k$
\begin{align*}
    \hat{\sigma}^2_k = \frac{1}{n_k} \sum_{i \in \cI_n(k)} \left[ \left( \frac{1}{m_i(k)} \sum_{j \in \mathcal{J}_i(k)} Y_{ij} \right)^2 - \frac{\nu^2}{m_i(k)} \right],
\end{align*}
and the noise error $\nu^2$ can be estimated by taking the difference between the average of the squares and the average of the cross-products within the bins. By the mean-value theorem and as the covariance $K$ is continuous, there exist $t^*_k \in B_k$ and $t^*_{k+1} \in B_{k+1}$ such that $\hat{\beta}_{k+1}$ consistently estimates $R(t^*_k, t^*_{k+1})$ with convergence rate of order $1 / \sqrt{n_k}$. 

For $s, t \in I$, let $j, k$ and $\alpha_s, \alpha_t$ be the indices and weights such that $s = (1-\alpha_s)t^*_j + \alpha_s t^*_{j+1}$ and $t = (1-\alpha_t)t^*_k + \alpha_t t^*_{k+1}$.
Then, the Markov estimator is defined as the bilinear interpolant of $\KhatM_p$, namely
\begin{align}\label{eq:markov-estimator}
    \KhatM(s, t) = \sum_{a,b \in \{0,1\}} w_a(s) w_b(t) [\KhatM_p]_{j+a, k+b}
    \quad \text{where }
    [\KhatM_p]_{j+a, k+b} = \hat{\sigma}^2_{j+a}  \prod_{l=j+a}^{k+b-1} \hat{\beta}_{l+1} 
\end{align}
with $w_0(u) = (1 - \alpha_{u})$ and $w_1(u) = \alpha_{u}$ for $u \in I$.

\section{Rates of convergence}\label{sec:rates}

In light of the link between the asynchronous and synchronous designs, we will present  convergence rates in the more general observation setting given in Eq. \eqref{irregular/noisy_model}.  
This will allow for an appreciation of the impact of the various parameters at play.

\subsection{Mean squared error}
The analysis of the mean squared error of the Markov estimator $\KhatM$ \eqref{eq:markov-estimator} is driven by two key observations.
First, the empirical regression coefficients used to construct the estimator 
are uncorrelated due to the Markov property and the independence of the measurement errors.
Second, the continuity of the covariance kernel enables the control of the true regression coefficients  as the number of distinct sampling locations $\{t_{ij}\}$ grows.
Together, these two results allows for the following upper bound to be established.

\begin{theorem}\label{thm:continuous-rates} 
Let $X$ be a mean-zero Gaussian Markov process on a compact interval $I$ with covariance $K$. 
Assume that $X$ is nowhere degenerate and that $K$ is continuous with modulus of continuity $w$.
Under the irregular and noisy observation regime \eqref{irregular/noisy_model} with $p$ bins as defined in \Cref{subsec:markov-estimator-irregular}, if $p/n$ is bounded, then the Markov estimator $\KhatM$ satisfies
\begin{align*}
    \left(\EE \left[ \norm{\KhatM - K}_2^2 \right]\right)^{1/2} 
    \leq 
    M \Bigg[
        \frac{p}{n} \left(1 \vee \frac{p^2}{r^2}\right)
        \Big( w\!\left(\frac{2}{p}\right) + C \nu^2 \Big)
    \Bigg]^{1/2}
    + \frac{L}{\sqrt{n}}
    + w\!\left(\frac{2}{p}\right),
\end{align*}
where $M,C,L>0$ are constants independent of $n$, $r$, and $p$, and $\nu^2$ denotes the variance of the measurement errors.
\end{theorem}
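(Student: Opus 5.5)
The plan is to split the error by the triangle inequality in $L^2(I\times I)$ into three pieces: a stochastic part, a centering part and an interpolation part. Let $K_p^\ast$ be the matrix with entries $K(t^\ast_j,t^\ast_k)$, and let $\tilde K$ be the bilinear interpolant of $K_p^\ast$ with the same weights $w_a,w_b$ as in \eqref{eq:markov-estimator}. Then
\begin{equation*}
\norm{\KhatM-K}_2\le \norm{\KhatM-\tilde K}_2+\norm{\tilde K-K}_2 .
\end{equation*}
\emph{Interpolation term.} The bilinear weights are nonnegative and sum to one. Also $|s-t^\ast_j|\le 2/p$ whenever $s$ lies between two consecutive representative points, since these sit in adjacent bins of length $1/p$ (after rescaling $I$). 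Hence $|\tilde K(s,t)-K(s,t)|\le w(2/p)$ pointwise, which gives the last term $w(2/p)$ deterministically.

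\emph{Reduction to the nodal matrix.} The bilinear interpolant is a contraction from nodal values to the $L^2$ norm up to a constant: $\norm{\tilde A}_2^2\lesssim p^{-2}\sum_{j,k}A_{jk}^2$. So it suffices to bound $p^{-2}\sum_{j\le k}\EE(\KhatM_p-K^\ast_p)_{jk}^2$. By \Cref{prop:kernel-functional-equation}, Markovianity gives the exact factorization
\begin{equation*}
K(t^\ast_j,t^\ast_k)=K(t^\ast_j,t^\ast_j)\prod_{l=j}^{k-1}\beta_{l+1},\qquad \beta_{l+1}=\frac{K(t^\ast_l,t^\ast_{l+1})}{K(t^\ast_l,t^\ast_l)}.
\end{equation*}
The nodal error is therefore a difference of products. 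I will telescope it:
\begin{equation*}
\hat\sigma^2_j\prod\hat\beta-\sigma^2_j\prod\beta=(\hat\sigma_j^2-\sigma_j^2)\prod\hat\beta+\sigma_j^2\sum_{l}\Big(\prod_{<l}\beta\Big)(\hat\beta_{l+1}-\beta_{l+1})\Big(\prod_{>l}\hat\beta\Big).
\end{equation*}
\emph{Two key facts.} The first is that the centered coefficient errors $\hat\beta_{l+1}-\EE\hat\beta_{l+1}$ are uncorrelated across $l$. This follows from the Markov property, because the AR(1) innovations of \Cref{prop:markov-projection}(3) are independent, together with the independence of the errors $u_{ij}$. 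As a result, the variance of the sum over $l$ does not grow like $(k-j)^2$ but like the sum of the individual variances.

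The second fact controls each individual variance. Each $\hat\beta_{l+1}$ averages over the $n_l$ curves having points in both $B_l$ and $B_{l+1}$. With $r$ uniform points per curve, $\EE n_l\asymp n\,(1\wedge r/p)^2$. This is where the factor $1\vee p^2/r^2$ comes from, via a Chernoff bound that handles small $n_l$ and uses $p/n$ bounded. Inside a bin, the averaged cross-products have variance bounded by within-bin fluctuation of $K$, which is at most $w(2/p)$, plus a noise contribution $C\nu^2$. This gives $\var(\hat\beta_{l+1})\lesssim (1\vee p^2/r^2)\,(w(2/p)+C\nu^2)/n$.

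\emph{Assembling the bound.} Because the products of the $\beta$'s are bounded by $\sup K/\inf K$ (nowhere degeneracy), the sum over $l$ contributes $(k-j)$ times one variance. Averaging over $j\le k$ with the weight $p^{-2}$ leaves a factor of order $p$, which produces the first term. The bias of the ratio estimator, and the error of $\hat\sigma^2_j$, which is a $\sqrt n$-consistent global scale, produce the $L/\sqrt n$ term.

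\emph{Main obstacle.} The hard step is to make the uncorrelatedness rigorous for products of random $\hat\beta$'s rather than only for linearizations. Two issues arise: $\hat\beta$ is a ratio, since it involves $\hat\sigma_k$ in the denominator, and the binning makes $n_l$ random. My first attempt is to condition on the design $(t_{ij})$ and linearize around $\beta$ (delta method). I will bound the remaining higher-order product terms using Gaussian moment bounds and a truncation event on which $\hat\sigma_k$ stays bounded away from $0$. The complement of that event has exponentially small probability, and that contribution is absorbed into $L/\sqrt n$.
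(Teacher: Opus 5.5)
Your proposal follows the paper's proof essentially step for step. It uses the same split $\norm{\KhatM-\Klin}_2+\norm{\Klin-K}_2$ with the $w(2/p)$ interpolation term, the same $p^{-2}$ reduction to the nodal Frobenius error, and the same telescoped product error (the paper's recursion gives exactly $\sum_{l} R(j,j+l-1)\,\xi_{j+l}\prod_{r>l}\hat\beta_{j+r}$). It also uses the same uncorrelatedness lemma, the same per-coefficient variance bound $(w(2/p)+C\nu^2)/n_k$, and a binomial count for $n_k$ (exact inverse moment of \citep{chao1972negative} in the paper, Chernoff in yours), with the $\hat\sigma^2$ error feeding the $L/\sqrt{n}$ term.

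One caveat, shared with the paper: when $\nu>0$, neighbouring coefficients $\hat\beta_{l+1},\hat\beta_{l+2}$ are \emph{not} uncorrelated, because both use the noisy data of bin $B_{l+1}$. A first-order delta-method computation on a synchronous grid gives a covariance of about $-\nu^2\beta_{l+1}\beta_{l+2}/\{n K(t^*_{l+1},t^*_{l+1})\}$, comparable to the variances. Since to first order only neighbours are correlated, Cauchy--Schwarz still makes the variance of the telescoped sum linear in $k-j$, so the bound survives with a larger constant. Your explicit linearization-and-truncation plan for the random products $\prod\hat\beta$ is more careful than the paper, which simply factorizes $\EE[\xi^2\prod(\beta+\xi)^2]$ into a product of expectations.
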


The error bound reflects a bias-variance tradeoff: stochastic error arises from estimating the nodal coefficients, while the deterministic approximation is controlled by the modulus of continuity $w(\cdot)$. 
In dense designs ($r$ growing sufficiently fast relative to $n$), the variance is limited by the number of samples. 
Lipschitz regularity yields the parametric rate $n^{-1/2}$ without measurement error, whereas an $\alpha$-Hölder modulus gives $n^{-\alpha/(1+\alpha)}$. 
Measurement error ($\nu>0$) reduces the achievable rate; for Lipschitz kernels, convergence slows to $n^{-1/3}$. 
In sparse designs ($r$ bounded), limited intra-curve information shifts the dominant variance term, and consistency requires $p$ to diverge with $n$, attaining $n^{1/(3+\alpha)}$ in the noiseless case and $n^{1/(3+2\alpha)}$ with noise. This yields rates of $n^{-1/4}$ (noiseless) and $n^{-1/5}$ (noisy) for Lipschitz kernels. In intermediate regimes ($r \asymp n^\delta$), rates interpolate between sparse and dense extremes, with phase transitions at $r \asymp n^{1/2}$ (noiseless) and $r \asymp n^{1/3}$ (noisy).

\begin{table}[ht]
\centering
\renewcommand{\arraystretch}{1.2}
\label{tab:rates}
\begin{tabular}{llcc}
\toprule
Design & Noise & $\alpha$-Hölder rate & Lipschitz rate \\
\midrule

\multirow{2}{*}{Dense}
& $\nu>0$ 
& $n^{-\alpha/(1+2\alpha)}$ 
& $n^{-1/3}$ \\

& $\nu=0$ 
& $n^{-\alpha/(1+\alpha)}$ 
& $n^{-1/2}$ \\

\midrule

\multirow{2}{*}{Mixed}
& $\nu>0$ 
& $(nr^2)^{-\alpha/(3+2\alpha)}$ if $r \lesssim n^{1/(1+2\alpha)}$

& $(nr^2)^{-1/4}$ if $r \lesssim n^{1/3}$ \\

& $\nu=0$ 
& $(nr^2)^{-\alpha/(3+\alpha)}$ if $r \lesssim n^{1/(1+\alpha)}$

& $(nr^2)^{-1/3}$ if $r \lesssim n^{1/2}$ \\

\midrule

\multirow{2}{*}{Sparse ($r$ bounded)}
& $\nu>0$ 
& $n^{-\alpha/(3+2\alpha)}$ 
& $n^{-1/5}$ \\

& $\nu=0$ 
& $n^{-\alpha/(3+\alpha)}$ 
& $n^{-1/4}$ \\

\bottomrule
\end{tabular}
\caption{Optimized convergence rates for the Markov estimator.}
\end{table}

\begin{remark}
    Interestingly, in regimes where the sampling density $r$ scales at the same order as the optimal resolution (e.g., $r \asymp n^{1/3}$ in the noisy Lipschitz case), the estimator can be implemented in a completely tuning-free manner by setting $p=r$. In such cases, the sampling design itself provides the necessary regularization, and the Markov estimator automatically recovers the best possible rates without external parameter selection.
\end{remark}

\subsection{Model misspecification and bias quantification}
When the data do not satisfy the Markov property, we can quantify the deviation from the Markov model, and use this to quantify the bias incurred by working with the Markov assumption.  Let $X$ be a Gaussian process on $I$ with mean zero and covariance kernel $K$, and let $\{t_1, \dots, t_p\}$ be a grid on $I$.
Define $K_p \coloneqq (K(t_j, t_k))_{j,k \in [p]}$, and let $K^M_p$ denote the Markov transform \eqref{eq:markov-transform} of $K_p$.
Assume we observe $n$ independent replications of $X$, and denote the Markov estimator by $\KhatM_p$.
Then, by the triangle inequality,
\begin{align*}
    \norm{\KhatM_p - K_p}_{F}^{2} \leq \norm{\KhatM_p - K^M_p}_{F}^{2} + \norm{K^M_p - K_p}_{F}^{2}.
\end{align*}
The first term is upper bounded by the stochastic error obtained in \Cref{thm:continuous-rates} up to a factor of $p^2$.
The second term can be expressed as a deviation from a multiplicative correlation structure.
For $j,k \in [p]$ denote by $\rho_{j,k}$ the correlation between $X(t_j)$ and $X(t_k)$, that is $\rho_{j,k} = K_{j,k}/(K_{j,j}K_{k,k})^{1/2}$.
Then,
\begin{align}\label{eq:misspecification}
    \norm{K_p-K^M_p}_{F}^{2} = 2 \sum_{j<k} K_{jj} K_{kk}\left(\rho_{j,k} - \prod_{l=j+1}^{k}\rho_{l,l+1} \right)^2.
\end{align}
The multiplicative correlation structure corresponds to the functional equation characterizing the Markov property for Gaussian processes (\Cref{prop:kernel-functional-equation}). Normalizing by the Frobenius norm yields a practical measure of goodness-of-fit. We note that all terms on the right-hand side, as well as the Frobenius norm itself, are estimable. In the next section, we develop a more formal criterion for goodness-of-fit in the form of hypothesis testing.

\section{Testing the Markov property}\label{sec:test-markov}

The Markov property is a conditional independence constraint and is therefore testable. 
Hence, it represents a form of parsimony that is, in principle, falsifiable, unlike smoothness which is generally not a testable property \citep{devroye1999almost}.
In particular, misspecification of the Markov model can be detected. 
For instance \citep{chen2012testing, zhou2023testing} test for the Markov property in the stationary case via the frequency domain, using conditional characteristic functions. In a general, non-stationary setting such as ours, however, there is an a priori large number of conditional dependencies to be tested -- indeed growing as the sampling grid is refined.

We propose a novel characterization of the Markov property, which we refer to as the ''endpoint characterization''. 
This result enables the construction of a testing procedure that circumvents the substantial computational burden typically associated with verifying the Markov property. 
The endpoint characterization drastically reduces the required number of conditional independence tests.
We develop theoretical results for Gaussian processes, but our approach can be extended to second order stochastic processes by replacing conditional independence by 'conditional uncorrelatedness'.

\subsection{Endpoint characterization of Markov processes}\label{subsec:endpoints-characterization}

We consider a stochastic process $X$ defined on an interval $[a,b]$ with almost-surely continuous sample paths and which nowhere degenerate.
Our goal is to show that the Markov property of $X$ can be characterized solely through conditional independencies between the endpoints $a$ and $b$ given any intermediate point. 
However, unlike in the finite-dimensional setting, certain types of degeneracy can occur in infinite dimensions.
Specifically, degeneracies may occur at a global scale, such as linear dependencies across disjoint branches of the process, or at a local scale, for instance through perfect predictability of future values from arbitrarily close past observations. 
To exclude these situations, we introduce the notions of \emph{non-singularity of the correlation structure} and \emph{strong local non-determinism}.

\begin{definition}\label{def:non-singular-correlation}
    The correlation structure of $X$ is said to be non-singular if the correlation operator matrix $\bR_{\pi_j}$ defined in \Cref{subsec:background} is invertible for every $j \in \NN^{*}$, where $\pi_j$ denotes the partition of $I$ into $j$ equal-length intervals.
\end{definition}
Intuitively, non-singularity of the correlation structure asserts that there can be no exact linear dependencies across disjoint branches of the process.  
It implies the identifiability of the graph of the process \citep{waghmare2025continuously}, and that the graph fully characterizes the conditional independence structure of the process (\Cref{thm:non-singular-correlation}).
 This condition is intrinsic and does not depend on the choice of design grid: any sequence of partitions separating points on $I$ yields the same property. In finite dimensions, the condition is equivalent to having a non-singular correlation matrix.

\begin{definition}
    We say that a stochastic process $X$ in $L^2(I)$ 
    is strongly locally nondeterministic if there is a function 
    $\phi: \RR_+ \to \RR_+$ such that $\phi(0) = 0$, $\phi(r) > 0$ for $r > 0$, and
    \begin{align*}
        \var(X(t) \mid X_{s} : r \leq |s-t| \leq r_0) \geq c_0 \phi(r) 
        \quad \text{for all } r \in (0, t \wedge r_0], \text{ and all } t\in I,
    \end{align*}
    where $c_0, r_0$ are positive constants.
\end{definition}
This definition formalizes the idea that future behavior cannot be predicted arbitrarily well
from nearby past values and quantifies the degree of unpredictability \citep{berman1973local, cuzick1982joint, xiao2007strong}.

\begin{theorem}\label{thm:endpoints-cond-ind}
    Let $X$ be a real-valued Gaussian process defined on a compact interval $I$ with continuous sample paths and nowhere degenerate.
    Assume that $X$ has a non-singular correlation structure and is strongly locally nondeterministic on $[a,b]$ where $a, b \in I$ with $a < b$.
    Then, $X$ is Markov on the interval $[a, b]$ if and only if 
    \begin{equation}\label{eq:endpoints-cond-ind}
        X_a \ind X_b \mid X_c \quad \text{for all } c\in (a,b).
    \end{equation}
\end{theorem}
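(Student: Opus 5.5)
The forward implication is immediate. If $X$ is Markov on $[a,b]$, then \Cref{prop:kernel-functional-equation} applied with $s=a\le u=c\le t=b$ gives $\cov(X_a,X_b\mid X_c)=0$. For Gaussian variables this is equivalent to $X_a\ind X_b\mid X_c$, for every $c\in(a,b)$. All the work lies in the converse. I plan to prove it by passing through the graph $\Omega$ of the process on $[a,b]$ (intersection of closed graphs making $(X,\Omega)$ a graphical model, \Cref{subsec:background}), rather than manipulating the kernel directly. The functional equation $K(a,b)K(c,c)=K(a,c)K(c,b)$ for all $c$ is too weak on its own, since it only constrains the row through $a$ and $b$.

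\textbf{Step 1 (conditional independence implies separation).} By the non-singularity assumption (\Cref{def:non-singular-correlation}), the result cited as \Cref{thm:non-singular-correlation} and \citep{waghmare2025continuously} state that $\Omega$ is identifiable and fully characterizes the conditional independence structure. I will use this in the "faithful" direction: $X_A\ind X_B\mid X_W$ holds if and only if $W$ separates $A$ and $B$ in $\Omega$. I would check this first on the approximating partitions $\pi_j$, where it is the statement that the precision operator matrix $\bP_{\pi_j}$ has zero blocks exactly at non-edges. I would then pass to the limit $j\to\infty$ using closedness of $\Omega$. Applied to the hypothesis \eqref{eq:endpoints-cond-ind}, this yields that every single point $c\in(a,b)$ separates $a$ from $b$ in $\Omega$.

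\textbf{Step 2 (local connectivity from nondeterminism and continuity).} Next I show that $\Omega$ connects each $t$ to points arbitrarily close to it on both sides. Suppose some $t$ had no neighbours in $(t-r,t)$ for small $r$. Then $X_t$ would be conditionally independent of the nearby past given farther values. Combined with path continuity ($X_s\to X_t$ in $L^2$ as $s\to t$), this would force $\var(X_t\mid X_s: r\le|s-t|\le r_0)$ to be small. That contradicts the lower bound $c_0\phi(r)>0$ from strong local nondeterminism. Hence the diagonal is approached by edges, and every subinterval $[s,t]\subset[a,b]$ is path-connected in $\Omega$ through short edges.

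\textbf{Step 3 (no long edges) and conclusion.} Suppose $\Omega$ contained an edge $\{s,t\}$ with $s<u<t$ for some $u$. Using Step 2, I connect $a$ to $s$ within $[a,u)$ and $t$ to $b$ within $(u,b]$, shrinking edge lengths so that the chains avoid $u$. Adding the edge $\{s,t\}$ produces an $a$–$b$ path avoiding $u$, which contradicts Step 1. It follows that every edge of $\Omega$ is degenerate, i.e.\ lies on the diagonal: no edge jumps over any interior point. Consequently, for any $s<u<t$ in $[a,b]$, the point $u$ separates $s$ from $t$. Since $(X,\Omega)$ is a graphical model, $X_s\ind X_t\mid X_u$, which is the Markov property.

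The main obstacle is Step 1. The direction "separation implies conditional independence" is the definition of a graphical model, but I need the reverse, faithfulness-type direction for a single-point separator in an uncountable index set. The delicate point is justifying the limit from the partition-level precision operators to a pointwise statement. I expect to use continuity of $K$ together with non-singularity to show that a nonzero block of $\bP_{\pi_j}$ persists under refinement. Step 2 also needs care in making "approached by edges" precise enough to build paths that avoid a prescribed point.
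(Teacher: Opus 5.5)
There is a genuine gap: your Steps 1 and 2 cannot hold for the same graph, and Step 3 actually exposes the clash. Under \eqref{eq:endpoints-cond-ind}, Step 1 says every point $c\in(a,b)$ separates $a$ from $b$ in $\Omega$. Step 2 says $[a,b]$ is path-connected in $\Omega$ through short edges. But a path from $a$ to $b$ has finitely many vertices, so it misses all but finitely many $c$, which already contradicts Step 1. Likewise, Step 3's argument applies verbatim to any short edge produced in Step 2. Its conclusion (every edge of $\Omega$ lies on the diagonal) therefore negates the very edges used to reach it.

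That end state is also impossible on its own. If $(X,\Omega)$ is a graphical model without off-diagonal edges, the empty set separates any $s\neq t$. This forces $X_s\ind X_t$, contradicting $K(s,t)\neq 0$ near the diagonal. Taken at face value, your steps show that no process satisfying the assumptions satisfies \eqref{eq:endpoints-cond-ind}. By the forward direction, no Markov process would satisfy the assumptions either, so the converse would hold only vacuously. That is not what the theorem asserts.

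The root cause is that you ask one graph to be faithful (Step 1), a graphical model (Step 2 and your last sentence), and locally connected (Step 2). Step 1 is not the real obstacle: it is exactly \Cref{thm:non-singular-correlation} with $U=\{c\}$. But it concerns the minimal graph $\Omega_X$, and the paper explicitly warns that $(X,\Omega_X)$ need not be a graphical model. For Brownian motion, every band $\{(s,t):|s-t|\le\epsilon\}$ is a graphical model, so $\Omega_X$ has no off-diagonal edges and Step 2 fails. Your derivation breaks at ``no neighbours, hence conditional independence'': there $t$ has no neighbours at all, yet $X_t$ is not independent of the nearby past given farther values. Conversely, in a graph that does carry local edges, such as a band, single points no longer separate, and Step 1 fails instead.

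The paper avoids this by arguing contrapositively from a fixed witness and deriving connectivity only relative to that witness's separator. If $X$ is not Markov, take $s<u<t$ with $X_s\notind X_t\mid X_u$; \Cref{remark:notind-path} gives a path from $s$ to $t$ avoiding $u$. \Cref{lemma:slnd} (continuity plus strong local nondeterminism) gives $\cov(X_v,X_w\mid X_u)\neq 0$ for close $v<w$ on the same side of $u$. This yields path segments avoiding $u$ that chain $a$ to $s$ and $t$ to $b$. Only at the end is \Cref{thm:non-singular-correlation} invoked, to obtain $X_a\notind X_b\mid X_u$. Every connection used there is backed by a dependence statement conditional on $X_u$, so nothing collides with the separations a Markov process genuinely has. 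On that route you still have to ensure the concatenated path lies in $\Omega_X$, where \Cref{thm:non-singular-correlation} applies, rather than in an arbitrary graphical model; the paper passes over this point quickly.
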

The forward implication is immediate: the Markov property implies \Cref{eq:endpoints-cond-ind}. For the converse, we show that $X_s \notind X_t \mid X_u$ induces a path in $\Omega$ that connects a point before $u$ to a point after $u$. Such a path can then be used to construct a connection between $a$ and $b$ that avoids $u$. We conclude using non-singularity of the correlation structure and strong local nondeterminism.

\begin{remark}
    In practice, testing the Markov property of a continuous process observed at $p$ points requires performing $O(p^3)$ conditional independence tests. \Cref{thm:endpoints-cond-ind} shows that, under continuity and non-degeneracy assumptions, only $O(p)$ tests are needed.
\end{remark}

\subsection{Implementation via conditional correlations}\label{subsec:test-via-conditional-correlations}

Let $\cE([a,b])$ be the set of Gaussian distributions on $L^2([a,b])$ that have non-singular correlation structure and are strongly locally nondeterministic, and  $\cM_0$ the subset of distributions that satisfy the Markov property. Define $\cM_1 = \cE \setminus \cM_0$, the set of distributions
corresponding to the alternative hypothesis.
\Cref{thm:endpoints-cond-ind} implies that
\begin{align*}
    {\cM}_0 &= \left\{ P \in {\cE}([a,b]) : \; X \sim P \implies \forall c \in (a,b), \; X_a \ind X_b \mid X_c \right\}, \\
    {\cM}_1 &= \left\{ P \in {\cE}([a,b]) : \; X \sim P \implies  \exists c \in (a,b), \; X_a \notind X_b \mid X_c \right\}.
\end{align*}

Consider a process $X$ with distribution $P \in {\cE}([a,b])$
and suppose that we observe $n$ i.i.d.\ copies 
$(X^{(1)}, \dots, X^{(n)})$ of $X$ at times $\{t_1, \dots, t_p\}$.
We aim to test 
$X(t_1) \ind X(t_p) \mid X(t_j)$ for $j \in \{2, \dots, p-1\}$.
To do so, we define the empirical partial correlation
\begin{equation}\label{eq:empirical-correlation}
    \hat{\rho}_{j} = 
    \frac{
    \frac{1}{n}\sum_{k=1}^n e_{1j}^{(k)} e_{pj}^{(k)}
    }{
    \Big[\frac{1}{n}\sum_{k=1}^n (e_{1j}^{(k)})^2 
    \cdot 
    \frac{1}{n}\sum_{k=1}^n (e_{pj}^{(k)})^2 
    \Big]^{1/2}},
\end{equation}
where $e_{1j}$ (resp.\ $e_{pj}$) denotes the residual obtained by regressing $X(t_1)$ (resp.\ $X(t_p)$) on $X(t_j)$. Define the corresponding Fisher $z$-transform
\begin{align*}
    \hat{z}_j = \frac{1}{2}\log\left(\frac{1 + \hat{\rho}_j}{1 - \hat{\rho}_j}\right).
\end{align*}
Each $\hat{z}_{j}$ is approximately normally distributed with variance $1/(n-4)$.
Therefore, we reject the hypothesis $X_1 \ind X_p \mid X_j$ when:
$\sqrt{n-4} \left| \hat{z}_j \right| > \Phi^{-1}\left(1- \frac{\alpha}{2}\right)$,
where $\Phi$ is the cumulative distribution function of
the standard normal distribution. By the multivariate central limit theorem,
$\sqrt{n} (\hat{z}_2, \dots, \hat{z}_{p-1})^\top  \xrightarrow[]{} \cN_{p-2} \left( 0, \Sigma \right)$,
where $\Sigma = \lim_n n \cov(\hat{\mathbf{z}})$ and $\hat{\mathbf{z}} = (\hat{z}_2, \dots, \hat{z}_{p-1})^\top$.
We now define the aggregated test as
\begin{align}
    \widehat{T}_n = \max_{j \in \{2, \dots, p-1\}} \sqrt{n} \cdot |\hat{z}_j |.
\end{align}

The asymptotic distribution of $\widehat{T}_n$ can then be computed by 
using the continuous mapping theorem: for fixed $p$,
\begin{align*}
    \widehat{T}_n \xrightarrow[]{d} T \coloneqq \max_{j=2, \dots, p-1} |Z_j| 
    \quad \text{where } Z \sim \cN_{p-2}(0, \Sigma),
\end{align*}
and the limiting distribution is given by 
$\PP(T \leq t) = \int_{u \in [-t, t]^{p-2}} \varphi_{\Sigma}(u) du$,
where $\varphi_{\Sigma}$ is the density of the centered multivariate normal 
with covariance $\Sigma$.
The results of \citep{chernozhukov2013gaussian}
on Gaussian approximations allow us to derive an upper-bound
for the Kolmogorov distance between the proposed test
and the maximum of the Gaussian vector with corresponding
covariance.

\begin{theorem}\label{thm:test-rate}
    Let $\widehat{Y} \in \RR^{p-2}$ be a random vector with distribution $\cN(0, \widehat{\Sigma})$
    where $\widehat{\Sigma} = \cov \left( \hat{\mathbf{z}} \right)$.
    Then, if $p \lesssim \exp(n^{\alpha/7})$ for some $\alpha \in (0,1)$,
    we have
    \begin{align*}
        \sup_{t \geq 0} \left| \PP \left\{\widehat{T}_n \leq t \right\} - \PP \left\{ \max_{1<j<p} |\widehat{Y}_j| \leq t \right\} \right|
        \lesssim  \frac{1}{n^{1/8}} \left\{ \log(pn^{3/2}) \right\}^{7/8}.
    \end{align*}
\end{theorem}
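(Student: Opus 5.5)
The plan is to reduce $\widehat{T}_n$ to a maximum of normalized sums of i.i.d. vectors, with remainders controlled explicitly, and then invoke the Gaussian approximation for maxima of \citep{chernozhukov2013gaussian}.

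\textbf{Step 1: linearization.} Each $\hat z_j$ is a smooth function $g$ of the sample second moments of the triple $(X(t_1),X(t_j),X(t_p))$. Indeed, $\hat\rho_j$ is a rational function of $\hat K_{11},\hat K_{1j},\hat K_{jj},\hat K_{1p},\hat K_{jp},\hat K_{pp}$, and the Fisher transform is smooth on $(-1,1)$. Nondegeneracy and the non-singular correlation structure keep the conditional variances bounded away from zero, so $|\rho_j|$ is bounded away from $1$ uniformly in $j$. A first-order Taylor (delta-method) expansion gives
\begin{equation*}
\sqrt{n}\,(\hat z_j - z_j) = \frac{1}{\sqrt{n}}\sum_{k=1}^n \xi_{kj} + R_{nj},
\end{equation*}
where $\xi_{kj} = \nabla g_j^\top(\text{centered products of } X^{(k)} \text{ coordinates})$ are i.i.d. in $k$ and centered. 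Under $H_0$, $z_j = 0$, so $\widehat{T}_n = \max_j |S_j + R_{nj}|$ with $S_j = n^{-1/2}\sum_k \xi_{kj}$.

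\textbf{Step 2: Gaussian approximation for $\max_j |S_j|$.} The summands $\xi_{kj}$ are quadratic forms of Gaussians, hence sub-exponential with uniformly bounded $\psi_1$-norms, and their variances are bounded below. Write $|S_j| = \max(S_j,-S_j)$ to obtain a max over $2(p-2)$ coordinates. Corollary 2.1 of \citep{chernozhukov2013gaussian}, in its sub-exponential version with $B_n$ constant, then gives a Kolmogorov distance of order $(\log(pn))^{7/8} n^{-1/8}$ between $\max_j|S_j|$ and $\max_j|Y_j|$. Here $Y\sim\cN(0,\cov(\xi_1))$. The growth condition $p\lesssim \exp(n^{\alpha/7})$ is exactly what is needed for $\log^7(pn)/n\to0$.

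\textbf{Step 3: remainders and covariance matching.} The Taylor remainder is quadratic in the moment errors, so $\max_j|R_{nj}|\lesssim \sqrt{n}\,\max_{\ell,m}|\hat K_{\ell m}-K_{\ell m}|^2$. A Bernstein bound plus a union bound over $O(p)$ entries show $\max_j |R_{nj}| \le \zeta_1 := C\log(pn)/\sqrt n$ with probability at least $1-\zeta_2$, $\zeta_2 = n^{-1}$.

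This perturbation is transferred with the anti-concentration inequality of Nazarov/Chernozhukov et al.:
\begin{equation*}
\PP\bigl(|\max|Y|-t|\le\zeta_1\bigr)\lesssim \zeta_1\sqrt{\log p}.
\end{equation*}
The resulting term $\log^{3/2}(pn)/\sqrt n$ is dominated by the main term. Note that $\log(pn^{3/2})$ absorbs the $\zeta_2=n^{-1}$-type terms.

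It remains to replace $\cov(\xi_1)$ by $\widehat\Sigma = \cov(\hat{\mathbf z})$ (scaled by $n$). They differ by $O(1/n)$ entrywise from the second-order delta-method terms. The Gaussian comparison inequality gives an error of order $\Delta^{1/3}\log^{2/3}p$ with $\Delta\lesssim n^{-1}$, which is again negligible. Combining the three errors yields the stated bound.

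\textbf{Main obstacle.} The main difficulty is the remainder control in Step 3: it needs uniform-in-$j$ bounds on the Hessian of $g_j$. These require a uniform lower bound on the residual variances $\var(X(t_1)\mid X(t_j))$ and $\var(X(t_p)\mid X(t_j))$, together with $|\rho_j|<1-\delta$, and this is where nondegeneracy at interior points enters. I would first establish these lower bounds from continuity of $K$ and nowhere-degeneracy on the compact interval, then run the union-bound argument.
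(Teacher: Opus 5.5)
Your architecture is sound, and it is more careful than the paper's. The paper only swaps the empirical normalization $\hat\sigma_{j1}\hat\sigma_{pj}$ for the population one and bounds that swap by a density argument. It then applies the Chernozhukov--Chetverikov--Kato bound directly to $\widetilde\rho_j$, treating the residual products as i.i.d.\ summands (i.e.\ ignoring the estimated regression coefficients) and handling the Fisher transform by a change of variables. It also does not address the mismatch with $n\cov(\hat{\mathbf z})$. Your delta method, anti-concentration step and Gaussian comparison deal with all of these.

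The gap is in the step you call the main obstacle: the lemma you plan to prove there is false. The residual variances are \emph{not} uniformly bounded below in $j$. By continuity of $K$,
\[
\var\{X(t_1)\mid X(t_2)\} = K(t_1,t_1)-\frac{K(t_1,t_2)^2}{K(t_2,t_2)} \longrightarrow 0 \qquad (t_2\to t_1),
\]
and with $t_1=a$, $t_p=b$ fixed and $p\to\infty$ on a regular grid, $t_2\to t_1$ and $t_{p-1}\to t_p$. Nowhere-degeneracy and non-singularity of the correlation structure only exclude exact degeneracy. Strong local nondeterminism gives at best $c_0\phi(t_2-t_1)$, which must vanish by the display above. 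For Brownian motion on $[a,b]$ with $a>0$, $\var\{X(t_1)\mid X(t_2)\}\asymp 1/p$. For $p$ up to $\exp(n^{\alpha/7})$ this is far below the $n^{-1/2}$ fluctuation of the raw moments $\hat K_{\ell m}$. So for $j$ near $2$ or $p-1$, the raw-moment Taylor expansion of $g_j$ is not valid on the relevant neighbourhood, and $\max_j|R_{nj}|\lesssim\sqrt n\max_{\ell,m}|\hat K_{\ell m}-K_{\ell m}|^2$ cannot hold with a constant uniform in $j$. (Your bound on $|\rho_j|$ away from $1$ is moot, since $\rho_j=0$ under $H_0$.)

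The missing idea is the exact pivotality of the sample partial correlation under a Gaussian null, which removes the need for any such lower bound. Let $\varepsilon_{1j},\varepsilon_{pj}$ be the \emph{population} residuals of $X(t_1),X(t_p)$ on $X(t_j)$, with standard deviations $\sigma_{j1},\sigma_{pj}$. Set $U=\varepsilon_{1j}/\sigma_{j1}$, $V=\varepsilon_{pj}/\sigma_{pj}$ and $W=X(t_j)/K(t_j,t_j)^{1/2}$; under $H_0$, $(U,V,W)\sim\cN(0,I_3)$. With bars denoting averages over the $n$ replicates, the empirical residual of $X(t_1)$ on $X(t_j)$ equals $\sigma_{j1}(U-W\,\overline{UW}/\overline{W^2})$. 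Hence all nuisance scales cancel and
\[
\hat\rho_j=\frac{\overline{UV}-\overline{UW}\;\overline{VW}/\overline{W^2}}{\bigl\{\bigl(\overline{U^2}-\overline{UW}^2/\overline{W^2}\bigr)\bigl(\overline{V^2}-\overline{VW}^2/\overline{W^2}\bigr)\bigr\}^{1/2}},
\]
which is a fixed function $h$ depending on neither $j$ nor $K$. Expand $z\circ h$ around the population point (unit second moments, zero cross-moments). The linear term is $\xi_{kj}=U^{(k)}V^{(k)}$, which has unit variance and bounded $\psi_1$-norm. On a fixed neighbourhood the Hessian is bounded by an absolute constant. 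Your Bernstein and union-bound control of $R_{nj}$, the anti-concentration transfer and the covariance comparison then go through unchanged. This standardization is what the paper's division by $\sigma_{j1}\sigma_{pj}$ does implicitly.
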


The proof consists in defining a modification of the test that is at small distance from the original statistic and applying the results of \citep{chernozhukov2013gaussian} to this modification.
This allows for a broad class of high-dimensional settings, as $p$ can grow exponentially in $n^{\alpha}$ for $\alpha < 1/7$.

\begin{remark}
    Other aggregations than the maximum can be used to construct a final test statistic.
    For instance, the $L^2$ norm is well suited to detect many small deviations from the null, but has little power when only a few conditional independencies fail.
    An advantage of the maximum is that it controls the bias of $\widehat{T}_n$ by the largest bias among the $\hat{z}_j$, $j \in [p]$, whereas sum-based norms typically yield larger bias \citep{shah2018hardness}.
\end{remark}

\begin{remark} 
The test extends to the irregular observation setting. Empirical partial correlations can be computed from regression residuals obtained by regressing one process value onto another. As described in \Cref{subsec:markov-estimator-irregular}, regression coefficients can be estimated in the irregular regime, and the same approach applies to residuals. In this case, the procedure amounts to testing conditional independence across increasingly fine blocks. 
\end{remark}

\section{Numerical illustrations}\label{sec:numerics}

We now investigate the empirical behavior of the Markov estimator and the finite-sample performance of the proposed test. To that end, we consider the following stochastic processes, each sampled at $p = 20$ locations in $[0,1]$:
\begin{enumerate}[nosep]
    \item \textbf{Brownian motion}: mean-zero Gaussian Markov process with covariance kernel $K(s,t) = \min(s,t)$;
    \item \textbf{Stationary Ornstein--Uhlenbeck process}: Gaussian Markov process with covariance kernel $K(s,t) = (\sigma^2 / (2\theta)) \exp(-\theta |s-t|)$; in our experiments, we set $\theta = \sigma = 1$;
    \item \textbf{Kernel-embedded Brownian motion (KEBM)}: mean-zero Gaussian process obtained by embedding the Brownian motion covariance with a smoothing kernel of bandwidth $h$; in our experiments, we use a Wendland kernel with $h \in \{0.05, 0.1\}$. See \Cref{apx:numerics} for details.
\end{enumerate}
The Brownian motion and Ornstein-Uhlenbeck processes are used to evaluate performance within the Markov class, while the kernel-embedded Brownian motions are used to study estimator behavior and test power under deviations from the Markov assumption.
Additional simulations for varying values of $p$ are reported in \Cref{apx:numerics} for both estimation and testing.

\subsection{Covariance estimation}

We illustrate how the Markov approach influences the structure of the covariance estimator and compare its shape to both the empirical covariance and standard smoothing-based estimators in \Cref{fig:estimators_shape}. Specifically, we compare the following methods:
\begin{enumerate}[nosep]
    \item \textbf{The Markov estimator} \eqref{eq:markov-estimator};
    \item \textbf{The empirical covariance};
    \item \textbf{The smoothed covariance estimator}, obtained by smoothing the entire covariance kernel \citep{yao2005functional};
    \item \textbf{The triangular estimator}, obtained by smoothing over the upper triangle of the kernel and imposing symmetry \citep{mohammadi2024rough}.
\end{enumerate}
\begin{figure}[h!]
    \centering
    \includegraphics[width=0.8\textwidth]{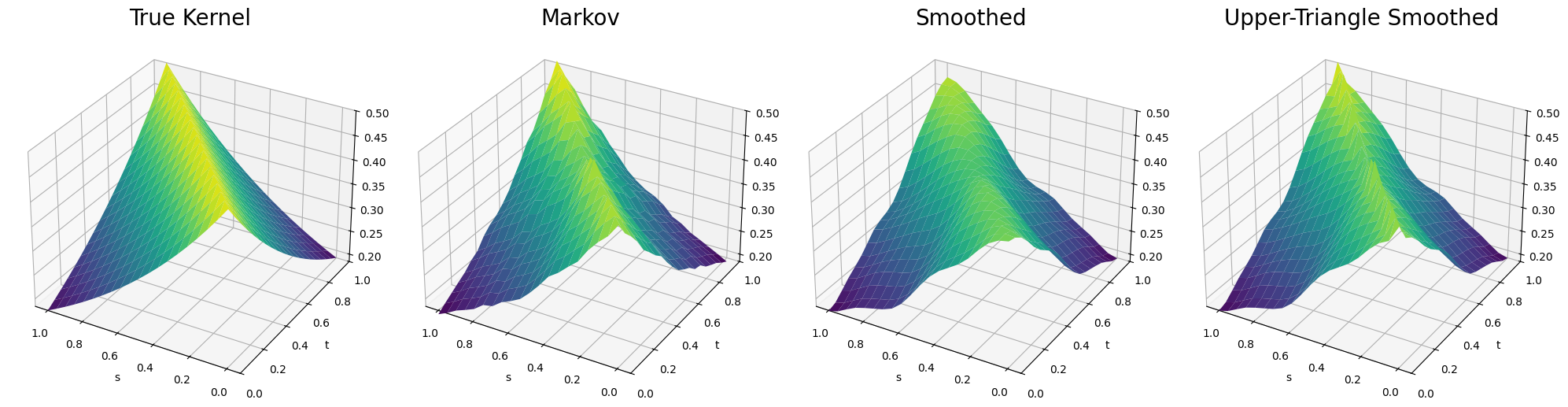}
    \caption{Comparison of empirical kernel shapes for the Ornstein-Uhlenbeck process ($p=20$, $n=200$).}
    \label{fig:estimators_shape}
\end{figure}

We first assess the performance of the estimator across both Markov and near-Markov processes using mean-squared error. 
As shown in \Cref{fig:L2-cv}, the Markov estimator performs comparably to the empirical covariance and outperforms the smoothed and triangular estimators when the underlying process belongs to the Markov class. 
For processes outside the Markov class, the estimator reaches a plateau in convergence, due to the bias incurred by the Markov assumption, with the residual distance to the true kernel determined by \eqref{eq:misspecification}.

\begin{figure}[h!]
    \centering
    \begin{minipage}{0.45\textwidth}
        \centering
        \includegraphics[width=\textwidth]{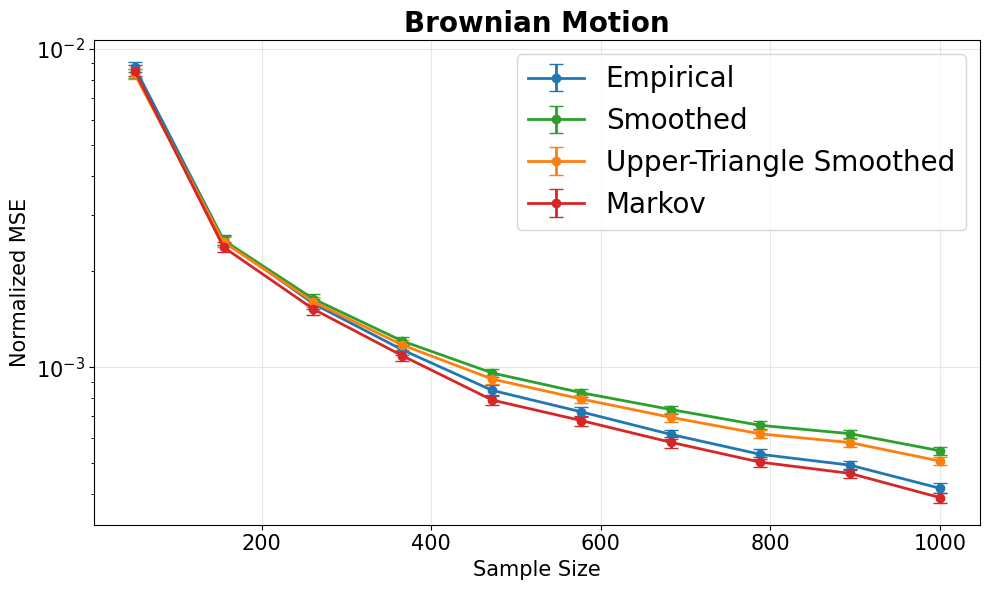}
    \end{minipage}
    \begin{minipage}{0.45\textwidth}
        \centering
    \includegraphics[width=\textwidth]{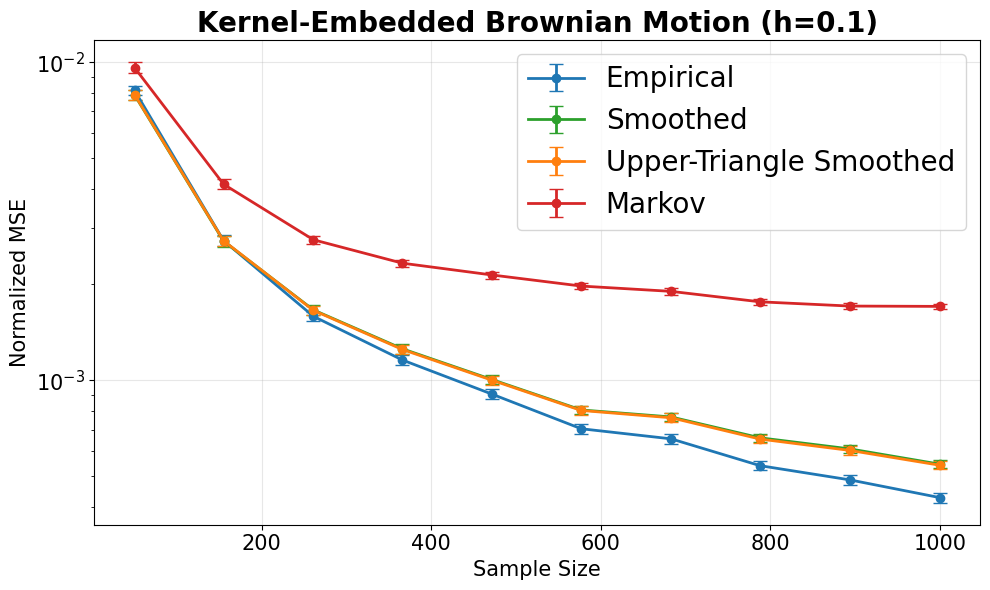}
    \end{minipage}

    \caption{Comparison of convergence in $L_2$-norm as $n$ grows for the Brownian motion (left) and the Kernel-Embedded Brownian motion with $h=0.1$ (right) for $p = 20$.}
    \label{fig:L2-cv}
\end{figure}

\subsection{Kriging and prediction}

A key advantage of the Markov approach is that it enforces sparsity at the level of the conditional-independence structure, or equivalently, the precision matrix in finite dimension. 
Accordingly, we expect the principal advantages of the Markov estimator to manifest in tasks where the inverse covariance plays a key role, such as kriging. Assume we observe a stochastic process $X$ on a compact interval $I$ at points $\{t_1, \dots, t_p\}$, and we wish to predict the value $X(t_0)$ at an unobserved location $t_0 \in I$. We focus on linear predictors of the form
$\widehat{X}(t_0) = \sum_{j=1}^{p} w_j X(t_j)$, with $w_j \in \RR$.
The classical kriging problem seeks weights that minimize the mean squared prediction error among all unbiased predictors:
\begin{align*}
    (w_1^*, \dots, w_p^*)^\top 
    &= \arg\min_{w \in \RR^p} 
    \mathrm{Var}\Big[X(t_0) - \sum_{j=1}^p w_j X(t_j)\Big] 
    \quad \text{s.t. } \mathbb{E}[X(t_0)] = \sum_{j=1}^p w_j \mathbb{E}[X(t_j)].
\end{align*}
For processes with stationary mean, this constraint reduces to $\sum_{j=1}^p w_j = 1$, and the optimal weights can be computed by solving the linear system
\begin{align*}
    \begin{bmatrix}
        K(\mathbf{t},\mathbf{t}) & \mathbf{1} \\
        \mathbf{1}^\top & 0
    \end{bmatrix}
    \begin{bmatrix}
        \mathbf{w}^* \\ \mu
    \end{bmatrix}
    =
    \begin{bmatrix}
        K(t_0,\mathbf{t}) \\ \mathbf{1}
    \end{bmatrix},
\end{align*}
where $K(\mathbf{t},\mathbf{t})$ is the covariance matrix of the observed points, 
$K(t_0,\mathbf{t}) = (K(t_0, t_j))_j$, 
and $\mu$ is a Lagrange multiplier enforcing the unbiasedness constraint.
This formulation makes explicit that the predictor depends on the inverse of the covariance matrix, which is central to assessing the performance of the Markov estimator. 
Before inversion, we regularize the covariance $K(\mathbf{t}, \mathbf{t})$ by adding a small ridge term $\varepsilon I_p$.
In our simulations, we study the distribution and mean squared error of the prediction error
$\epsilon(t_0) = \sum_{j=1}^{p} w_j X(t_j) - X(t_0)$,
highlighting how the structural constraints imposed by the Markov property impact kriging performance.
\begin{figure}[h!]
    \centering
    \includegraphics[width=0.9\textwidth]{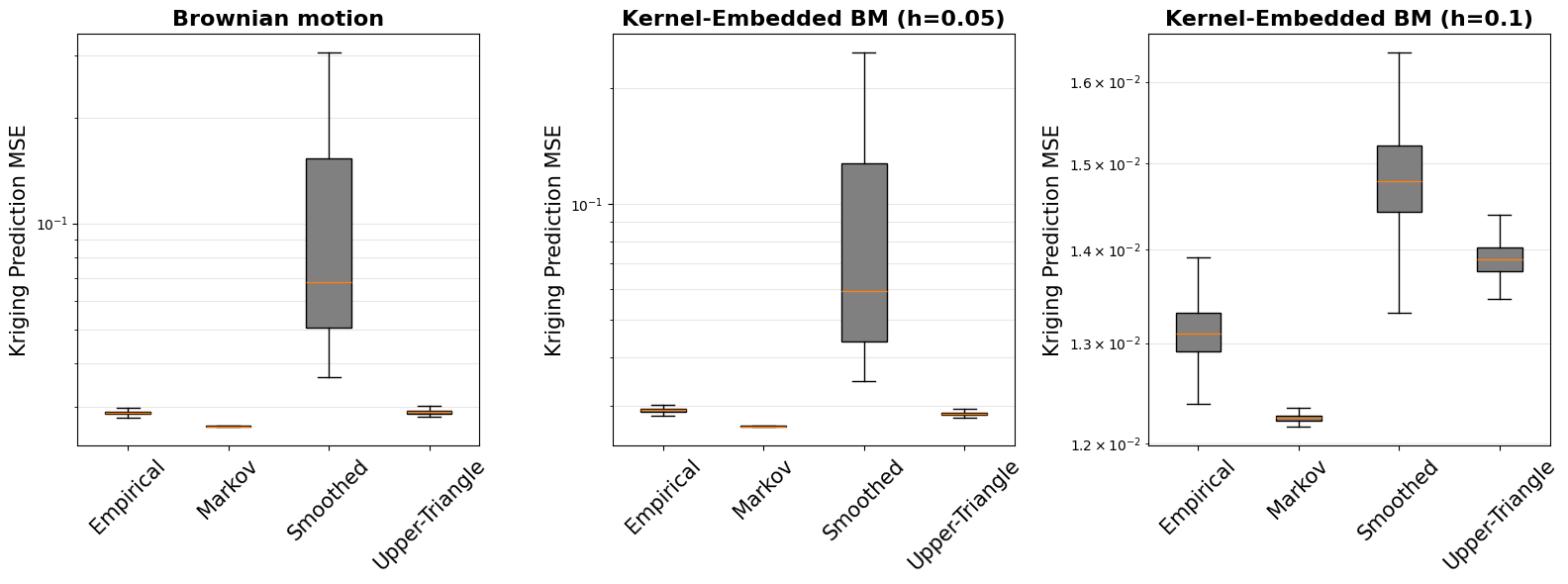}
    \caption{Kriging errors distributions for Brownian motion and kernel-embedded Brownian motion ($p=20$, $n=200$).}
    \label{fig:kriging-boxplot}
\end{figure}

We observe that, in terms of kriging, the Markov estimator outperforms both the empirical covariance estimator and smoothing-based methods, by an order of magnitude. 
Importantly, this remains true even when the underlying process is not strictly Markov: the Markov transform acts as an stabilizer for tasks that involve implicitly the inverse covariance matrix, leading to smaller prediction errors.

\begin{remark}
    To avoid committing an inverse crime \citep{kaipio2005statistical}, we use a finer grid to generate the data than to evaluate the estimator.
\end{remark}

\subsection{Testing power}

Next we turn to the empirical performance of the testing procedure, in terms of power. We study this by way of the embedded Brownian motion, as a function of the bandwidth parameter $h$. Embedding the Brownian motion covariance with a smoothing kernel introduces local dependencies that violate the Markov property. 
When the kernel bandwidth is smaller than the discretization step, the embedding has essentially no effect and the observed process remains virtually Markov. 
As the bandwidth $h$ increases, the deviations from the Markov structure become more pronounced. 
Hence, the test exhibits low power for very small values of $h$, while the rejection probability increases as $h$ grows. 
We use the maximum aggregation scheme, as it is particularly effective for detecting local deviations of conditional independence, which correspond to the considered alternatives.

\begin{figure}[h!]
\centering
\includegraphics[height=0.25\textheight]{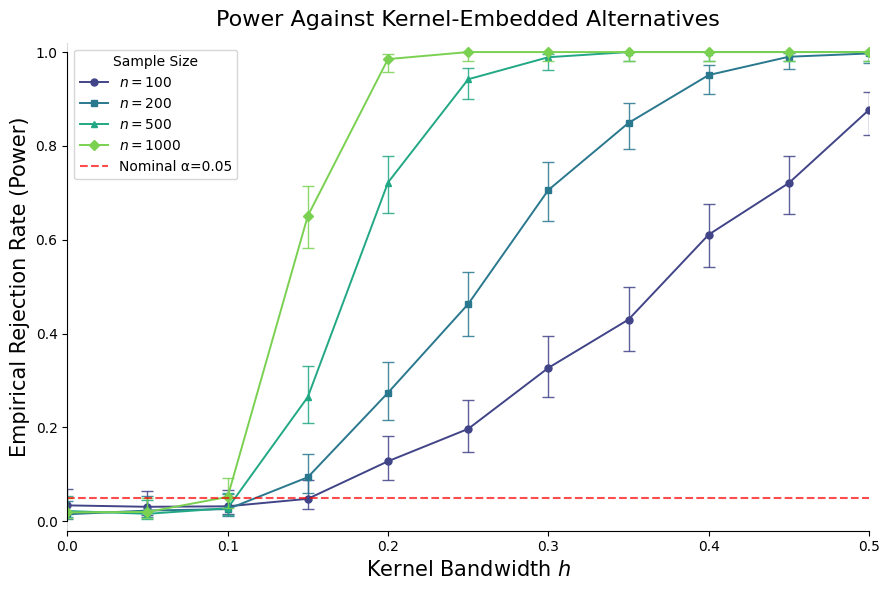}
\includegraphics[height=0.25\textheight]{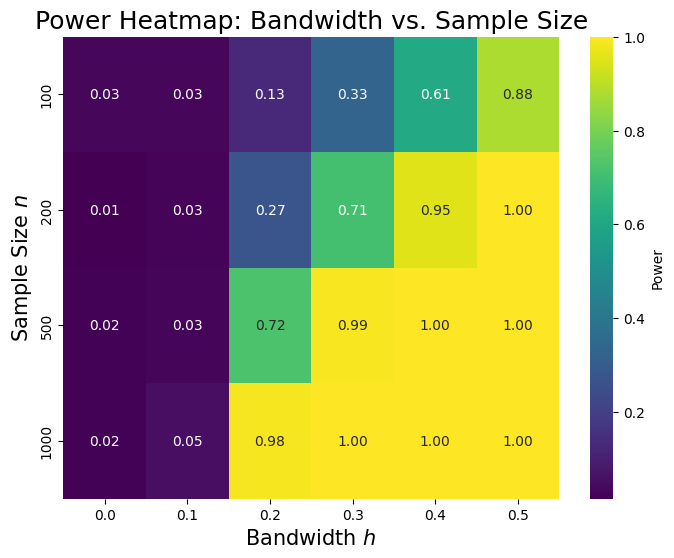}
\caption{Power of the proposed test against local alternatives generated by kernel-embedded Brownian motions with bandwidth $h \in [0,0.5]$, for $p=20$.}
\end{figure}

\section{Concluding remarks}\label{sec:discussion}

A key message of this paper is that smoothness is neither the only, nor a necessary, route to functional-data inference: structural parsimony can provides an alternative and, in some settings,  preferable foundation for statistical analysis. Concretely, we studied sparsity at the level of the conditional-independence structure of the observed functions and showed that the Markov property induces an interpretable shape constraint on the covariance kernel. Our results lead to a number of further questions, which we now discuss.

A natural extension of the Markov estimator is a shrinkage variant -- where  one would interpolate between the empirical covariance and its Markov projection (for example along a geodesic in the cone of positive-definite kernels). This would allow for a tunable bias and variance tradeoff, at the expense of introducing a tuning parameter. Such shrinkage can act as a parsimony-inducing regularization analogous in spirit to the LASSO and operates at the level of the inverse covariance. The expectation would be for such a strategy to retain the improved prediction performance of the Markov estimator, while allowing for consistency even under non-Markovian processes.

A different direction would be to allow for more general structural parsimony through conditional independence. For instance, one could replace the Markov property by a Markov property with memory $\delta>0$, corresponding to $\delta$-banded graphical constraints at the level of the continuous process. If the domain is partitioned into blocks at a scale larger than $\delta$, then the resulting covariance operator matrix exhibits a block Markov structure, and therefore admits an analogous operator-level characterization to the Markov transform. Exploring inference and projection operators for banded, block-structured, or other graph-constrained covariance classes seem natural in this context.

Finally, Markovianity can be seen as introducing \emph{a shape constraint} on the covariance kernel of a Gaussian process. This observation suggests a broader nonparametric perspective in which other constraints could be exploited. Covariance kernels often exhibit specific structural features, such as monotonicity or convexity relative to the diagonal, that can be incorporated directly through shape-constrained methods. Developing such approaches to covariance estimation could relax reliance on smoothness assumptions while remaining compatible with a wide class of stochastic processes typically encountered in functional data analysis.

\bibliography{references.bib}

\appendix

\input{appendix}

\end{document}

%% file: macros.tex
\usepackage{fullpage}

\usepackage{amsthm}
\usepackage{amsmath}
\usepackage{amsfonts}
\usepackage{amssymb}
\usepackage{euscript}
\usepackage{enumitem}
\usepackage{booktabs}
\usepackage{threeparttable}
\usepackage{multirow}
\usepackage{makecell}
\usepackage{authblk}

\usepackage[colorlinks, hyperfootnotes=false]{hyperref}
\hypersetup{
	linkcolor=blue,
	citecolor=blue,
}
\usepackage[english]{cleveref}
\usepackage{algorithm, caption}
\usepackage{algpseudocode}
\usepackage{mathtools}
\usepackage{multirow}
\usepackage{titling}
\usepackage{listings}
\usepackage{outlines}
\usepackage{graphicx}
\usepackage{subfig}
\usepackage{dsfont}
\usepackage[colorinlistoftodos,bordercolor=orange,backgroundcolor=orange!20,linecolor=orange,textsize=scriptsize]{todonotes}
\usepackage{setspace}
\usepackage{natbib}
\usepackage[justification=centering]{caption}
\usetikzlibrary{calc}
\usetikzlibrary{spy}

\theoremstyle{plain}
\newtheorem{theorem}{Theorem}[section]
\newtheorem{lemma}[theorem]{Lemma}
\newtheorem{proposition}[theorem]{Proposition}
\newtheorem{corollary}[theorem]{Corollary}
\newtheorem{remark}[theorem]{Remark}

\newtheorem{definition}[theorem]{Definition}

\newcommand{\NN}{\mathbb{N}}

\newcommand{\RR}{\mathbb{R}}

\newcommand{\PP}{\mathbb{P}}
\newcommand{\EE}{\mathbb{E}}

\newcommand{\cH}{\mathcal{H}}

\newcommand{\cF}{\mathcal{F}}
\newcommand{\cE}{\mathcal{E}}

\newcommand{\cM}{\mathcal{M}}

\newcommand{\cN}{\mathcal{N}}

\newcommand{\cI}{\mathcal{I}}
\newcommand{\cJ}{\mathcal{J}}

\newcommand{\bp}{\mathbf{p}}

\newcommand{\bR}{\mathbf{R}}
\newcommand{\bK}{\mathbf{K}}
\newcommand{\bP}{\mathbf{P}}

\newcommand{\bX}{\mathbf{X}}

\newcommand{\norm}[1]{\left\lVert#1\right\rVert}
\newcommand{\innerp}[2]{\langle #1, #2 \rangle}
\newcommand{\ind}{\mathrel{\perp\mspace{-9mu}\perp}}
\newcommand{\notind}{\mathrel{\,\not\!\ind}}

\DeclareMathOperator*{\argmin}{arg\,min}

\newcommand{\cov}{\operatorname{cov}}
\newcommand{\var}{\operatorname{var}}
\newcommand{\tr}{\operatorname{tr}}
\newcommand{\Tri}{\operatorname{Tri}}

\newcommand{\RhatM}{\widehat{R}^{\operatorname{M}}}
\newcommand{\KhatM}{\widehat{K}^{\operatorname{M}}}

\newcommand{\Klin}{K^{\operatorname{lin}}}
\newcommand{\Rlin}{R^{\operatorname{lin}}}

\newcommand{\OmX}{\Omega_X}

%% file: appendix.tex
\section*{Appendix}\label{apx}

\section{Graphical models}\label{apx:graphical-models}

Conditional dependencies and independencies characterize how a collection of random variables relate to one another, both directly and indirectly. 
Their structure can be encoded by a graph, which provides a concise representation of the interactions among the variables. 
Moreover, the conditional independence relation and its associated graph share key structural properties, making graphs particularly relevant for studying conditional dependence structures. 
We now formally define undirected graphs, which form the underlying structure of the graphical models.
An undirected graph is a pair $\mathcal{G} = (V, E)$ where $V$ is a set of vertices and 
$E \subset V \times V$ is a set of edges between the vertices with $(v, v) \notin E$ for all $v$.
We first consider graphs with a finite number of nodes, 
i.e. we have a finite number of variables, 
where $V = \{1, \dots, p\}$. Let $Z \in \RR^p$ and $S \subseteq [p]$.
We relate conditional independence and graphs by defining the conditional independence graph of a distribution.
We denote the subvector of $Z$ whose components are indexed by $S$ by $Z_S \in \RR^{|S|}$. 
Moreover, if $j, k \in [p]$, we write $Z_{-j} \coloneqq Z_{[p]\setminus\{j\}}$ and 
$Z_{-jk} \coloneqq Z_{[p]\setminus\{j, k\}}$.
\begin{definition}\label{def:cond-ind-graph}
    The conditional independence graph of a distribution $P$ on $\RR^p$ is the undirected graph $\mathcal{G} = (V, E)$ where given $Z \sim P$, 
    \begin{equation*}
        \{ j,k \} \in E \iff Z_j \notind Z_k \mid Z_{-jk}.
    \end{equation*}
\end{definition}
It results from this definition that the conditional independence graph of a 
finite sequence of random variables characterizes exactly its
conditional independence structure.
Undirected graphical models can also be of interest when
it comes to modeling functional data. Two perspectives 
exist to leverage conditional independence graphs
in the context of functional observations. The first consists in
considering each observation as a random element of a Hilbert space
and representing this element by a node in the graph. One can then
study how several functions are related in terms of conditional dependency.
This approach was studied in both dense and sparse observation settings
\citep{qiao2019functional, qiao2020doubly, waghmare2025functional}. 
The second approach consists in looking at each observed function as a random
process, i.e. a collection of random variables indexed by a continuous
set and was introduced in \citep{waghmare2025continuously}.
Note that this second perspective using continuously indexed graphs 
incorporates the first perspective as
a vector of functions can be represented as a single process by concatenating
each coordinate together. 
\begin{definition}
    Let $\Omega$ be an undirected graph on $I$, and $s,t \in I$ with $s \neq t$.
    We say that there exists a path between $s$ and $t$ in $\Omega$
    if there is a sequence of distinct points of $I$
    $s_1, s_2, \dots, s_n$ with $n \geq 2$ such that 
    $s_1 = s$, $s_n =t$ and
    $(s_i, s_{i+1}) \in \Omega$ for all $i \in \{1, \dots, n-1\}$.
    We further say that the set $S$ separates $A$ and $B$ in $\Omega$
    if every path from $A$ to $B$ in $\Omega$ intersects $S$.
\end{definition}
If $s$ and $t$ satisfy the above conditions,
we denote by $\bp(s,t)$ the sequence $(s_1, \dots, s_n)$.
Let $X = \{X_t \: : \: t\in I\}$ be a collection of random variables 
where $I$ is a continuous index set and denote $X_W = \{X_t \: : \: t\in W\}$ the
restriction of the process to the index set $W \subset I$. 
We can now define graphical models on continuous sets and
the graph of a stochastic process.
\begin{definition}
    We say that $(X, \Omega)$ is a graphical model if 
    $X_s \ind X_t \mid X_W$ for all $s, t$ separated by $W$ in $\Omega$.
    We call the graph of the process $X$ the intersection
    of all closed graphs $\Omega$ for which $(X, \Omega)$
    is a graphical model.
\end{definition}

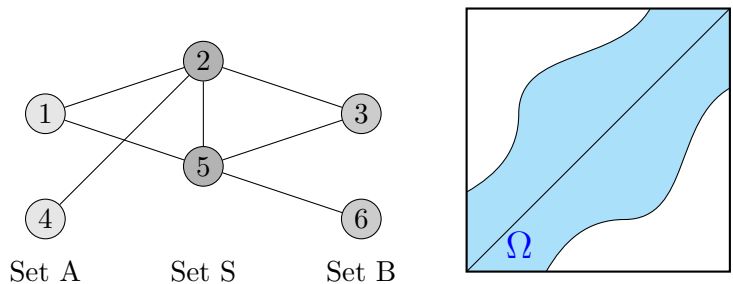
\begin{figure}[h!]
    \centering
    \begin{tikzpicture}[scale=0.7,
        every node/.style={circle, draw, inner sep=1pt, minimum size=15pt}
    ]
    \begin{scope}[shift={(-4,1)}]
        \node[circle, draw, fill=gray!20] (A1) at (0, 2) {1};
        \node[circle, draw, fill=gray!20] (A2) at (0, 0) {4};
        \node[circle, draw, fill=gray!60] (S1) at (3, 3) {2};
        \node[circle, draw, fill=gray!60] (S2) at (3, 1) {5};
        \node[circle, draw, fill=gray!40] (B1) at (6, 2) {3};
        \node[circle, draw, fill=gray!40] (B2) at (6, 0) {6};
        \draw (A1) -- (S1);
        \draw (A1) -- (S2);
        \draw (A2) -- (S1);
        \draw (S1) -- (B1);
        \draw (S2) -- (B1);
        \draw (S2) -- (B2);
        \draw (S1) -- (S2);
        \node[draw=none] at (0, -1) {Set A};
        \node[draw=none] at (3, -1) {Set S};
        \node[draw=none] at (6, -1) {Set B};
    \end{scope}

    \begin{scope}[shift={(4,0)}]
        \draw[thick] (0, 1.5) to[out=30, in=270] (1, 3) to[out=90, in=240] (3.5, 5);
        \draw[thick] (1.5, 0) to[out=60, in=180] (3, 1) to[out=0, in=-150] (5, 3.5);
        \fill[cyan!30] (0, 1.5) to[out=30, in=270] (1, 3) to[out=90, in=240] (3.5, 5) -- (5, 5) -- (5, 3.5) to[out=-150, in=0] (3, 1) to[out=180, in=60] (1.5, 0) -- (0, 0) -- cycle;
        \draw[thick] (0, 0) rectangle (5, 5);
        \draw (0, 0) -- (5, 5);
        \node[blue, font=\Large, draw=none] at (1, 0.5) {$\Omega$};
    \end{scope}
    \end{tikzpicture}
    \caption{Example of a discrete (left) and a continuous (right) graphical model.
    In the discrete graph, the set $S$ separates sets $A$ and $B$.}
    \label{fig:both}
\end{figure}

Note that if $\Omega$ is a
graphical model for $X$, then every graph $\Omega'$ 
containing $\Omega$ is also a graphical model for $X$.
Therefore, the graph of a process corresponds intuitively
to the smallest graph for which $(X, \Omega)$ is a graphical model.
It is possible to characterize graphical models of Gaussian processes in
terms of inner products of the covariance kernel in a reproducing kernel Hilbert space
(see \textit{Theorem 3.1} in \citep{waghmare2025continuously}). 
\begin{remark}\label{remark:notind-path}
    Let $X$ be a stochastic process defined on $I$, $\Omega$
    a graphical model for $X$, and suppose that
    $X_s \notind X_t \mid X_U$. Then there is a path in
    $\Omega$ not crossing $U$.
\end{remark}
It is important to observe that for continuously indexed sets
the graph of a process lacks two properties.
First, there is no guarantee that the graph of a process 
will be a graphical model for the process.
Second, if the graph $\Omega$ reflects some conditional independence
relations of $X$ when $(X, \Omega)$ is a graphical model
via properties of separation,
the converse is not true.
Knowing that $X_s \ind X_t \mid X_W$ does not imply that $s$ and $t$
are separated in $\Omega$. In fact, this is clear when
observing that the complete graph is a graphical model for every process.
Thus, it is not possible to characterize exactly the conditional independencies 
of a stochastic process by looking at its graph.
However, we prove that under some assumptions, it is possible
to restore this characterization by going through a finite
approximation of the graph (\Cref{thm:non-singular-correlation}).

\section{A kernel perspective on the endpoints characterization}
Let $X$ be a real-valued Gaussian process defined on a compact interval $I$ with continuous sample paths and nowhere degenerate.
We stated (\Cref{thm:endpoints-cond-ind}) that, if $X$ has a non-singular correlation structure and is strongly locally nondeterministic on $[a,b]$, where $a,b\in I$ with $a<b$, then
$X$ is Markov on the interval $[a, b]$ if and only if 
\begin{equation*}
        X_a \ind X_b \mid X_c \quad \text{for all } c\in (a,b).
\end{equation*}

It follows from this result that the conditional independence structure of a Gaussian Markov process possesses only one degree of freedom.
This has implications at the level of the covariance kernel of the process, and enables us to revisit the characterization of Gauss-Markov kernels given in \Cref{sec:markov-covariance-model}.
We first propose a reformulation of the functional characterization of the covariance kernel for Markov processes using endpoints.

\begin{proposition}\label{prop:cov-endpoints}
    Let $X$ be a Gaussian stochastic process defined on a compact interval $I$ 
    with almost-surely continuous sample paths.
    Let $a, b \in I$ such that $a < b$ and 
    assume that the process is not degenerate on $[a,b]$.
    Then $X$ is a Markov process on $[a,b]$ if and only if its covariance kernel $K$ satisfies
    \begin{equation}
        K(s,t) = \frac{1}{K(a,b)}K(a,t)K(s,b) \quad \text{for all } a\leq s\leq t\leq b.
    \end{equation}
\end{proposition}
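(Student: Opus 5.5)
The plan is to reduce both directions to the Doob functional equation of \Cref{prop:kernel-functional-equation} and its factorized form \eqref{eq:kernel-decomp}. A preliminary observation is needed: if the right-hand side is to make sense, $K(a,b)\neq 0$. Under the Markov property with nondegeneracy on $[a,b]$, we first argue that $K(s,t)\neq 0$ for all $s\le t$ in $[a,b]$. The argument is as follows. Continuity of $K$ gives $K(s,s)>0$ and $K(s,t)\neq 0$ for $t$ close to $s$. Chaining via $K(s,t)=K(s,u)K(u,t)/K(u,u)$ over a finite subdivision $s=u_0<u_1<\dots<u_N=t$ with small mesh (compactness and uniform continuity) then shows that $K(s,t)$ is a product of nonzero factors. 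Hence $K(a,b)\ne0$ in the Markov case. For the converse we will read the displayed identity as presupposing $K(a,b)\neq0$.

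\textbf{Forward direction.} Assume $X$ is Markov on $[a,b]$. For $a\le s\le t\le b$, apply \Cref{prop:kernel-functional-equation} twice. Using the triple $a\le s\le b$ gives $K(a,b)=K(a,s)K(s,b)/K(s,s)$. Using $a\le s\le t$ gives $K(a,t)=K(a,s)K(s,t)/K(s,s)$. Dividing the second relation by the first yields
\begin{equation*}
\frac{K(a,t)}{K(a,b)}=\frac{K(s,t)}{K(s,b)}.
\end{equation*}
This rearranges to the claimed identity, since $K(s,b)\neq0$ by the preliminary step. Equivalently, one can use \eqref{eq:kernel-decomp}: with $K(s,t)=f(s)g(t)$ we have $K(a,t)K(s,b)/K(a,b)=f(a)g(t)f(s)g(b)/(f(a)g(b))=f(s)g(t)$.

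\textbf{Converse.} Assume the identity holds with $K(a,b)\neq0$. Set $f(s)=K(s,b)$ and $g(t)=K(a,t)/K(a,b)$, both continuous on $[a,b]$. Then $K(s,t)=f(s)g(t)$ for $s\le t$, and we verify Doob's equation directly. For $s\le u\le t$,
\begin{equation*}
\frac{K(s,u)K(u,t)}{K(u,u)}=\frac{f(s)g(u)f(u)g(t)}{f(u)g(u)}=f(s)g(t)=K(s,t).
\end{equation*}
Here $K(u,u)=f(u)g(u)>0$ by nondegeneracy, so $f(u)g(u)\neq0$ and the cancellation is legitimate. \Cref{prop:kernel-functional-equation}, applied to the restriction of $X$ to $[a,b]$, then gives the Markov property.

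\textbf{Main obstacle.} The only delicate point is the nonvanishing of $K(a,b)$, and more generally of $K(s,b)$, in the forward direction. The chaining argument above is where continuity and nondegeneracy are genuinely used. Once that is settled, everything else is the algebra above.
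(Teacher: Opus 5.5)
Your proof is correct and follows essentially the same route as the paper. The converse is the paper's appeal to \Cref{prop:kernel-functional-equation} and \eqref{eq:kernel-decomp}, written out in full, and your forward computation (two applications of Doob's equation, or the alternative $K(a,t)K(s,b)/K(a,b)=f(s)g(t)$ you mention) is the paper's factorization argument.

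The only real difference is how you secure $K(a,b)\neq0$. The paper reads it off the factorization: $f(u)g(u)=K(u,u)>0$ forces $f$ and $g$ to be nonvanishing, so $K(a,b)=f(a)g(b)\neq0$. You instead derive it by chaining Doob's equation under uniform continuity. That is more self-contained, but heavier than needed once \eqref{eq:kernel-decomp} is available.
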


This implies that knowing the whole covariance is equivalent to knowing the covariance between each of the endpoints and all the points in-between.
We give a reformulation of \Cref{thm:endpoints-cond-ind} at the level
of covariance kernels.

\begin{corollary}\label{thm:kernel-endpoints}
    Under the assumptions of \Cref{thm:endpoints-cond-ind}, 
    $X$ is a Markov process on $[a,b]$ if and only if its covariance kernel $K$ satisfies
    \begin{equation}\label{eq:kernel-endpoints}
        K(a,b) = \frac{K(a,u)K(u,b)}{K(u,u)} \quad \text{for all } u \in [a,b].
    \end{equation}
\end{corollary}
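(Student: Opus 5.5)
The plan is to reduce the corollary to \Cref{thm:endpoints-cond-ind} by translating the endpoint conditional independencies into a statement about the kernel. Since $X$ is Gaussian, for each $u \in (a,b)$ the vector $(X_a, X_b, X_u)$ is jointly Gaussian. Hence $X_a \ind X_b \mid X_u$ holds if and only if $\cov(X_a, X_b \mid X_u) = 0$. The paper has already recorded the conditional covariance formula
\begin{equation*}
    \cov(X_a, X_b \mid X_u) = K(a,b) - \frac{K(a,u)K(u,b)}{K(u,u)},
\end{equation*}
which is well defined because $X$ is nowhere degenerate, so $K(u,u)>0$ for $u\in(a,b)$. Therefore \eqref{eq:endpoints-cond-ind} is equivalent to \eqref{eq:kernel-endpoints} for all $u \in (a,b)$, and \Cref{thm:endpoints-cond-ind} then gives the equivalence with the Markov property on $[a,b]$, restricted to interior $u$.

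It remains to handle the endpoints $u=a$ and $u=b$, which the corollary includes. At $u=a$, the right-hand side of \eqref{eq:kernel-endpoints} is $K(a,a)K(a,b)/K(a,a) = K(a,b)$, provided $K(a,a)>0$; the case $u=b$ is symmetric. So the identity holds trivially at the endpoints. If degeneracy at the endpoints is allowed, as the standing assumption in \Cref{sec:markov-covariance-model} suggests, I would interpret \eqref{eq:kernel-endpoints} at $u\in\{a,b\}$ by continuity. Since $K$ is continuous and the identity holds on $(a,b)$, I would pass to the limit $u \to a^+$ in the form $K(a,b)K(u,u) = K(a,u)K(u,b)$, which is the multiplied-out version and makes sense even if $K(a,a)=0$. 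Thus, for the direction "Markov $\Rightarrow$ \eqref{eq:kernel-endpoints}", one can equally apply \Cref{prop:kernel-functional-equation} with $s=a$, $t=b$.

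The main (mild) obstacle is exactly this bookkeeping at $u\in\{a,b\}$, together with checking that the hypotheses of \Cref{thm:endpoints-cond-ind} are inherited verbatim. The latter is immediate because the corollary assumes them. The rest is a direct rewriting, so the proof should be short: state the Gaussian equivalence between conditional independence and zero conditional covariance, substitute the conditional covariance formula, treat the endpoints by the direct computation or by continuity, and invoke \Cref{thm:endpoints-cond-ind}.
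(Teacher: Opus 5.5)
Your proposal is correct and follows the paper's proof: for a Gaussian process, $X_a \ind X_b \mid X_u$ is equivalent to $\cov(X_a,X_b\mid X_u)=K(a,b)-K(a,u)K(u,b)/K(u,u)=0$, so the converse is exactly \Cref{thm:endpoints-cond-ind}, and the forward direction is the functional equation of \Cref{prop:kernel-functional-equation} with $s=a$, $t=b$. The continuity argument at $u\in\{a,b\}$ is not needed: nowhere degeneracy is assumed in \Cref{thm:endpoints-cond-ind}, so $K(a,a),K(b,b)>0$ and the identity holds trivially at the endpoints.
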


This formalizes the idea that the covariance kernel of a Gaussian Markov process is in essence a one-dimensional function.
\begin{remark}
    At the level of rescaled kernels, \Cref{eq:kernel-endpoints} translates as
    $R(a,b) = R(a,u)\cdot R(u,b)$ for all $u \in [a,b]$.
\end{remark}
This suggests another perspective on tests for the Markov property,
that is, assessing whether \Cref{eq:kernel-endpoints} holds.
Let $v < w$ be two distinct times where the process $X$ is observed.
Recall that since $X$ is Gaussian,
we can write
$X_w = X_v \beta_{v,w} + \epsilon_{v,w}$,
where $\epsilon_{v,w}$ follows a normal distribution.
The regression coefficient is given by:
$\beta_{v,w} = R(v,w)$.
Therefore, we can use the empirical regression coefficient $\hat{\beta}_{v,w}$
obtained by regressing the observed values of the process at $w$ onto the ones
at $v$ to estimate $R(v,w)$.
Thus, to test whether the considered process satisfies
the Markov property, we regress $X_u$ on $X_a$ and $X_b$ on $X_u$
for all observation times $u \in [a,b]$, and compute a test statistic
from the regression coefficients. 
Note that this is essentially equivalent to the approach proposed in 
\Cref{sec:test-markov} as standard conditional independence
testing approaches consist in performing the exact same regressions.

\section{Proofs and technical details}

\subsection{The Markov transform}

\begin{proof}[Proof of \Cref{prop:markov-projection}]
$(1) \iff (3)$:
Cauchy-Schwartz inequality gives us that $\sigma^2_{t_{j+1}} \geq 0$.
Hence the process is well-defined and a direct computation shows that its covariance is $C^M$.

$(1) \iff (2)$: For simplicity, write $C \coloneqq C_p$
and $\Omega=\Sigma^{-1}$. 
Denote 
\begin{align*}
    f:\Omega \in \Tri_p \mapsto KL(\cN(0, C), \cN(0, \Sigma)) = \tr(\Omega C) - \log \det \Omega,
\end{align*}
where $\Tri_p$ is the space of tridiagonal symmetric PD matrices.
By convexity of $\Tri_p$ and strict convexity of $f$, there exists a unique minimizer of $f$.
Let $\Sigma^\star=\Omega^{\star-1}$.
Since $\Omega^*$ is tridiagonal, the vector 
$(X^*_1, \dots, X^*_p) \sim \cN(0, \Sigma^*)$ has a Gaussian Markov distribution, and its covariance is given by
\begin{align*}
    \Sigma^*_{ij} = \var(X^*_i) \cdot  \left(\prod_{k=i}^{j-1}\beta^*_k\right) \quad \text{where }
    \beta^*_k = \frac{\cov(X^*_k, X^*_{k+1})}{\var(X^*_k)},
\end{align*}
for $i < j$.
Moreover, observe that by first-order optimality conditions:
$\Sigma^{\star}_{ij}=C_{ij}$ for all $|i-j|\leq 1$.
It follows that $\cov(X^*_k, X^*_{k+1}) = \cov(X_k, X_{k+1})$ and $\var(X^*_k) = \var(X_k)$ for all $k \in [p]$.
This concludes the proof.
\end{proof}

\subsection{Rates of convergence}

\paragraph{Proof of the main result.}

The discrete rates are obtained thanks to two key observations.
First, the regression coefficients used in the estimator are uncorrelated due to the Markov property.
Second, the continuity enables the control of these regression coefficients as $p$ grows.

\begin{lemma}\label{lemma:asymptotic-independence}
    Let $X$ be a real-valued, mean-zero Gaussian process on the interval $I$.
    If the process $X$ is Markov, the random variables $\{\sqrt{n}(\hat{\beta}_j - \beta_j)\}_{1 \leq j \leq p}$ are uncorrelated for all $n \geq 1$. 
\end{lemma}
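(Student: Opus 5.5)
We will work in the synchronous noise-free setting, where $\hat\beta_{j+1} = \sum_i X_i(t_j)X_i(t_{j+1}) / \sum_i X_i(t_j)^2$ is the least-squares coefficient of $X(t_{j+1})$ on $X(t_j)$. The idea is to exploit the innovation representation \eqref{eq:gaussian-regression}:
\[
X_i(t_{j+1}) = \beta_{j+1} X_i(t_j) + \epsilon_{i,j+1},
\]
so that
\[
\hat\beta_{j+1} - \beta_{j+1} = \frac{\sum_i X_i(t_j)\,\epsilon_{i,j+1}}{\sum_i X_i(t_j)^2}.
\]
By the Markov property (\Cref{prop:kernel-functional-equation}, equivalently \Cref{prop:markov-projection}(3) with $K_p^M = K_p$), the innovation $\epsilon_{i,j+1}$ is independent of the whole past $\mathcal{G}_j := \sigma\{X_i(t_l) : l \leq j,\, i\in[n]\}$. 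This holds because it is uncorrelated with $X_i(t_l)$ for $l\le j$ (Gaussian Markov), and independent across $i$ by independence of replications. Consequently $\EE[\hat\beta_{j+1}-\beta_{j+1}\mid \mathcal{G}_j]=0$, since conditionally on $\mathcal{G}_j$ the numerator is a centered Gaussian linear combination with $\mathcal G_j$-measurable weights.

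Next, take $j<k$. The variable $\hat\beta_{j+1}-\beta_{j+1}$ is $\mathcal{G}_{j+1}$-measurable, hence $\mathcal{G}_{k}$-measurable because $j+1\le k$. Conditioning on $\mathcal{G}_k$ and using the tower property gives
\[
\EE\big[(\hat\beta_{j+1}-\beta_{j+1})(\hat\beta_{k+1}-\beta_{k+1})\big] = \EE\big[(\hat\beta_{j+1}-\beta_{j+1})\,\EE[\hat\beta_{k+1}-\beta_{k+1}\mid\mathcal{G}_k]\big]=0.
\]
The same argument with one factor yields $\EE[\hat\beta_{k+1}]=\beta_{k+1}$, so the covariance equals this cross moment and vanishes. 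Multiplying by $n$ gives the claim for every $n\ge1$. The argument is a martingale-difference structure: the centered coefficients form a martingale difference sequence with respect to $(\mathcal{G}_j)$.

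The main obstacle is integrability. We need $\EE[(\hat\beta_{j+1}-\beta_{j+1})^2]<\infty$ so that the products are integrable and the tower property applies. Conditionally on $\mathcal G_j$, the error $\hat\beta_{j+1}-\beta_{j+1}$ is $\cN(0, \sigma_{j+1}^2/\sum_i X_i(t_j)^2)$. Its second moment is therefore $\sigma_{j+1}^2\,\EE[1/\chi]$, where $\chi = \sum_i X_i(t_j)^2 \sim K(t_j,t_j)\chi^2_n$. Now $\EE[1/\chi^2_n]$ is finite only for $n\ge3$. For $n\le 2$ I would argue via Cauchy–Schwarz-free conditional computations: the conditional mean is zero and absolutely integrable whenever $\EE|\hat\beta-\beta| \propto \EE[\chi^{-1/2}]<\infty$, which holds for $n\ge2$. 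Alternatively, I would interpret the covariance in the conditional sense and note this caveat.

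The nondegeneracy assumption ($K(t_j,t_j)>0$) ensures the denominators are a.s.\ positive. In the binned irregular/noisy version, the same argument goes through after adding the independent noise terms $u_{ij}$ and the independent design to the conditioning $\sigma$-field, since these leave the conditional-centering step unchanged.
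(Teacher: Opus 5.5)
Your main argument, in the synchronous noise-free design, is correct and is essentially the paper's. Both combine the tower property with the Markov property to show that the later coefficient is conditionally unbiased given the data entering the earlier one. The paper works with three points and conditions only on the shared vector $\bX_2$. It must first factor the conditional expectation using $\bX_1\ind\bX_3\mid\bX_2$, and only then invoke $\EE[\bX_3\mid\bX_2]=\beta_2\bX_2$ and unbiasedness of $\hat\beta_1$. You instead condition on the whole past $\mathcal{G}_k$, which makes the earlier error measurable outright. This gives the martingale-difference property $\EE[\xi_{k+1}\mid\mathcal{G}_k]=0$, with $\xi_{k+1}=\hat\beta_{k+1}-\beta_{k+1}$, and handles all pairs $j<k$, adjacent or not, in one stroke.

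Your integrability caveat, which the paper ignores, is also correct and can be made sharp. For $n=1$, $\xi_{j+1}=\epsilon_{1,j+1}/X_1(t_j)$ is Cauchy-type, so $\EE|\xi_{j+1}|=\infty$ and no unconditional covariance exists. For $n=2$, the tower step needs $\EE|\xi_{j+1}\xi_{k+1}|<\infty$, not merely integrability of each factor. This does hold, using the following facts, plus H\"older for non-adjacent pairs:
\begin{itemize}
\item $\EE[|\xi_{k+1}|\mid\mathcal{G}_k]\propto|\bX_k|^{-1}$, where $\bX_k=(X_1(t_k),\dots,X_n(t_k))$;
\item $|\xi_{j+1}|\le|\bX_{j+1}|/|\bX_j|+|\beta_{j+1}|$;
\item $|x|^{-1}$ is locally integrable on $\RR^2$.
\end{itemize}
So the accurate statement is zero covariance for $n\ge2$, and uncorrelatedness with finite variances for $n\ge3$.

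Your last paragraph, however, does not hold: conditional centering is exactly the step that fails off the synchronous noise-free design.
\begin{itemize}
\item \emph{Measurement error.} You have $\EE[Y_{i,k+1}\mid\mathcal{G}_k]=\beta_{k+1}X_i(t_k)$, but $\hat\beta_{k+1}$ pairs $Y_{i,k+1}$ with the noisy regressor $Y_{ik}$ and divides by a $\mathcal{G}_k$-measurable normalisation. Hence $\EE[\hat\beta_{k+1}-\beta_{k+1}\mid\mathcal{G}_k]$ is a nonzero $\mathcal{G}_k$-measurable term driven by the $u_{ik}$, and these same $u_{ik}$ also enter $\hat\beta_k$.
\item \emph{Binning, even without noise.} The conditional mean of the bin-$(k+1)$ average given $\mathcal{G}_k$ is a multiple of $X_i$ at the last design point in $B_k$, not of the bin-$k$ average used as regressor.
\end{itemize}
Noisy or bin-averaged observations of a Markov process are no longer Markov, so the martingale-difference structure is lost. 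The paper's proof likewise covers only the synchronous noise-free case, so restrict your claim to that setting rather than assert the extension.
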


\begin{lemma}\label{lemma:sparse-uniform-continuity}
Let $X$ be a Gaussian process on the compact interval $I$ with mean zero and continuous covariance kernel $K$.
Let $w$ be the modulus of continuity of $K$.
Then, there are constants $C_1, C_2, C_3 > 0$ such that for all $k \in [p]$:
\begin{align*}
    |\beta_k - 1| \leq C_1 w\left(\frac{2}{p}\right)
    \quad \text{and} \quad 
    \var(\hat{\beta}_k) \leq \frac{C_2}{n_k-2} \left(w\left(\frac{2}{p} \right) + C_3 \nu^2 \right).
\end{align*}
\end{lemma}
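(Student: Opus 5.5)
The two bounds are of different nature: the first is deterministic and follows from continuity of $K$, the second is a variance computation for a ratio estimator. Throughout I normalise $I$ to have unit length, so two points in adjacent bins $B_{k-1},B_k$ are at distance at most $2/p$. Write $\underline{\kappa}=\inf_{t}K(t,t)$ over the relevant range. Nowhere degeneracy, continuity of $K$ and compactness of $I$ should give $\underline{\kappa}>0$. If degeneracy at the endpoints is allowed, I would restrict to bins bounded away from them, or absorb the issue into the constants.

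\textbf{Bias bound.} Interpret $\beta_k$ as the population coefficient between the representative points $s=t^*_{k-1}\in B_{k-1}$ and $t=t^*_k\in B_k$, so that $\beta_k=K(s,t)/K(s,s)$. Then
\[
|\beta_k-1| = \frac{|K(s,t)-K(s,s)|}{K(s,s)} \le \frac{w(|t-s|)}{\underline{\kappa}} \le \frac{1}{\underline{\kappa}}\, w\!\left(\frac{2}{p}\right),
\]
which gives $C_1=1/\underline{\kappa}$. If the coefficient is instead read on the correlation scale $R$, the same argument applies after also bounding $|K(t,t)^{1/2}-K(s,s)^{1/2}|\le w(2/p)/(2\underline{\kappa}^{1/2})$.

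\textbf{Variance, noiseless part.} First condition on the design $(t_{ij})$, so that $\cI_n(k)$ and the $m_i(\cdot)$ are fixed. For $i\in\cI_n(k)$ let $U_i$ and $V_i$ be the within-bin averages of $Y_{ij}$ over $B_{k-1}$ and $B_k$. Then $(U_i,V_i)$ are i.i.d.-type centered Gaussian pairs across $i$; they are exactly Gaussian when the noise is Gaussian, and I would otherwise use second moments only. Decompose $V_i=\beta_k U_i+\eta_i$ with residual $\eta_i$ uncorrelated with $U_i$. The estimator $\hat\beta_k$ is, up to the denoising correction in the denominator, the least-squares slope $\sum U_iV_i/\sum U_i^2$. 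Conditionally on the $U_i$, its fluctuation is therefore $\sum U_i\eta_i/\sum U_i^2$, with variance $\var(\eta)/\sum U_i^2$. Taking expectations and using the inverse chi-square moment $\EE[(\sum U_i^2)^{-1}] = 1/(\var(U)(n_k-2))$ gives
\[
\var(\hat\beta_k)\lesssim \frac{\var(\eta)}{\var(U)\,(n_k-2)}.
\]
This is the source of the $n_k-2$ in the statement. In the noiseless case, $\var(\eta)=\sigma_k^2=K(t,t)-K(s,t)^2/K(s,s)$. I would split this as
\[
\sigma_k^2=\bigl(K(t,t)-K(s,t)\bigr)+K(s,t)\Bigl(1-\frac{K(s,t)}{K(s,s)}\Bigr)\le w\!\left(\tfrac2p\right)+\|K\|_\infty C_1\, w\!\left(\tfrac2p\right).
\]
Dividing by $\var(U)\ge\underline{\kappa}$ then yields $C_2\,w(2/p)/(n_k-2)$.

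\textbf{Noise and within-bin averaging.} With noise, $\var(\eta)$ picks up two extra pieces. The first is the noise averages, contributing at most $\nu^2/m_i(k)+\beta_k^2\nu^2/m_i(k-1)\le(1+\beta_k^2)\nu^2$. This gives the $C_3\nu^2$ term, with $C_3$ depending on $\sup_k\beta_k^2\le(1+C_1\|K\|_\infty)^2$. The second is the spread of the $X_i(t_{ij})$ within a bin around the representative point; its variance is bounded by $w(1/p)\le w(2/p)$ by continuity. Averaging over the random design preserves these bounds, because the conditional bounds are uniform in the design.

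\textbf{Main obstacle.} The hardest step is the denominator. The estimator uses $\hat\sigma_k$, with noise subtraction, rather than $\sum U_i^2$, so it is not an exact OLS slope. I would first try to show that replacing $\hat\sigma_k^2$ by $n_k^{-1}\sum U_i^2$ changes $\hat\beta_k$ by a term whose variance is $O(\nu^2/n_k)$, and absorb it into $C_3$. The risk is that this replacement does not preserve the exact $1/(n_k-2)$ inverse-moment computation, in which case the bound would hold only up to constants.
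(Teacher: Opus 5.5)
Your argument is essentially the paper's. The bias bound is the same one-line estimate $|\beta-1|\le w(|t^*_{k+1}-t^*_k|)/K(t^*_k,t^*_k)$, with $C_1$ a bound on $1/K(t,t)$. For the variance, the paper likewise uses the least-squares formula $\var(\hat\beta_k)=\widetilde\sigma_k^2/(n_k-2)$ and bounds the noise part of the residual variance by $(1+\beta_k^2)\nu^2$. It controls the process part with exactly your split $K(t,t)-K(s,t)^2/K(s,s)=[K(t,t)-K(s,t)]+\frac{K(s,t)}{K(s,s)}[K(s,s)-K(s,t)]$. The differences are cosmetic. The paper handles within-bin averaging by writing the residual as an average of pairwise regression errors $\epsilon_{ij,ij'}$ and applying Cauchy--Schwarz, each pair contributing a multiple of $w(|t_{ij}-t_{ij'}|)\le w(2/p)$, rather than comparing with representative points. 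It also silently absorbs your factor $1/\var(U)$ into $C_2$.

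You go beyond the paper at the denominator, and there your proposed step would fail. The paper simply asserts the OLS formula and never addresses that $\hat\beta_k=\sum_iU_iV_i/(n_k\hat\sigma_k^2)$ is not an OLS slope. Here $n_k\hat\sigma_k^2=\sum_iU_i^2-c$ and $c=\nu^2\sum_i m_i^{-1}$, with $m_i$ the number of observations of curve $i$ in the bin defining $U_i$.

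For $\nu>0$ and Gaussian data, $\sum_iU_i^2$ has a continuous density that is strictly positive at $c>0$. Meanwhile $\EE[(\sum_iU_iV_i)^2\mid U]\ge\sum_iU_i^2\var(V_i\mid U_i)$ stays bounded away from zero near $\{\sum_iU_i^2=c\}$, since the noise in $V_i$ alone makes $\var(V_i\mid U_i)>0$. Hence $\EE[\hat\beta_k^2]=\infty$. The difference between $\hat\beta_k$ and the OLS slope is not square-integrable either, so it cannot have variance $O(\nu^2/n_k)$. Read literally, the variance bound fails for $\nu>0$ and the estimator as defined.

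To make the argument rigorous you must truncate, for example by working on $E=\{\hat\sigma_k^2\ge\underline{\kappa}/2\}$. Its complement has probability exponentially small in $n_k$ once the bins are fine enough. On $E$ you do not need the OLS detour. Write $V_i=\beta U_i+\zeta_i$ with $\beta$ the de-noised target slope $\sum_i\cov(\bar X_i,\bar X_i')/\sum_i\var(\bar X_i)$, so that $\zeta_i$ is correlated with $U_i$ (the errors-in-variables effect the correction undoes). Then $\hat\beta_k-\beta=(\sum_iU_i\zeta_i+\beta c)/(n_k\hat\sigma_k^2)$. The numerator is centred with variance $O(n_k\{w(2/p)+\nu^2\})$ for bounded $\nu$, which gives the stated form directly. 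In the noiseless case $\hat\beta_k$ is exactly the OLS slope, and your inverse chi-square computation is exact for $n_k\ge 3$.
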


Furthermore, we describe the asymptotic behavior of the number of observations in the irregular observation settings.
\begin{lemma}\label{lemma:sparse-growth}
    The random variables $n_k$, $k \in [r]$ are identically distributed
    and follow a binomial distribution
    with parameters $n$ and $\widetilde{p} = 1 - 2\left( 1 - \frac{1}{p} \right)^r + \left( 1 - \frac{2}{p} \right)^r$.
    When $p$ grows,
    \begin{align*}
         \EE\left[\frac{1}{n_k+1}\right] =
         \begin{cases}
             O\left( \frac{p^2}{nr^2}  \right) &\text{ if } r= o(p), \\
             O\left( \frac{1}{n}  \right) &\text{ otherwise}.
         \end{cases}
    \end{align*}
\end{lemma}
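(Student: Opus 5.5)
We need to prove Lemma \ref{lemma:sparse-growth}: $n_k$ is binomial with the stated $\widetilde p$, and we need the stated bound on $\EE[1/(n_k+1)]$.

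\textbf{Distributional claim.} We first establish the law of $n_k$. By definition, $i\in\cI_n(k)$ if and only if curve $i$ has at least one observation time in $B_k$ and at least one in $B_{k+1}$. The times $(t_{ij})_j$ are i.i.d.\ uniform on $I$ and the bins have equal length, so each $t_{ij}$ falls in $B_k$ with probability $1/p$. Let $A$ be the event ``no point in $B_k$'' and $B$ the event ``no point in $B_{k+1}$''. By inclusion--exclusion,
\[
\PP(i\in\cI_n(k)) = 1-\PP(A)-\PP(B)+\PP(A\cap B) = 1-2(1-1/p)^r+(1-2/p)^r = \widetilde p .
\]
This quantity does not depend on $k$. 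Since the curves' designs are independent across $i$, the indicators $\mathds{1}\{i\in\cI_n(k)\}$ are i.i.d.\ Bernoulli$(\widetilde p)$. Hence $n_k\sim\mathrm{Bin}(n,\widetilde p)$, identically for all $k$.

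\textbf{Inverse moment.} Next we use the classical identity for a binomial $N\sim\mathrm{Bin}(n,q)$. Using $\binom{n}{m}/(m+1)=\binom{n+1}{m+1}/(n+1)$, we get
\[
\EE\!\left[\frac{1}{N+1}\right]=\frac{1-(1-q)^{n+1}}{(n+1)q}\le \frac{1}{(n+1)q}.
\]
So it suffices to lower-bound $\widetilde p$.

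\textbf{Lower bound on $\widetilde p$.} Write $x=1/p$. The indicators of ``at least one point in $B_k$'' and ``at least one point in $B_{k+1}$'' are negatively associated, so product bounds are not immediate. We therefore expand directly:
\[
\widetilde p = 1-2(1-x)^r+(1-2x)^r = \sum_{m\ge 2}\binom{r}{m}(-1)^m x^m\,(2^m-2).
\]
It is cleaner to note that $\widetilde p=\PP(\text{at least one point in each of the two bins})$. For this event we have two regimes.

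In the regime $r=o(p)$, the event is at least as likely as the event that exactly one of the $r$ points lies in $B_k$, exactly one lies in $B_{k+1}$, and the rest lie elsewhere. That event has probability $r(r-1)x^2(1-2x)^{r-2}$. Since $rx\to0$, we have $(1-2x)^{r-2}\ge 1-2rx\to1$, and $r(r-1)\ge r^2/2$ for $r\ge2$. This gives $\widetilde p\gtrsim r^2/p^2$, hence $\EE[1/(n_k+1)]=O(p^2/(nr^2))$.

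Otherwise, $r\gtrsim p$ along a subsequence, i.e.\ $r/p\ge c>0$. Then
\[
\widetilde p\ \ge\ 1-2(1-x)^r\ \ge\ 1-2e^{-r/p}.
\]
This bound is only useful if $e^{-c}<1/2$, so this is the delicate step. Instead, use monotonicity of $\widetilde p$ in $r$: adding points can only enlarge the event. It follows that $\widetilde p(r)\ge\widetilde p(\lceil c p\rceil\wedge r)$. We then apply the first-regime counting bound with $r'=\min(r,\lfloor cp\rfloor)$ for a small fixed $c$ (say $c\le1/4$), for which $(1-2x)^{r'-2}\ge e^{-4c}$ stays bounded below. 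This yields $\widetilde p\gtrsim c^2$, a constant, hence $\EE[1/(n_k+1)]=O(1/n)$.

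\textbf{Main obstacle.} The expected difficulty is exactly this uniform lower bound on $\widetilde p$ across the transition $r\asymp p$. The monotonicity-plus-truncation trick above is the route I would try first, since it avoids messy asymptotics of the alternating expansion.
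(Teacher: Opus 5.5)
Your proof is correct, and its skeleton matches the paper's. In both, inclusion--exclusion over ``no point in $B_k$'' and ``no point in $B_{k+1}$'' gives $\widetilde p$, and independence of the designs across curves gives $n_k\sim\mathrm{Bin}(n,\widetilde p)$. Both then use the exact identity $\EE[1/(n_k+1)]=\{1-(1-\widetilde p)^{n+1}\}/\{(n+1)\widetilde p\}$; the paper cites this from Chao and Strawderman, and you derive it from $\binom{n}{m}/(m+1)=\binom{n+1}{m+1}/(n+1)$.

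You depart from the paper only in the last step. The paper just says ``a Taylor expansion gives the rate'', i.e.\ an expansion of $\widetilde p$ in $1/p$, whose leading term is $r(r-1)/p^2$ when $r/p\to 0$. You instead lower-bound $\widetilde p$ by the probability of the sub-event ``exactly one point in each bin''. You then handle $r\gtrsim p$ through monotonicity of $\widetilde p$ in $r$, truncating to $r'=\lfloor c'p\rfloor$.

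What each route buys:
\begin{itemize}
\item The expansion gives two-sided asymptotics for $\widetilde p$. Made rigorous, it needs control of the alternating remainder, and an expansion around $1/p=0$ with $r/p\to 0$ does not by itself cover $r\asymp p$.
\item Your argument is one-sided, which is all an $O(\cdot)$ bound needs, and it is fully elementary.
\item Your two cases combine into the uniform estimate $\widetilde p\gtrsim\min(1,r^2/p^2)$ for all $r\ge 2$ and $p$ large. This is exactly the $1\vee p^2/r^2$ factor used in \Cref{thm:continuous-rates}, and it makes the dichotomy in the statement unnecessary.
\end{itemize}

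A few small clean-ups:
\begin{itemize}
\item You use $c$ both for the lower bound $r/p\ge c$ and for the truncation constant. Take $c'=\min(c,1/4)$ so that $\lfloor c'p\rfloor\le r$ is guaranteed, and use the floor consistently; you write $\lceil cp\rceil$ once.
\item The ``otherwise'' case should be read as $r/p$ bounded away from $0$, not merely along a subsequence, since along any subsequence with $r/p\to 0$ only the first bound holds. The uniform estimate makes this moot.
\item The smallness of $c'$ is not needed. Any fixed $c'>0$ gives $(1-2/p)^{c'p}\ge e^{-4c'}$ for $p\ge 4$.
\end{itemize}
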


\begin{proof}[Proof of \Cref{thm:continuous-rates}]
Define
\begin{align*}
        R: (s,t) \in \Delta \mapsto \frac{K(s,t)}{K(s,s)} \quad \text{where }\Delta = \{(s,t) \in I \times I \; : s \leq t\}.
\end{align*}
Fix $p \in \NN$ and let $k \in [p-1]$.
By the mean-value theorem and as the covariance $K$ is continuous, there is $t^*_k \in B_k$ and $t^*_{k+1} \in B_{k+1}$ such that $\hat{\beta}_{k+1}$ converges to $R(t^*_k, t^*_{k+1})$.
Therefore, there is a sequence $(t^*_1, \dots, t^*_p)$ such that $\KhatM_p$ consistently estimates $K_p \coloneqq (K(t^*_i, t^*_j))_{i,j \in [p]}$ at a rate of order $1 / \sqrt{n_k}$.
Let $\Klin$ be the piecewise linear surface interpolating $(K(t^*_i, t^*_j))_{i,j \in [p]}$ on $[0,1] \times [0,1]$.
By the triangle inequality,
\begin{align*}
    \left(\EE \left[ \norm{\KhatM - K}_2^2 \right]\right)^{1/2} 
    \leq \left(\EE \left[ \norm{\KhatM - \Klin}_2^2 \right]\right)^{1/2} + \norm{\Klin - K}_2.
\end{align*}
The first term can be upper bounded by the mean-squared error of the matrix $\KhatM_p - K_p$ times $2/p$ corresponding to the grid size. 
Moreover, as $K$ has modulus of continuity $w$, the second term can be bounded as
$\norm{\Klin - K}_2 \leq w(h)$ where $h$ is the maximum distance between two observation points $t_{ij}$, $t_{ij'}$ that belong in consecutive bins (formally $(j,j') \in \cJ_i(k) \times \cJ_i(k+1)$ with $ k \in [p-1]$ and $i \in [n]$): in particular, $h \leq 2/p$.
Moreover, the entrywise product function is Lipschitz on any bounded domain.
As the covariance $K$ is bounded on $I$, there exists $L > 0$ such that
\begin{align*}
    \left(\EE \left[ \norm{\KhatM - \Klin}_2^2 \right]\right)^{1/2} \leq L \left\{ \left(\EE \left[ \norm{\RhatM - \Rlin}_2^2 \right]\right)^{1/2} + \left(\EE \left[ \norm{\hat{\sigma}^2 - \sigma^2}_2^2 \right]\right)^{1/2} \right\}
\end{align*}
Hence,
\begin{align}\label{apx:triangle-inequality}
    \left(\EE \left[ \norm{\KhatM - K}_2^2 \right]\right)^{1/2} 
    \leq \frac{2L}{p} \left(\EE \left[ \norm{\RhatM_p - R_p}_F^2 \right]\right)^{1/2} + \frac{L}{\sqrt{n}} + w\left(\frac{2}{p}\right).
\end{align}

We focus on the first term of the right-hand side.
Recall that
\begin{align*}
    \RhatM_p(t^*_j, t^*_k) = \prod_{l=1}^{k-j} \hat{\beta}_{j+l} \quad \text{for } j \leq k.
\end{align*}
Let $j < k-1$.
For brevity, write $\RhatM_p(j, k) \coloneqq \RhatM_p(t^*_j, t^*_k)$ and $R_p(j, k) \coloneqq R_p(t^*_j, t^*_k)$.
We have:
\begin{align*}
    \RhatM(j, k) - R(j,k) &= \RhatM(j, k-1) \hat{\beta}_{k} - {R}(j, k-1) {\beta}_{k}\\
    &= \RhatM(j, k-1) \Big[ \hat{\beta}_{k} - \beta_{k} \Big] + \Big[ \RhatM(j, k-1) - R(j,k-1) \Big] \beta_{k} 
\end{align*}
Denote $\varepsilon_{j,k} \coloneqq \RhatM(j, k) - R(j,k)$ for all $k > j+1$, and 
$\xi_k = \hat{\beta}_k - \beta_k$.
We just showed that
\begin{align*}
    \varepsilon_{j,k} &= \Big( \varepsilon_{j,k-1} + R(j,k-1)\Big) \xi_k  + \varepsilon_{j,k-1} \beta_k \\
    &= \varepsilon_{j,k-1} \Big(\beta_k + \xi_k\Big) + R(j,k-1)\xi_k.
\end{align*}

Using the fact that $\varepsilon_{j,j}=0$, we obtain:
\begin{equation}\label{eq:proof-estimation-error}
    \varepsilon_{j,j+k} = \sum_{l=1}^{k} R(j,j+l-1)\xi_{j+l}
    \prod_{r=l+1}^{k} (\beta_{j+r} + \xi_{j+r}).
\end{equation}

By \Cref{lemma:asymptotic-independence}, $(\xi_{j+l}, \dots, \xi_k)$ are mean-zero and uncorrelated. Hence, when squaring the error and taking the expectation, the cross terms vanish and 
\begin{align*}
    \EE[\varepsilon_{j,j+k}^2] 
    &= \sum_{l=1}^{k}R(j,j+l-1)^2  \left( \EE [ \xi_{j+l}^2] 
    \prod_{r=l+1}^{k} \EE \left[ (\beta_{j+r} + \xi_{j+r})^2 \right]  \right) \\
    &= \sum_{l=1}^{k}R(j,j+l-1)^2 \var(\hat{\beta}_{j+l}) 
    \prod_{r=l+1}^{k} \left( \beta_{j+r}^2 + \var(\hat{\beta}_{j+r})  \right) \\
    &= \sum_{l=1}^{k}R(j,j+l-1)^2 \var(\hat{\beta}_{j+l}) 
    \prod_{r=l+1}^{k} \left( \beta_{j+r}^2 \right)
    \prod_{r=l+1}^{k} \left( 1 + \frac{\var(\hat{\beta}_{j+r})}{\beta_{j+r}^2}  \right).
\end{align*}
Observe that by Cauchy-Schwartz, $K(j,k) \leq \sqrt{K(j,j) K(k,k)}$ which implies 
$\prod_{l=j+1}^{k} \beta_l \leq \sigma_k / \sigma_j$.
As the covariance is non-null, continuous on the compact interval $I$, there exists a constant $C_0$ that upper-bound this product independently of $p$.
For the following, condition on $n_k$.
By \Cref{lemma:sparse-uniform-continuity}, 
$|\beta_k - 1| \leq C_1 w\left(\frac{2}{p}\right)$ and 
$\var(\hat{\beta}_k) \leq \frac{C_2}{n_k} \left(w\left(\frac{2}{p} \right) + C_3 \nu^2 \right)$ for some constants $C_1, C_2, C_3 > 0$.
Hence,
\begin{align*}
    \EE[\varepsilon_{j,j+k}^2] 
    &= C_0^2 \frac{C_2}{n_k} \left(w\left(\frac{2}{p} \right) + C_3 \nu^2 \right) \norm{R}^2_{\infty} \cdot
    \sum_{l=1}^{k} 
    \prod_{r=l+1}^{k} \left( 1 + \frac{\var(\hat{\beta}_{j+r})}{\beta_{j+r}^2}  \right).
\end{align*}
Since the covariance is non-null and continuous, there is a constant $C_4>0$ such that $1 / \beta_{j+r}^2 \leq C_4$ for all $r$. 
$\sum_{l=1}^{k} \prod_{r=l+1}^{k} \left( 1 + {\var(\hat{\beta}_{j+r})}/{\beta_{j+r}^2}  \right) 
\leq \sum_{l=1}^{k} \left( 1 + C_4/n \right)^{k-l} \leq k e^{C_4 k/n}$.
As $p/n$ is bounded, we have $C_5>0$ such that
\begin{align*}
    \EE[\varepsilon_{j,j+k}^2] 
    &=  C_0^2 C_2 C_5 \norm{R}^2_{\infty} \frac{k}{n_k} \left(w\left(\frac{2}{p} \right) + C_3 \nu^2 \right).
\end{align*}
By the tower property and \Cref{lemma:sparse-growth},
\begin{align}\label{eq:proof-pointwise-bound}
    \EE[\varepsilon_{j,j+k}^2] 
    &=  M  \left( \frac{k}{n} \vee \frac{kp^2}{nr^2} \right) \cdot \left(w\left(\frac{2}{p} \right) + C_3 \nu^2 \right),
\end{align}
where $M \coloneqq 2C_0^2 C_2 C_5 \norm{R}^2_{\infty}$ is independent of $n$ and $p$.
Thus, the mean squared error satisfies
\begin{align*}
    \EE \left[ \norm{\KhatM_p - K_p}_F^2 \right]
    &\leq M  \left( \frac{1}{n} \vee \frac{p^2}{nr^2} \right) \cdot \left(w\left(\frac{2}{p} \right) + C_3 \nu^2 \right)  \cdot \sum_{j=1}^{p} \sum_{k=1}^{p} k \\
    &\leq M p^2  \left( \frac{p}{n} \vee \frac{p^3}{nr^2} \right) \cdot \left(w\left(\frac{2}{p} \right) + C_3 \nu^2 \right). 
\end{align*}
\end{proof}

\paragraph{Proof of technical lemmas}

\begin{proof}[Proof of \Cref{lemma:asymptotic-independence}]
    Suppose $p = 2$, the case where $p \geq 2$ can be handled similarly.
    We know that:
    \begin{align*}
        \hat{\beta}_1 = \frac{\sum_{i=1}^{n} X_1^{(i)} X_2^{(i)}}{\sum_{i=1}^{n} \left(X_1^{(i)}\right)^2},
        \qquad
         \hat{\beta}_2 = \frac{\sum_{i=1}^{n} X_2^{(i)} X_3^{(i)}}{\sum_{i=1}^{n} \left(X_2^{(i)}\right)^2}.
    \end{align*}
    Let $\bX_2 = (X_2^{(1)}, \dots, X_2^{(n)})^\top$. By the tower property and since $X_1 \ind X_3 \mid X_2$:
    \begin{align*}
        \EE \left[ \hat{\beta}_1 \hat{\beta}_2 \right] 
        &= \EE \left[ \frac{\sum_{i=1}^{n} \sum_{j=1}^{n} X_1^{(i)} X_2^{(i)}  X_2^{(j)} X_3^{(j)}}{\sum_{i=1}^{n} \left(X_1^{(i)}\right)^2 \cdot \sum_{j=1}^{n} \left(X_2^{(j)}\right)^2} \right] \\
        &= \EE \left[ \frac{1}{\sum_{i=1}^{n} \left(X_2^{(i)}\right)^2 } \cdot
        \sum_{i=1}^{n} \sum_{j=1}^{n} \left(X_2^{(i)} X_2^{(j)}\right)  \EE\left[ \left. \frac{X_1^{(i)} }{\sum_{i=1}^{n} \left(X_1^{(i)}\right)^2}X_3^{(j)} \right| \bX_2 \right] \right] \\
        &= \EE \left[ \frac{1}{\sum_{i=1}^{n} \left(X_2^{(i)}\right)^2 } \cdot
        \sum_{i=1}^{n} \sum_{j=1}^{n} \left(X_2^{(i)} X_2^{(j)}\right)  \EE\left[ \left. \frac{X_1^{(i)} }{\sum_{i=1}^{n} \left(X_1^{(i)}\right)^2} \right| \bX_2 \right] \EE\left[ \left. X_3^{(j)} \right| \bX_2 \right] \right] \\
        &= \beta_2 \EE \left[ \frac{1}{\sum_{i=1}^{n} \left(X_2^{(i)}\right)^2 } \cdot
        \sum_{i=1}^{n} \sum_{j=1}^{n} X_2^{(i)} \left(X_2^{(j)}\right)^2  \EE\left[ \left. \frac{X_1^{(i)} }{\sum_{i=1}^{n} \left(X_1^{(i)}\right)^2} \right| \bX_2 \right] \right] \\ 
        &= \beta_2 \EE \left[ 
        \sum_{i=1}^{n}  X_2^{(i)}  \EE\left[ \left. \frac{X_1^{(i)} }{\sum_{i=1}^{n} \left(X_1^{(i)}\right)^2} \right| \bX_2 \right] \right] \\ 
        &= \beta_2 \EE \left[ 
        \frac{\sum_{i=1}^{n}  X_2^{(i)} X_1^{(i)} }{\sum_{i=1}^{n} \left(X_1^{(i)}\right)^2} \right] \\ 
        &= \beta_1 \beta_2.
    \end{align*}

\end{proof}

\begin{proof}[Proof of \Cref{lemma:sparse-uniform-continuity}]
    Fix $p \in \NN$ and let $k \in [p-1]$.
    By the mean-value theorem and as the covariance $K$ is continuous, there is $t^*_k \in B_k$ and $t^*_{k+1} \in B_{k+1}$ such that $\hat{\beta}_{k+1}$ converges to $R(t^*_k, t^*_{k+1})$.
    Therefore, there is a sequence $(t^*_1, \dots, t^*_p)$ such that $\KhatM_p$ consistently estimates $K_p \coloneqq (K(t^*_i, t^*_j))_{i,j \in [p]}$.
    Let $\beta_{k+1} = K(t^*_{k}, t^*_{k+1})/K(t^*_{k}, t^*_{k})$.
    
    \textbf{First inequality.}
    We have:
    \begin{align*}
        |\beta_{k+1} - 1| &= \left| \frac{K(t^*_k, t^*_{k+1})}{K(t^*_k, t^*_{k})} - 1 \right| \\
        &= \left|\frac{K(t^*_k, t^*_{k+1}) - K(t^*_k, t^*_{k})}{K(t^*_k, t^*_{k})}\right| \\
        &\leq \frac{w(t^*_{k+1} - t^*_k)}{K(t^*_k, t^*_k)}.
    \end{align*}
    As $t^*_k \in B_k$, $t^*_{k+1} \in B_{k+1}$, we have $|t^*_{k+1} - t^*_k| \leq 2/p$.
    As the covariance $K$ is continuous and non-null, there is $C_1>0$
    such that $1/K(t^*_k, t^*_k) \leq C_1$ for all $k\in[p]$ and for all $p$. 
    
    \textbf{Second inequality.}
    We have
    \begin{align*}
        \var(\hat{\beta}_{k}) 
        &= \frac{\widetilde{\sigma}_{k}^2}{n_k-2},
    \end{align*}
    where $\widetilde{\sigma}_{k}^2$ is the variance of the residual
    in the regression setting with noisy observations, that is
    \begin{align*}
        \widetilde{\sigma}_{k}^2 &= \var \left[ \left(\frac{1}{m_i(k)} \sum_{j \in \cJ_i(k)} u_{ij} \right) \beta_{k} - \left(\frac{1}{m_i(k+1)} \sum_{j \in \cJ_i(k+1)} u_{ij} \right) + \frac{1}{m_i(k) m_i(k+1)} \sum_{\substack{j \in \cJ_i(k) \\ j' \in \cJ_i(k+1)}} \epsilon_{ij, ij'} \right]
    \end{align*}
    where $\epsilon_{ij, ij'}$ is the regression error when regressing $X(t_{ij'})$ on $X(t_{ij})$, with $\EE \epsilon_{ij, ij'} = 0$ and 
    $\EE \epsilon_{ij, ij'}^2 \eqqcolon \sigma^2_{ij, ij'}$.
    By independence of the $u_{ij}$'s, and by Cauchy-Schwartz inequality,
    \begin{align*}
        \widetilde{\sigma}_{k}^2 &= \nu^2 (\beta_k^2 + 1) + 
        \frac{1}{m_i(k)^2 m_i(k+1)^2} \var \left[ \sum_{\substack{(j,j') \in \cJ_i(k) \times \cJ_i(k+1)}} \epsilon_{ij, ij'} \right] \\
        &\leq \nu^2 (\beta_k^2 + 1) + \frac{1}{m_i(k) m_i(k+1)} 
        \left(\sum_{\substack{(j,j') \in \cJ_i(k) \times \cJ_i(k+1)}} \var[\epsilon_{ij, ij'}] \right). 
    \end{align*}
    Moreover,
    \begin{align*}
        \var[\epsilon_{ij, ij'}] 
        &= K(t_{ij'}, t_{ij'}) - \frac{K(t_{ij}, t_{ij'})^2}{K(t_{ij}, t_{ij})} \\
        &= \frac{1}{K(t_{ij}, t_{ij})} \big[ K(t_{ij}, t_{ij}) K(t_{ij'}, t_{ij'}) - K(t_{ij}, t_{ij}) K(t_{ij}, t_{ij'}) \\
        &\quad \ + K(t_{ij}, t_{ij'})K(t_{ij}, t_{ij}) - K(t_{ij}, t_{ij'})^2 \big] \\
        &=  \left[ K(t_{ij'}, t_{ij'}) - K(t_{ij}, t_{ij'}) \right] + \frac{K(t_{ij}, t_{ij'})}{K(t_{ij}, t_{ij})} \left[ K(t_{ij}, t_{ij}) - K(t_{ij}, t_{ij'})\right] \\
        &\leq w(|t_{ij} - t_{ij'}|) \left[ 1 + \frac{K(t_{ij}, t_{ij'})}{K(t_{ij}, t_{ij})} \right]
    \end{align*}
    Hence, since $K$ is uniformly continuous, non-null, and bounded on $I$,
    \begin{align*}
        \widetilde{\sigma}_{k}^2 
        &\leq C \nu^2 + C' w\left( \max_{(j,j') \in \cJ_i(k) \times \cJ_i(k+1)}|t_{ij} - t_{ij'}|\right) \\
        & \leq C \nu^2 + C' w\left( \frac{2}{p} \right).
    \end{align*}   
\end{proof}

\begin{proof}[Proof of \Cref{lemma:sparse-growth}]
    We have $n_k = \sum_{i=1}^{n} \mathbf{1}(A_i)$ with 
    $A_i = \left\{ \exists (j, j') \text{ s.t. } j \in \cJ_i(k), j' \in \cJ_i(k+1) \right\}$, and,
    \begin{align*}
        \PP(A_i) &= 1 - 2\left( 1 - \frac{1}{p} \right)^r + \left( 1 - \frac{2}{p} \right)^r = \widetilde{p}.
    \end{align*}
    Noting that $(\mathbf{1}(A_i))_{i \in [n]}$ are i.i.d. concludes the proof of the first statement.
    By \citep{chao1972negative}, we have:
    \begin{align}
        \EE\left[\frac{1}{n_k+1}\right] = \frac{1 - (1-\widetilde{p})^{n+1}}{(n+1)\widetilde{p}},
    \end{align}
    and a Taylor expansion gives the rate.
\end{proof}

\subsection{Testing the Markov property}

\paragraph{Proof of the endpoints characterization of the Markov property}

\begin{theorem}\label{thm:non-singular-correlation}
    Let $X$ be a real-valued Gaussian process on a compact interval $I$ and let $\OmX$ be its graph.
    If $X$ has a non-singular correlation structure,
    then, for all $s,t \in I$ and closed set $U$ such that
    $X_s \ind X_t \mid X_U$, the points $s$ and $t$ are separated by $U$ in $\OmX$.
\end{theorem}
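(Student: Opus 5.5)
We argue by contraposition: assuming $s$ and $t$ are \emph{not} separated by $U$ in $\OmX$, we show $X_s \notind X_t \mid X_U$. The bridge between the continuum graph and finite objects is the partition sequence $\pi_j$ of \Cref{def:non-singular-correlation}. Following \citep{waghmare2025continuously}, non-singularity of the correlation structure makes every precision operator matrix $\bP_{\pi_j} = \bR_{\pi_j}^{-1}$ well defined. Moreover, $\OmX$ is recovered as the limit of the block graphs $\Omega_j = \bigcup\{ I_k \times I_l : (\bP_{\pi_j})_{kl} \neq 0\}$. We take this identifiability result as the starting point.

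\textbf{Step 1: discretize the path.} Because $U$ does not separate $s$ and $t$, there is a path $s = s_1, \dots, s_N = t$ in $\OmX$ avoiding $U$. Since $U$ is closed, each $s_i$ lies at positive distance from $U$. Since $\OmX$ is the closed limit of the $\Omega_j$, for $j$ large each edge $(s_i, s_{i+1})$ lies in a block $I_k \times I_l$ with $(\bP_{\pi_j})_{kl} \neq 0$. We choose $j$ so fine that the blocks containing the path vertices are disjoint from $U$. This produces a path in the finite block graph $\Omega_j$, from the block containing $s$ to the block containing $t$, that avoids every block meeting $U$.

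\textbf{Step 2: block-level dependence.} At the block level we have a Gaussian vector of $m$ Hilbert-valued components with invertible correlation operator matrix. The standard Gaussian graphical-model argument then applies verbatim: partial correlations are given by the Schur complement of $\bP_{\pi_j}$. The key fact is that for a precision matrix, connectivity in its support graph through a set $V$ of nodes not containing the conditioning nodes implies a nonzero conditional covariance. The route is the Neumann-series/walk expansion of $(\bP_{\pi_j})_{VV}^{-1}$, or equivalently the faithfulness of the operator precision on the block level (Theorem~3.1 of \citep{waghmare2025continuously}). Applying this, we conclude that the blocks $X_{I_k} \ni X_s$ and $X_{I_l} \ni X_t$ are conditionally dependent given $X_{U_j}$, where $U_j \supset U$ is the union of blocks meeting $U$.

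\textbf{Step 3: back to points.} We must pass from block dependence given $U_j$ to point dependence given $U$. Continuity of sample paths lets us approximate $X_{I_k}$ by $X_s$ as the mesh shrinks. As $j \to \infty$, conditioning on $X_{U_j}$ decreases to conditioning on $X_U$ (martingale convergence, using closedness of $U$ and continuity). Together these give $\cov(X_s, X_t \mid X_U) \neq 0$.

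We expect Step~2, and especially its interplay with Step~3, to be the main obstacle. In general, nonzero precision entries along a path do not force nonvanishing partial covariance, because cancellations between walks can occur. Likewise, a limit of nonzero block dependences could vanish at the point level. We will first try to rule out cancellation by choosing a \emph{minimal} path, i.e.\ one that is induced in $\Omega_j$ restricted to the complement of $U_j$. Then the conditional covariance of the endpoints has a single nonvanishing leading term, and non-singularity of $\bR_{\pi_j}$ gives the invertibility needed for that term to persist.
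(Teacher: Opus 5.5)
Your three steps are the paper's argument run in contrapositive. The paper passes from $X_s\ind X_t\mid X_U$ to $X_s\ind X_t\mid X_{U^{\pi_j}}$. It then invokes invertibility of $\bR_{\pi_j}$ to conclude that $U^{\pi_j}$ separates $s$ and $t$ in the block graph $\Omega^{\pi_j}$. Finally it intersects over $j$, using closedness of $U$, to get separation in $\OmX$.

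The genuine gap is your Step~2, and the repair you propose does not close it. The claim that a path in the support graph of the precision avoiding the conditioning blocks forces a nonzero conditional covariance is a \emph{faithfulness} property, and invertibility does not imply it. Scalar blocks already show this. Let $Z_1,Z_2$ be i.i.d.\ $\cN(0,1)$ and $Z_3=Z_1+Z_2+\varepsilon$ with $\varepsilon\sim\cN(0,1)$ independent. The covariance is positive definite and the precision is
\[
\begin{pmatrix} 2 & 1 & -1\\ 1 & 2 & -1\\ -1 & -1 & 1 \end{pmatrix},
\]
so $1$ and $2$ are joined by a direct edge---the shortest, trivially induced path. Yet $\cov(Z_1,Z_2)=0$, i.e.\ $Z_1\ind Z_2$ given the empty (closed) conditioning set.

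The object you need is the $(k,l)$ block of $\{(\bP_{\pi_j})_{WW}\}^{-1}$, with $W$ the blocks outside $U_j$. Its path expansion runs over \emph{all} paths inside $W$, not just the one you selected. Choosing a minimal path deletes nothing from $W$, and no small parameter makes the shortest path dominant (with the diagonal normalization the Neumann series even diverges in this example). Here the edge $1$--$2$ and the path $1\to3\to2$ contribute $-1$ and $+1$ to $\cov(Z_1,Z_2)$ and cancel exactly. Since the example has an invertible correlation matrix, non-singularity of $\bR_{\pi_j}$ cannot be what excludes such cancellations. Likewise, Theorem~3.1 of \citep{waghmare2025continuously} characterizes graphical models, i.e.\ the direction separation $\Rightarrow$ conditional independence, not the converse you need.

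Step~3 is not justified either. Conditional dependence of the blocks $X_{I_k}$ and $X_{I_l}$ only says that some linear functionals of them are conditionally correlated, not the evaluations $X_s$ and $X_t$. Moreover, conditioning on $X_{U_j}$ retains information about $X$ near $U$ that $\sigma(X_U)$ does not contain. For instance, if $U=\{u\}$ and $X$ is mean-square differentiable, then $X'(u)$ lies in the closed span of $\{X_v: v\in U_j\}$ for every $j$ but typically not in $\mathrm{span}\{X_u\}$. So the limit of conditional covariances given $X_{U_j}$ need not be the conditional covariance given $X_U$, and closedness of $U$ and path continuity do not fix this.

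Be aware that the paper's proof will not supply the missing ingredient. It asserts, without argument, the mirror images of both your steps:
\begin{itemize}
\item that $X_s\ind X_t\mid X_U$ implies $X_s\ind X_t\mid X_{U^{\pi_j}}$, although enlarging the conditioning set can destroy conditional independence (in the example, $Z_1\ind Z_2$ but $\cov(Z_1,Z_2\mid Z_3)=-1/3$);
\item that this implies separation in $\Omega^{\pi_j}$ ``since $\bR_{\pi_j}$ is invertible'', which is precisely the faithfulness claim refuted above.
\end{itemize}
Closing either version of the argument requires an additional hypothesis, such as faithfulness of the block-level Gaussian models, or a genuinely different idea.
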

By definition of uncountable graphical models, the converse always holds.

\begin{proof}[Proof of \Cref{thm:non-singular-correlation}]
    Let $s,t \in I$ and $U$ be a closed set,
    and assume that $X_s \ind X_t \mid X_U$.
    If $s,t$ and $U$ are not strictly separated, the property clearly holds.
    Hence, we suppose that $s,t$ and $U$ are strictly separated.
    As $X$ has a non-singular correlation structure, we have a sequence $\{\pi_j\}_{j=1}^\infty$ such that $\bR_{\pi_j}$ is invertible for all $j \in \NN^*$.
    Let $j^* > 0$ such that $s, t$ and $U$ are separated in $\pi_j$ for all $j \geq j^*$.
    Then, for $j \geq j^*$, we have $X_s \ind X_t \mid X_{U^{\pi_j}}$.
    By definition of finite graphical models, and since $\bR_j$ is invertible,
    it follows that $U^{\pi_j}$
    separates $s$ and $t$ in $\Omega^\pi_j$.

    Let $\bp(s,t)$ we a path from $s$ to $t$ in $\cap_{j \geq j^*}\OmX^{\pi_j}$, 
    and fix $k \geq j^*$. Since $\cap_{j \geq j^*}\OmX^{\pi_j} \subset \OmX^{\pi_k}$,
    $\bp(s,t)$ is also a path from $s$ to $t$ in $\OmX^{\pi_k}$, and hence, it crosses
    $U^{\pi_k}$. As this is true for all $k \geq j^*$, it follows that 
    $\cap_{j \geq j^*} U^{\pi_j}$ separates $s$ and $t$ in $\cap_{j \geq j^*}\OmX^{\pi_j}$.

    Now, observe that for any set $W$, $\cap_{j \geq j^*} W^{\pi_j} = \overline{W}$ where $\overline{W}$
    is the closure of $W$. As $U$ is closed, it follows that
    $U$ separates $s$ and $t$ in $\overline{\OmX}$, and as $\OmX \subset \overline{\OmX}$ we obtain
    that $U$ separates $s$ and $t$ in $\OmX$.
\end{proof}

\begin{lemma}\label{lemma:slnd}
    Let $X$ be a strongly locally nondeterministic Gaussian process defined on $[0,1]$ and let $U \subset [0,1]$ be a closed interval.
    If $X$ is continuous, then there exists $\delta > 0$ such that
    for $s < t < \inf U$ with $|t-s| \leq \delta$, 
    $\cov(X_s, X_t \mid X_U)$ is non zero.
\end{lemma}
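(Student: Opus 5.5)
We reduce the statement to a lower bound on the conditional covariance obtained from conditional variances. Write $K_U(s,t) = \cov(X_s, X_t \mid X_U)$ for the conditional covariance kernel given $X_U$. For a Gaussian process this is deterministic; it is the kernel of the residual process $X - \EE[X \mid X_U]$. Since $U$ is a closed interval with $\inf U > t > s$, the points $s,t$ lie at positive distance from $U$.

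The plan is to use the identity
\begin{equation*}
K_U(s,t) = \tfrac{1}{2}\bigl(K_U(s,s) + K_U(t,t) - \var(X_t - X_s \mid X_U)\bigr).
\end{equation*}
It then suffices to show two things:
\begin{itemize}[nosep]
\item[(i)] $K_U(u,u)$ is bounded below by a positive constant on a neighbourhood of any fixed point $t_0 < \inf U$;
\item[(ii)] $\var(X_t - X_s \mid X_U) \le \var(X_t - X_s) = K(t,t) - 2K(s,t) + K(s,s)$, which tends to $0$ uniformly as $|t-s| \to 0$ by uniform continuity of $K$ on the compact square.
\end{itemize}
Granting these, choose $\delta$ so that the modulus term in (ii) is smaller than the lower bound in (i). Then $K_U(s,t) > 0$, and in particular it is non-zero, for all $|t-s| \le \delta$.

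For (i) we use strong local nondeterminism. Fix $u < \inf U$. Take $r_0$ in the definition and note that conditioning on more variables only decreases the conditional variance. Two cases arise.
\begin{itemize}[nosep]
\item If $U \subset \{v : r \le |v-u| \le r_0\}$ for some $r \in (0, u \wedge r_0]$, we get $K_U(u,u) \ge c_0 \phi(r) > 0$.
\item If $U$ reaches beyond distance $r_0$, this containment fails. In that case we split $U = U_1 \cup U_2$ with $U_1$ within distance $r_0$ of $u$, and use non-degeneracy together with continuity of $K$ to argue that $\var(X_u \mid X_U) > 0$.
\end{itemize}
The second case is the step we expect to be the main obstacle, since strong local nondeterminism alone controls only the local window. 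A cleaner route is to invoke the non-singular correlation structure assumed in the ambient theorem. Non-singularity excludes $X_u$ lying in the closed span of $X_U$ when $u$ is strictly separated from $U$, because otherwise the block of $\bR_{\pi_j}$ containing $u$ would be linearly dependent on the blocks covering $U$, contradicting invertibility. Hence $K_U(u,u) > 0$.

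Continuity of $u \mapsto K_U(u,u)$ follows because $K_U = K - (\text{projection term})$, which is continuous when $X$ has continuous paths and continuous $K$. Combined with compactness of $[\min I, \inf U - \eta]$ for small $\eta$, this gives a uniform lower bound $m > 0$ there.

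To close the argument, we take $\delta$ with $\sup_{|t-s| \le \delta}\bigl(K(t,t) - 2K(s,t) + K(s,s)\bigr) < m$. This yields $K_U(s,t) \ge m - \tfrac{1}{2}m > 0$. If points arbitrarily close to $\inf U$ must be included, we handle them using the local lower bound $c_0\phi(r)$ from strong local nondeterminism, with $\delta$ allowed to depend on $r$.
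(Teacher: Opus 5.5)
Your skeleton matches the paper's: the polarization identity, lower bounds on the two conditional variances, and $\var(X_t-X_s\mid X_U)\le\var(X_t-X_s)$, which is small by uniform continuity of $K$. The routes differ in the lower bound.

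\textbf{The paper's lower bound.} The paper takes $\var(X_s\mid X_U)\ge c_0\phi(\min U-s)$ directly from strong local nondeterminism. It never checks that $U$ lies in a window $\{v: r\le|v-s|\le r_0\}$ with $r\le s\wedge r_0$. You correctly notice that the definition gives nothing outside such a window.

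\textbf{Your lower bound.} You instead combine non-singularity of the correlation structure with nondegeneracy. If $\bR_{\pi_j}$ is boundedly invertible, then $\|\sum_i h_i\|^2\ge c\sum_i\|h_i\|^2$ for $h_i$ in the closed span of $X$ over the $i$-th cell. Hence $X_v\notin\overline{\operatorname{span}}\{X_w:w\in U\}$ once the cell containing $v$ misses $U$. Mere injectivity of $\bR_{\pi_j}$ would not survive the closure, so the bounded inverse matters. You then use continuity of $v\mapsto\var(X_v\mid X_U)=\|X_v-P_UX_v\|^2$, where $P_U$ is the $L^2(\PP)$-projection onto that span, together with compactness.

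\textbf{What each buys.} Your argument is correct, but it rests on hypotheses of \Cref{thm:endpoints-cond-ind}, not of the lemma, and it no longer uses strong local nondeterminism at all. Those hypotheses are genuinely needed. Take Brownian motion $B$ and set $X_t=B_t+B_{0.05}\sqrt{(t-0.9)_+}$.
\begin{itemize}
\item $X$ is continuous and strongly locally nondeterministic, with $\phi(r)=r$ and $r_0=0.1$.
\item Averages of normalized dyadic increments of $X$ just to the right of $0.9$ converge in $L^2$ to $B_{0.05}$.
\item Hence, for $U=[0.9,1]$, $\sigma(X_U)=\sigma(B_{0.05},B_w:w\in U)$.
\item The Markov property of $B$ then gives $\cov(X_s,X_t\mid X_U)=0$ whenever $s<0.05<t$.
\end{itemize}
Brownian motion itself also violates the conclusion at $s=0$. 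That is why your compactness step needs $\var(X_{\min I})>0$ as well.

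\textbf{Uniformity up to $\inf U$.} Neither proof yields a single $\delta$ valid up to $\inf U$. The paper's claim $\min U-t>\delta$ tacitly keeps $t$ away from $U$. Your closing sentence, with $\delta$ depending on $r$, does not produce one $\delta$ either. Drop it and state the result for $s<t\le\inf U-\eta$ with $\delta=\delta(\eta)$. This is all the proof of \Cref{thm:endpoints-cond-ind} uses, since there every point lies in $[a,s]$ with $s<\min U$.
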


\begin{proof}[Proof of \Cref{lemma:slnd}]
    Let $w$ be the modulus of continuity of the covariance kernel of $X$, and let $u \coloneq \min U$. We then have for $s< t < \min U$:
    \begin{align*}
        \cov(X_s, X_t \mid X_U) &= \frac{1}{2} \left[\var(X_s \mid X_U) + \var(X_t \mid X_U) - \var(X_t - X_s \mid X_U)  \right] \\
        &\geq \frac{1}{2} \left[ c_0 \phi(u-s) + c_0 \phi(u-t) - w(s-t) \right]
    \end{align*}
    As $t-s \to 0^+$, $w(t-s) \to 0$ while $u-t>\delta$ for some $\delta>0$ and $\phi(u-s) > \epsilon$, $\phi(u-t) > \epsilon$ for some $\epsilon > 0$.
    It follows that we can find $\delta'$ such that for all $s, t$ satisfying $|t-s| \leq \delta'$, then $\cov(X_s, X_t \mid X_U) > 0$.
\end{proof}

\begin{proof}[Proof of \Cref{thm:endpoints-cond-ind}]
    It is clear that if $X$ is Markov then \Cref{eq:endpoints-cond-ind} holds.
    For the converse, we adopt the perspective of continuously indexed graphical models \citep{waghmare2025continuously}
    and we show that if $X$ is not Markov, then
    \Cref{eq:endpoints-cond-ind} cannot hold. 
    An illustration of the proof is given in \Cref{fig:jumping-path}.
    Assume that $X$ does not satisfy the Markov property. 
    Then, there exist $s < t$ and a closed interval $U$ included in $(s,t)$
    such that $X_s \notind X_t \mid X_U$.
    Let $\Omega$ be a graphical model for $X$.
    By definition of $\Omega$ (see \Cref{remark:notind-path}), 
    $s$ and $t$ cannot be separated by $U$ in $\Omega$.
    Therefore, we can find a path $\bp(s,t)$ in $\Omega$ from $s$ to $t$
    such that this path does not cross $U$.
    
    Furthermore, since $X$ has continuous sample paths, the covariance
    kernel $K$ of $X$ is continuous around the diagonal \citep{cambanis1973some}.
    Without loss of generality, we can assume that the set on which $K$ is continuous 
    around the diagonal is closed. Therefore, it is compact. 
    As the process is not degenerate, $K$ is non-null on the diagonal.
    Let $\epsilon = \min_{t\in[a,s]} K(t,t) > 0$.
    By Heine-Cantor theorem, 
    we have $\delta > 0$ such that $K$ satisfies
    \begin{equation*}
        |u-v| \leq \delta \implies |K(u,v)| \geq \epsilon \quad \text{for all } u,v \in [a,s].
    \end{equation*}
    It follows that $X_u \notind X_v$ for all $u, v \in [a,s]$ 
    such that $|u-v| \leq \delta$.
    Let $\delta' > 0$ given by \Cref{lemma:slnd}. 
    Let $(u_1, \dots, u_n)$ be a strictly increasing sequence with $n \geq 1$ such that
    $u_1 = a$, $u_n = s$ and $|u_{k+1} - u_k| \leq \delta \wedge \delta'$ for all $k$.
    Let $k \in \{1, \dots, n-1\}$.
    As $X_{u_k} \notind X_{u_{k+1}}$, there is a path between $u_k$
    and $u_{k+1}$.
    Furthermore, \Cref{lemma:slnd}, $X_{u_k} \notind X_{u_{k+1}} \mid X_U$ :
    this gives us a path between $u_k$ and $u_{k+1}$
    not crossing $U$ (see \Cref{remark:notind-path}).
    Concatenating these paths for all $1 \leq k \leq n-1$,
    we obtain a path $\bp(a,s)$ not crossing $U$.
    We apply the same approach to build a path $\bp(t,b)$ not crossing $U$.

    We can thus concatenate $\bp(a,s)$, $\bp(s,t)$ and $\bp(t,b)$ to obtain
    a path from $a$ to $b$ not crossing $U$. We conclude by \Cref{thm:non-singular-correlation}.
\end{proof}

\begin{figure}[h!]
    \centering
    \includegraphics[width=0.30\textwidth]{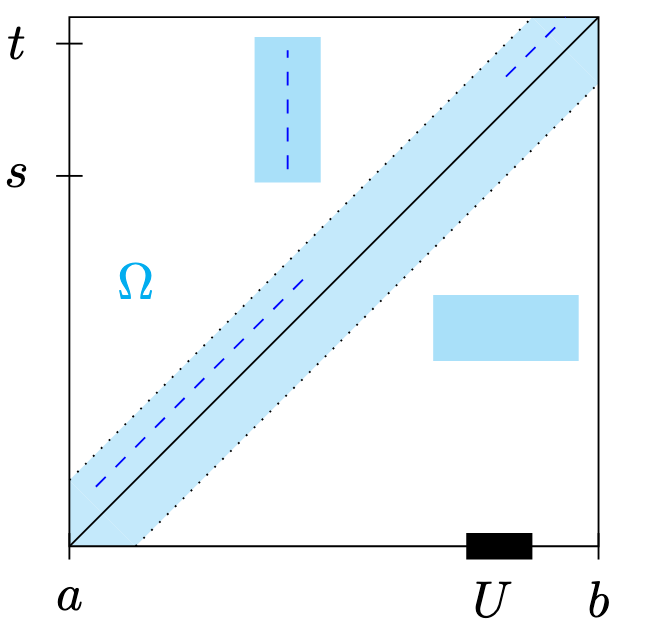}
    \caption{Graphical model with path (in dashed blue) going from $a$ to $b$ 
    without intersecting $U$.}
    \label{fig:jumping-path}
\end{figure}

\begin{proof}[Proof of \Cref{prop:cov-endpoints}]
    Assume $X$ is Markov on $[a,b]$. Then we know that there exist functions $f$ and $g$ such that 
    $K(s,t)=f(s)g(t)$ for all $s,t\in [a,b]$ such that $s\leq t$ (see \Cref{eq:kernel-decomp}).
    Then, we have $K(a,b) = f(a)g(b) > 0$ since $X$ is not degenerate.
    Moreover, 
    \begin{align*}
        g(t) = \frac{1}{f(a)}K(a,t) \quad \text{for all } t \geq a,
        f(s) = \frac{1}{g(b)}K(s,b) \quad \text{for all } s \leq b.
    \end{align*}
    As $\frac{1}{f(a)}\cdot\frac{1}{g(b)} = \frac{1}{K(a,b)}$, it follows that
    $K(s,t) = \frac{1}{K(a,b)}K(s,b)K(a,t)$ for all $a \leq s \leq t \leq b$.

    The converse is clear by \Cref{prop:kernel-functional-equation} and \Cref{eq:kernel-decomp}.
\end{proof}

\begin{proof}[Proof of \Cref{thm:kernel-endpoints}]
    Assume that $X$ is Markov on $[a,b]$. 
    Recall that, for $a \leq s \leq t \leq b$, we have by \Cref{prop:cov-endpoints}:
    \begin{equation}
        K(s,t) = \frac{K(s,u)K(u,t)}{K(u,u)} \quad \text{for all } u \in [s,t].
    \end{equation}
    Taking $s=a$ and $t=b$ gives the desired expression.

    Conversely, assume that $K(a,b) = {K(a,u)K(u,b)}/{K(u,u)}$ holds for all $u \in [a,b]$.
    As $X$ is Gaussian, this is equivalent to having:
    \begin{equation*}
    \begin{aligned}
    \text{i.e.} && &\cov\big( X_a, X_b \mid X_u\big)= 0, \\
    \text{i.e.} && & X_a \ind X_b \mid X_u.
    \end{aligned}
    \end{equation*}
    We know by \Cref{thm:endpoints-cond-ind} that 
    $X_a \ind X_b \mid X_u, \: \forall u\in(a,b)$ implies that $X$ is Markov on $[a,b]$.

\end{proof}

\paragraph{Finite-dimensional endpoints characterization}

In finite-dimensional settings, non-singularity of the correlation structure and non-determinism are not needed, and we can simply suppose that the vector has non-degenerate Gaussian distribution.
For stochastic processes, continuity of the sample paths implies a local dependence of nearby values. 
We translate this property at the discrete level in terms of conditional independence.
Specifically, assume that $X$ is a non-degenerate Gaussian 
random vector of size $p \geq 2$. 
We say that the coordinates of $X$ are \emph{locally
dependent} if 
$X_j \notind X_{j+1} \mid X_{\{j,j+1\}^c}$
for all $j \in [p-1]$.
We can then
characterize the Markov property of $X$ as follows.
\begin{theorem}\label{thm:discrete-endpoints-cond-ind}
    Let $X = (X_1, \dots, X_p) \in \RR^p$ be a non-degenerate Gaussian vector,
    and assume that the coordinates of $X$ are locally dependent.
    Then, $X$ satisfies the Markov property if and only if 
    \begin{equation}\label{eq:discrete-endpoints-cond-ind}
        X_1 \ind X_p \mid X_j \quad \text{for all } j\in \{2, \dots, p-1\}.
    \end{equation}
\end{theorem}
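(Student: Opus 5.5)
The plan is to transpose the proof of \Cref{thm:endpoints-cond-ind} to the finite setting, with the precision matrix $P=K^{-1}$ playing the role of the graph. Since $X$ is non-degenerate Gaussian, its conditional independence graph $\mathcal{G}$ (\Cref{def:cond-ind-graph}) has an edge $\{j,k\}$ if and only if $P_{jk}\neq 0$. By the global Markov property of Gaussian graphical models, $\mathcal{G}$ is a graphical model for $X$ in the separation sense. Local dependence says exactly that $P_{j,j+1}\neq 0$ for all $j$, so $\mathcal{G}$ always contains the chain $1-2-\cdots-p$. The Markov property of $X$ is equivalent to $P$ being tridiagonal, i.e.\ to $\mathcal{G}$ being exactly this chain.

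\textbf{Forward direction.} If $X$ is Markov, then for every $j\in\{2,\dots,p-1\}$ the vertex $j$ separates $1$ from $p$ in the chain. The global Markov property then gives $X_1 \ind X_p \mid X_j$, which is \eqref{eq:discrete-endpoints-cond-ind}. Equivalently, I can read this off \Cref{prop:kernel-functional-equation} applied to the grid.

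\textbf{Converse: graph step.} Suppose $X$ is not Markov, so $P_{st}\neq 0$ for some $t\geq s+2$, and pick any $u$ with $s<u<t$. The path $1,2,\dots,s,t,t+1,\dots,p$ joins $1$ to $p$ in $\mathcal{G}$ while avoiding $u$. This plays the role of the concatenated path $\bp(a,s),\bp(s,t),\bp(t,b)$ in the continuous proof. Local dependence here does the job that strong local nondeterminism did in \Cref{lemma:slnd}, namely supplying the short edges $j-(j+1)$ on either side.

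\textbf{Converse: the hard step.} It remains to turn ``not separated by $\{u\}$ in $\mathcal{G}$'' into $X_1 \notind X_p \mid X_u$, i.e.\ $K_{1p}K_{uu}\neq K_{1u}K_{up}$. In the continuous proof this role is played by \Cref{thm:non-singular-correlation}, which gives the implication independence $\Rightarrow$ separation. That is a faithfulness statement, and I expect it to be the main obstacle, because non-separation does not in general force dependence when the conditioning set is a single coordinate and the other coordinates are marginalised out.

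I would try first to write $\cov(X_1,X_p\mid X_u)$ through the Schur complement of $P$. This expresses it as a signed sum over paths from $1$ to $p$ in $\mathcal{G}$ avoiding $u$, weighted by products of the entries $P_{jk}$. Then I would show that under local dependence these contributions cannot cancel simultaneously for every $u$. Conditioning on $X_{s+1}$ and on $X_{t-1}$ gives two distinct cancellation equations to exploit.

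\textbf{Caveat.} A dimension count makes me doubt this step without extra hypotheses. For $p=4$, condition \eqref{eq:discrete-endpoints-cond-ind} imposes two polynomial equations on the ten entries of $K$, whereas the Markov class has codimension three ($P_{13}=P_{14}=P_{24}=0$). Local dependence is an open condition and does not change dimensions. So I anticipate needing an additional genericity or faithfulness assumption, or a stronger form of local dependence, to close the argument. Checking the $p=4$ case explicitly is the first thing I would do before writing the general induction on $p$.
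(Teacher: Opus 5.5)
There is a genuine gap, and it sits exactly where you put it. It cannot be closed, because the statement is false. Your forward direction and your graph step are correct; reading non-Markovianity as $P_{st}\neq 0$ for some $t\geq s+2$ and walking along $1,\dots,s,t,\dots,p$ is a tidier version of the paper's concatenation of paths. The remaining step passes from ``$\{u\}$ does not separate $1$ and $p$ in $\mathcal{G}$'' to $X_1\notind X_p\mid X_u$. That is a faithfulness assertion, and neither non-degeneracy nor local dependence supplies it.

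The paper's own proof makes the same leap without justification in its last step: a path from $X_1$ to $X_p$ avoiding $X_j$ is taken to imply $X_1\notind X_p\mid X_j$. So there is nothing there to borrow. Your dimension count is right, and the statement already fails for $p=4$. Take
\[
K=\begin{pmatrix}1&\frac{1}{2}&\frac{3}{10}&\frac{3}{20}\\ \frac{1}{2}&1&\frac{1}{2}&\frac{3}{10}\\ \frac{3}{10}&\frac{1}{2}&1&\frac{1}{2}\\ \frac{3}{20}&\frac{3}{10}&\frac{1}{2}&1\end{pmatrix},
\qquad
16709\,K^{-1}=\begin{pmatrix}22400&-10500&-1820&700\\ -10500&27300&-9590&-1820\\ -1820&-9590&27300&-10500\\ 700&-1820&-10500&22400\end{pmatrix}.
\]
This $K$ meets every hypothesis:
\begin{itemize}
\item The leading principal minors of $K$ are $1,\frac{3}{4},\frac{14}{25},\frac{16709}{40000}$, so $X\sim\cN(0,K)$ is non-degenerate.
\item Every off-diagonal entry of $K^{-1}$ is nonzero, so the coordinates are locally dependent (and $\mathcal{G}$ is complete).
\item $\cov(X_1,X_4\mid X_2)=\frac{3}{20}-\frac{1}{2}\cdot\frac{3}{10}=0$ and $\cov(X_1,X_4\mid X_3)=\frac{3}{20}-\frac{3}{10}\cdot\frac{1}{2}=0$, so \eqref{eq:discrete-endpoints-cond-ind} holds.
\end{itemize}
Yet $\cov(X_1,X_3\mid X_2)=\frac{3}{10}-\frac{1}{4}\neq 0$, so $X$ is not Markov. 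The two cancellations at $u=2$ and $u=3$ that you hoped to play against each other occur simultaneously.

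This $K$ is the member $\lambda=\frac{6}{5}$ of the curve $\rho_{j,j+1}=\frac{1}{2}$, $\rho_{13}=\rho_{24}=\frac{\lambda}{4}$, $\rho_{14}=\frac{\lambda}{8}$. The curve satisfies \eqref{eq:discrete-endpoints-cond-ind} for every $\lambda$, and the other hypotheses for $\lambda$ near $1$. It is Markov only at $\lambda=1$, which is precisely the extra dimension your count predicts. Appending a Markov continuation $X_{k+1}=\frac{1}{2}X_k+\varepsilon_{k+1}$ with independent innovations preserves all hypotheses and the failure. So counterexamples exist for every $p\geq 4$.

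The theorem therefore needs an additional hypothesis. The natural one is faithfulness of $\cN(0,K)$ to $\mathcal{G}$: every relation $X_a\ind X_b\mid X_C$ is implied by separation of $a$ and $b$ by $C$. Under it, your graph step finishes the proof at once. Without it, only the trivial cases $p\leq 3$ survive.

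The same $K$ has a non-singular correlation matrix and satisfies $X_1\ind X_4\mid X_2$, although $1$ and $4$ are adjacent in $\mathcal{G}$. So the finite-dimensional analogue of \Cref{thm:non-singular-correlation} fails as well, and that result is not a template you can safely transpose.
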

\begin{proof}
    It is clear that if $X$ is Markov then \Cref{eq:discrete-endpoints-cond-ind} holds.
    For the converse, assume that $X$ is not Markov and denote by $\OmX$ the conditional independence graph of $X$ (see \Cref{apx:graphical-models}).
    Then we have a triplet $i < j < k$ such that $X_i \notind X_k \mid X_j$.
    This implies that there is a path $\bp(i,k)$ from $X_i$ to $X_k$ not crossing $X_j$ in $\OmX$.
    Furthermore, as $X$ is locally dependent, we know that there is an edge between $X_l$ and $X_{l+1}$ in $\OmX$ for all $l\in[p-1]$.
    This yields a path $\bp(1,i)$ from $X_1$ to $X_i$ and a path $\bp(k,p)$ from $X_k$ to $X_p$. Concatenating the three paths $\bp(1,i), \bp(i,k)$ and $\bp(k,p)$
    we get a path from $X_1$ to $X_p$. 
    Moreover, as $i < j < k$, the paths $\bp(1,i)$ and $\bp(k,p)$ do not cross $X_j$, and by definition of $\bp(i,k)$ it does not cross $X_j$ either.
    Hence there is a path from $X_1$ to $X_p$ not crossing $X_j$ in the graph $\OmX$. It follows that $X_1 \notind X_p \mid X_j$, and concludes the proof.
\end{proof}

\paragraph{Test convergence rates}

\begin{proof}[Proof of \Cref{thm:test-rate}]
    We follow proof ideas introduced in \citep{shah2018hardness}.
    We define
    \begin{align*}
    \widetilde{T}_n = \max_{j \in \{2, \dots, p-1\}} \sqrt{n} \cdot |z(\widetilde{\rho}_j) - z_j|
    \quad \text{where } \widetilde{\rho}_j = 
    \frac{
    \frac{1}{n}\sum_{k=1}^n e_{j1}^{(k)} e_{pj}^{(k)}}{\sigma_{j1}\sigma_{pj}},
    \end{align*}
    with $\sigma_{j1}, \sigma_{pj}$ the limiting standard deviations of the residuals $e_{j1}, e_{pj}$.
    By the triangle inequality, 
    \begin{align}\label{proof:test-cv-triangle-ineq}
        \sup_{t \geq 0} \left| \PP \left\{\widehat{T}_n \leq t \right\} - \PP \left\{ \max_{1<j<p} |\widehat{Y}_j| \leq t \right\} \right| 
        &\leq 
        \sup_{t \geq 0} \left| \PP \left\{\widehat{T}_n \leq t \right\} - \PP \left\{ \widetilde{T}_n \leq t \right\} \right| \\
        & \qquad+ 
        \sup_{t \geq 0} \left| \PP \left\{\widetilde{T}_n \leq t \right\} - \PP \left\{ \max_{1<j<p} |\widehat{Y}_j| \leq t \right\} \right|.
    \end{align}
    Using $|\PP(A)-\PP(B)| \leq \PP(A \setminus B) + \PP(B \setminus A)$, we can write 
    for all $t \geq 0$:
    \begin{align*}
        \left| \PP \left\{\widehat{T}_n \leq t \right\} - \PP \left\{ \widetilde{T}_n \leq t \right\} \right|
        &\leq 
        \PP\left( \sigma_{j1} \sigma_{pj} x < \hat{S}_j \leq \hat{\sigma}_{j1} \hat{\sigma}_{pj} x, \; \forall j \right)
        + \PP\left( \hat{\sigma}_{j1} \hat{\sigma}_{pj}x < \hat{S}_j \leq {\sigma}_{j1} {\sigma}_{pj}x, \; \forall j \right) \\
        &= \PP\left( \left|\hat{S}_j/x - \sigma_{j1} \sigma_{pj} \right| < \left| \hat{\sigma}_{j1} \hat{\sigma}_{pj} - \sigma_{j1} \sigma_{pj}  \right|, \; \forall j \right) \\
        & \leq \PP\left( \left|\hat{S}_j/x - \sigma_{j1} \sigma_{pj} \right| < \left| \hat{\sigma}_{j1} \hat{\sigma}_{pj} - \sigma_{j1} \sigma_{pj}  \right| \right),
    \end{align*}
    for a fixed $j$
    and where $\hat{\sigma}_{j1}, \hat{\sigma}_{pj}$ are the empirical standard deviations
    of the corresponding residuals, $\hat{S}_j$ is the empirical product of residuals, i.e.
    $\hat{S}_j = \frac{1}{n} \sum_{k=1}^{n} e_{j1}^{(k)}e_{pj}^{(k)}$, and $x = \tanh(t)$.
    Moreover, we write for brevity $\Delta_n = |\hat{\sigma}_{j1} \hat{\sigma}_{pj} - \sigma_{j1} \sigma_{pj}|$ and $\widetilde{S} = \hat{S}_j/x - \sigma_{j1} \sigma_{pj}$.
    Then, by law of iterated expectations,
    \begin{align*}
        \PP\left[ \left|\hat{S}_j/x - \sigma_{j1} \sigma_{pj} \right| < \left| \hat{\sigma}_{j1} \hat{\sigma}_{pj} - \sigma_{j1} \sigma_{pj}  \right| \right]
        &= \EE \left[ \PP\left( \left|\widetilde{S}\right| < \Delta_n \Big| \Delta_n \right)  \right]
        \\
        &= \EE \left[ F_{\widetilde{S}| \Delta_n}(\Delta_n \big| \Delta_n) - F_{\widetilde{S}| \Delta_n}(-\Delta_n \big| \Delta_n) \right] \\
        &= \EE \left[ 2 f_{\widetilde{S}|\Delta_n}(0 \big| \Delta_n) \Delta_n \right] +  o(\EE\Delta_n) \\
        &\leq M \EE[\Delta_n] + o(\Delta_n),
    \end{align*}
    for some constant $M>0$ as $\widetilde{S}$ and $\Delta_n$ converge as $n$ grows.
    Moreover, by the central limit theorem and the delta method,
    $\EE \left[ \Delta_n \right]
    = C/ \sqrt{n} + o\left( {1}/{\sqrt{n}} \right)$
    where $C$ is a positive constant.
    It follows that:
    \begin{align}\label{eq:proof-test-cv-term1}
        \left| \PP \left\{\widehat{T}_n \leq t \right\} - \PP \left\{ \widetilde{T}_n \leq t \right\} \right| \leq CM \frac{1}{\sqrt{n}} + o\left( \frac{1}{\sqrt{n}} \right).
    \end{align}

    We now bound the second term of \Cref{proof:test-cv-triangle-ineq} using
    the results of \citep{chernozhukov2013gaussian}.
    By a change of variable, we can restrict our attention
    to $\widetilde{\rho}$.
    Observe that $\widetilde{\rho}_j$ is the sum of products of two mixture of Gaussian
    distributions. In particular, each of this product is sub-exponential, 
    and by continuity of the process over the considered interval we can find $K > 0$
    such that the subexponential norm $\norm{\cdot}_{\psi_1} \coloneqq \inf \{K'>0: \; \EE \exp(|\cdot|/K') \leq 2 \}$ of all the products $e_{j1} e_{pj}$ is not larger than $K$.
    Then, we have by a union bound that:
    \begin{align*}
        \PP \left( \max_{j} |e_{j1} e_{pj}| > t \right) \leq 2p e^{-t/K}
    \end{align*}
    and, taking $t = K \log p$ we obtain that
    \begin{align*}
        \norm{\max_{1<j<p} |e_{j1} e_{pj}|}_{\psi_1} \leq K (\log p +1).
    \end{align*}
    Moreover, there are constants $c_1, C_1>0$ such that $c_1 \leq \EE[(e_{j1} e_{pj})^2] \leq C_1$ by continuity of the process on the
    interval and since its covariance is never null.
    Using \textit{Theorem 2.2} and \textit{Corollary 2.2} in \citep{chernozhukov2013gaussian},
    we obtain
    \begin{align*}
        \sup_{t \geq 0} \left| \PP \left\{\widetilde{T}_n \leq t \right\} - \PP \left\{ \max_{1<j<p} |\widehat{Y}_j| \leq t \right\} \right|
        \leq C \left[ \frac{1}{n^{1/8}} \left\{ \log(pn/\gamma) \right\}^{7/8} + \frac{1}{n^{1/2}} \left\{ \log(pn/\gamma) \right\}^{3/2} u(\gamma) + \gamma \right],
    \end{align*}
    for all $\gamma \in (0,1)$ and where $u(\gamma) \leq C' \log(p) \log(n/\gamma + 1)$, for some
    $C, C'>0$.
    Taking $\gamma = n^{-1/2}$ yields
    \begin{align}\label{eq:proof-test-cv-term2}
        \sup_{t \geq 0} &\left| \PP \left\{\widetilde{T}_n \leq t \right\} - \PP \left\{ \max_{1<j<p} |\widehat{Y}_j| \leq t \right\} \right|
        \lesssim  \frac{1}{n^{1/8}} \left\{ \log(pn^{3/2}) \right\}^{7/8} + \frac{1}{n^{1/2}} \left\{ \log(p n^{3/2}) \right\}^{3/2} \log p \log n.
    \end{align}
    Assume that $p = \exp(n^\alpha)$, for some $\alpha > 0$.
    Then, the derived upper-bound converges
    if and only if $\alpha \in (0, 1/7)$,
    and, in this case, the first term of the above equation
    is dominant.
    We conclude by putting together 
    \Cref{eq:proof-test-cv-term1} and \Cref{eq:proof-test-cv-term2}.
\end{proof}

\section{Additional numerical illustrations}\label{apx:numerics}

\paragraph{Kernel-embedded Brownian motion.}
We consider Gaussian processes obtained by embedding the covariance of a Brownian motion into a compactly supported kernel with bandwidth $h>0$. This construction produces processes with $h$-long memory in the sense that conditioning on an interval longer than $h$ renders past and future approximately independent. Formally, let $K_0$ be a covariance kernel and let $k:[0,1]^2\to\mathbb{R}_+$ be a bivariate kernel. Define the rescaled kernel $k_h(\cdot)=(1/h)k(\cdot/h,\cdot/h)$. The resulting kernel-embedded covariance is
\begin{align*}
K_h(s,t)=\iint_{[0,1]^2} k_h(s,u)\,K_0(u,v)\,k_h(v,t)\,du\,dv, \qquad s,t\in[0,1].
\end{align*}
In our simulations we take $K_0$ to be the Brownian motion covariance and choose $k_h$ to be a Wendland kernel. Specifically, we use the compactly supported radial function $\kappa(r)=\max(1-r, 0)^{4}(4r+1)$ for $r\ge 0$ and define
\begin{align*}
    k_h(s,u)=\frac{1}{h}\,\kappa\!\left(\frac{|s-u|}{h}\right), \quad \text{for } s,u\in[0,1].
\end{align*}
Wendland kernels are compactly supported and positive definite, which ensures that the embedded kernel $K_h$ remains a valid covariance kernel whenever $K_0$ is. Their compact support also guarantees locality of dependence: $k_h(s,\cdot)$ is supported in a neighborhood of radius $h$, so the parameter $h$ directly controls the range of dependence and the deviation from the Markov property. Sample paths of the resulting process are shown for varying bandwidths in \Cref{fig:samples-kebm}. Smoothing the Brownian covariance along both arguments increases regularity near the diagonal, and the resulting sample paths become smoother as the bandwidth grows. This framework allows us to systematically compare Markov and non-Markov processes while controlling the degree of dependence.

\begin{figure}[h!]
    \centering
    \begin{minipage}{0.3\textwidth}
        \centering
        \includegraphics[width=\textwidth]{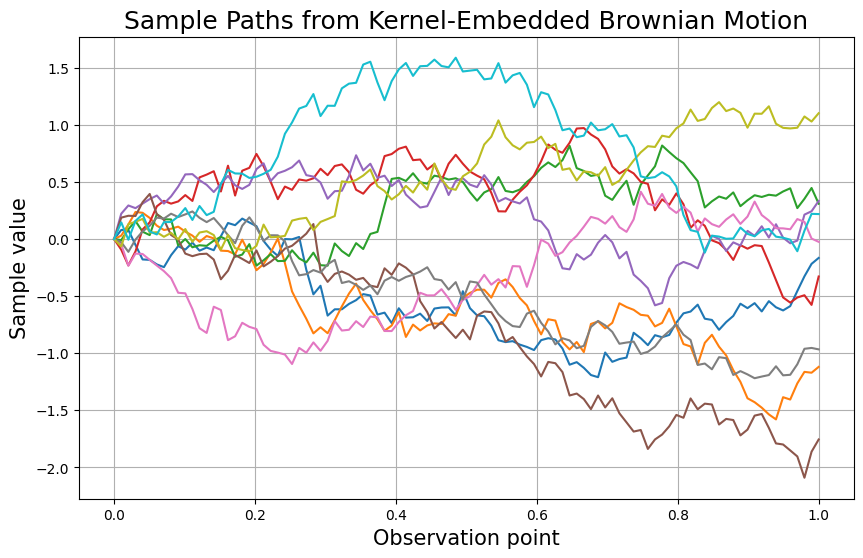}
    \end{minipage}
    \hspace{0.02\textwidth}
    \begin{minipage}{0.3\textwidth}
        \centering
        \includegraphics[width=\textwidth]{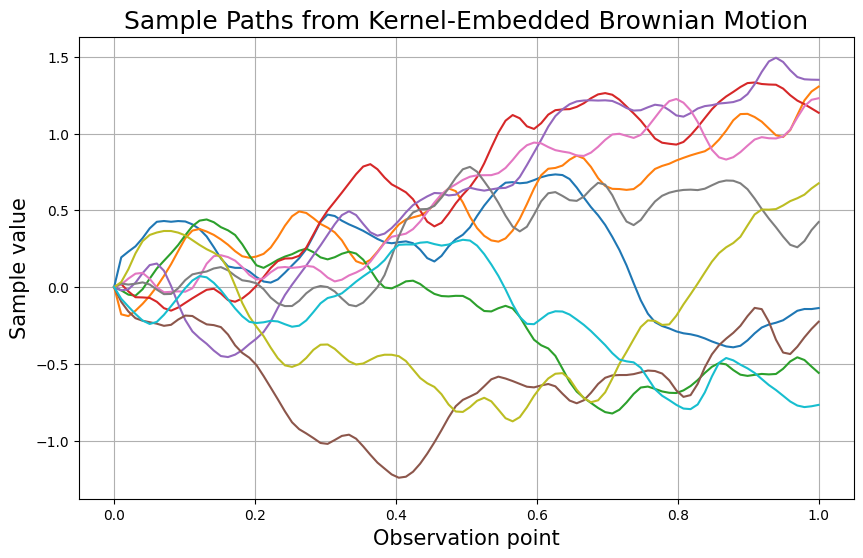}
    \end{minipage}
    \hspace{0.02\textwidth}
    \begin{minipage}{0.3\textwidth}
        \centering
        \includegraphics[width=\textwidth]{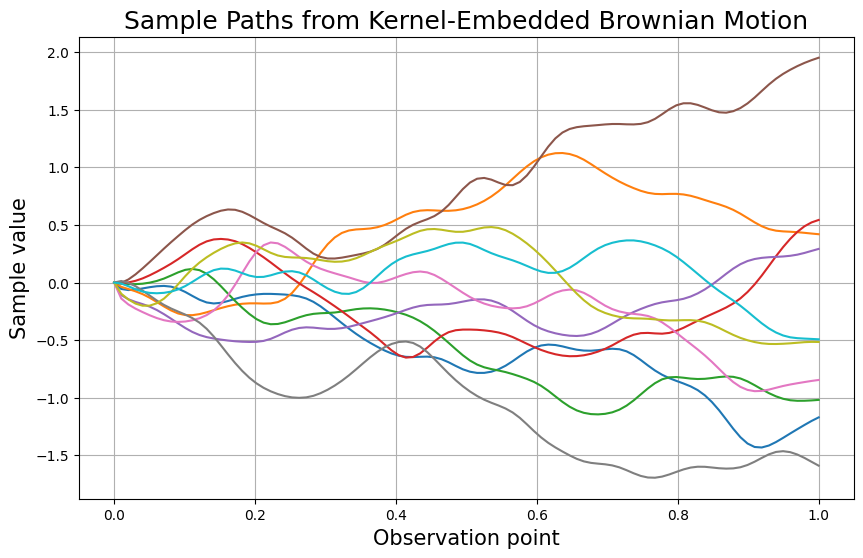}
    \end{minipage}

    \caption{Samples of the kernel-embedded Brownian motion with $p=100$ and bandwidth $h=0.01, 0.05, 0.1$.}
    \label{fig:samples-kebm}
\end{figure}

~
\newpage

\subsection{Convergence}

\begin{figure}[h!]
    \centering
    \begin{minipage}{0.32\textwidth}
        \centering
        \includegraphics[width=\textwidth]{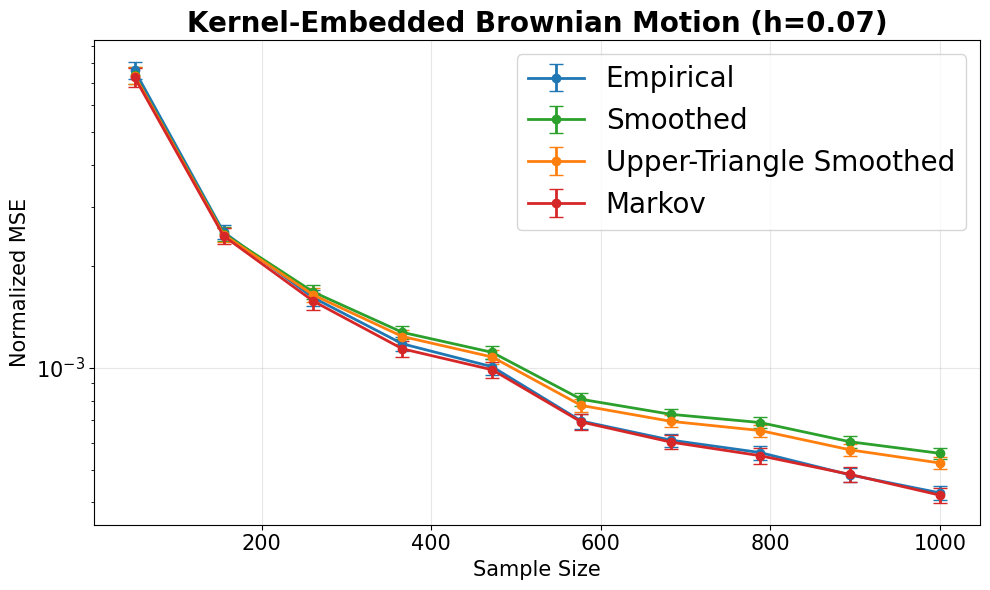}
    \end{minipage}
    \begin{minipage}{0.32\textwidth}
        \centering
        \includegraphics[width=\textwidth]{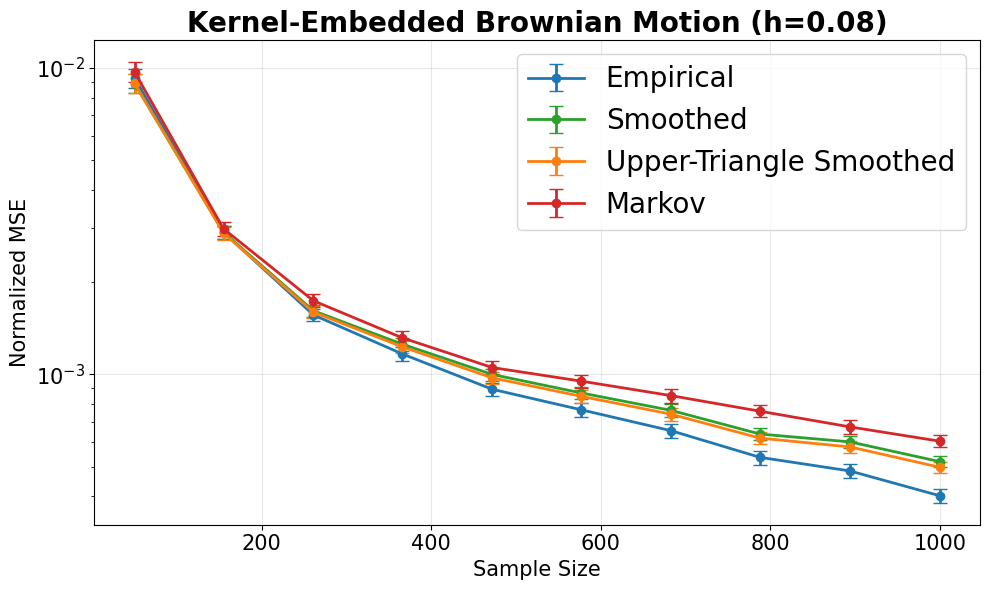}
    \end{minipage}
    \begin{minipage}{0.32\textwidth}
        \centering
        \includegraphics[width=\textwidth]{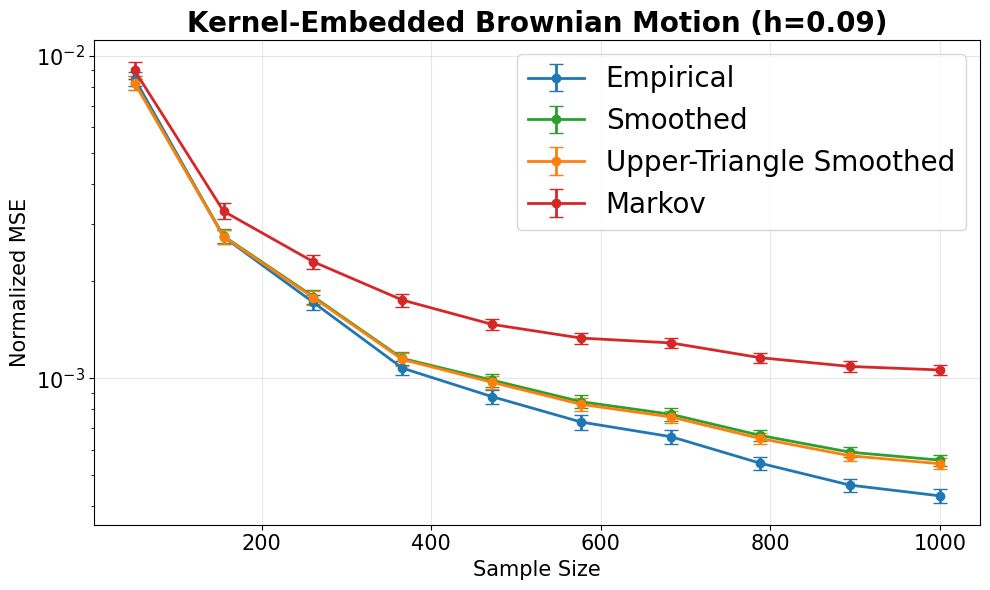}
    \end{minipage}

    \caption{Comparison of convergence in $L_2$-norm as $n$ grows for the kernel-embedded Brownian motion with $h=0.07, 0.08, 0.09$.}
    \label{fig:L2-transition}
\end{figure}

\begin{figure}[h!]
    \centering
    \includegraphics[width=\textwidth]{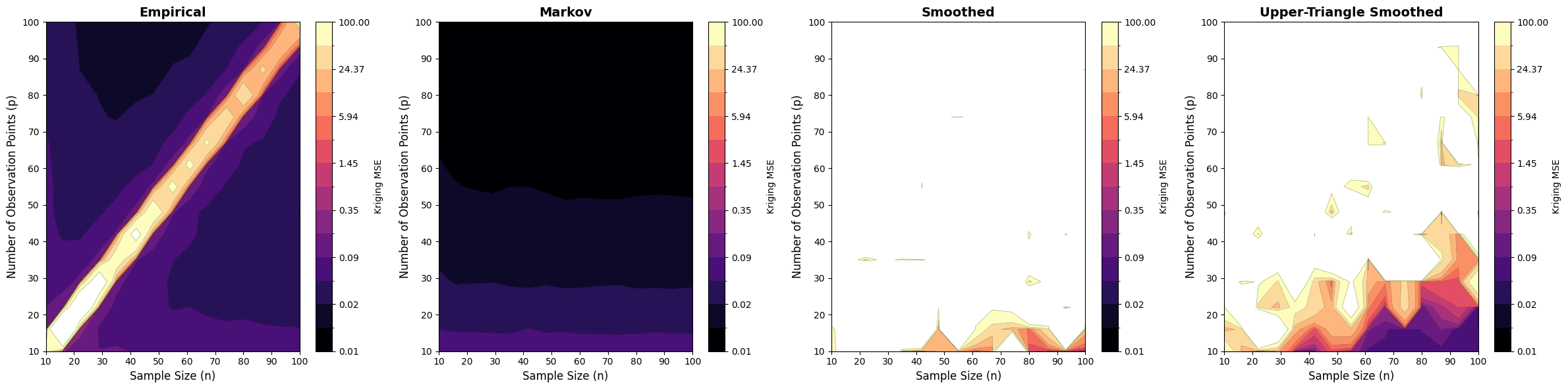}
    \caption{Comparison of kriging mean square errors for the Ornstein-Uhlenbeck process as $n$ and $p$ grow.}
    \label{fig:kriging-contour-ou}
\end{figure}

\begin{figure}[h!]
    \centering
    \includegraphics[width=\textwidth]{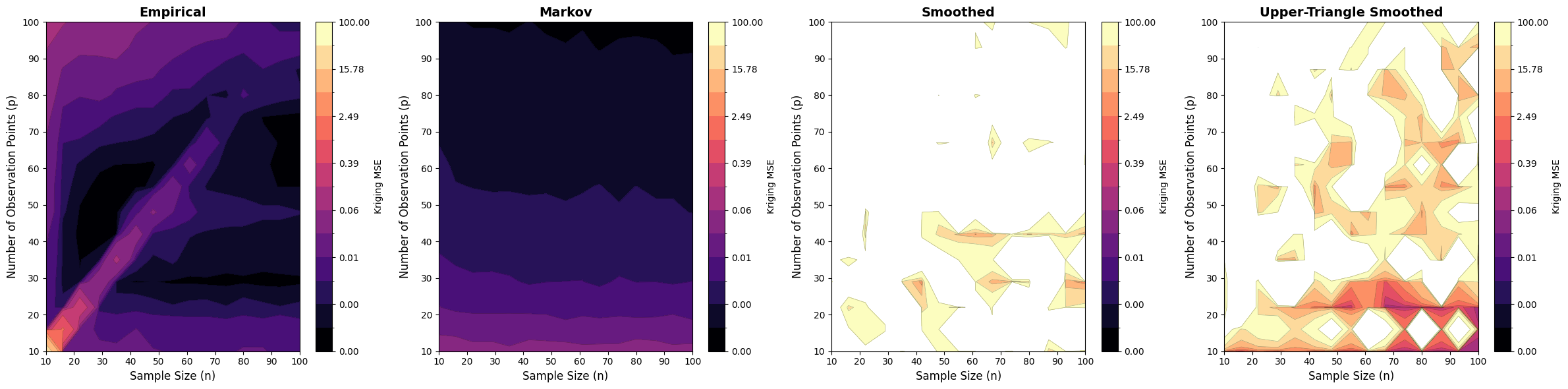}
    \caption{Comparison of kriging mean square errors for the kernel-embedded Brownian motion ($h=0.2$) as $n$ and $p$ grow.}
    \label{fig:kriging-contour-kebm}
\end{figure}

\newpage

\subsection{Testing}

\begin{figure}[h!]
\centering
\includegraphics[height=0.22\textheight]{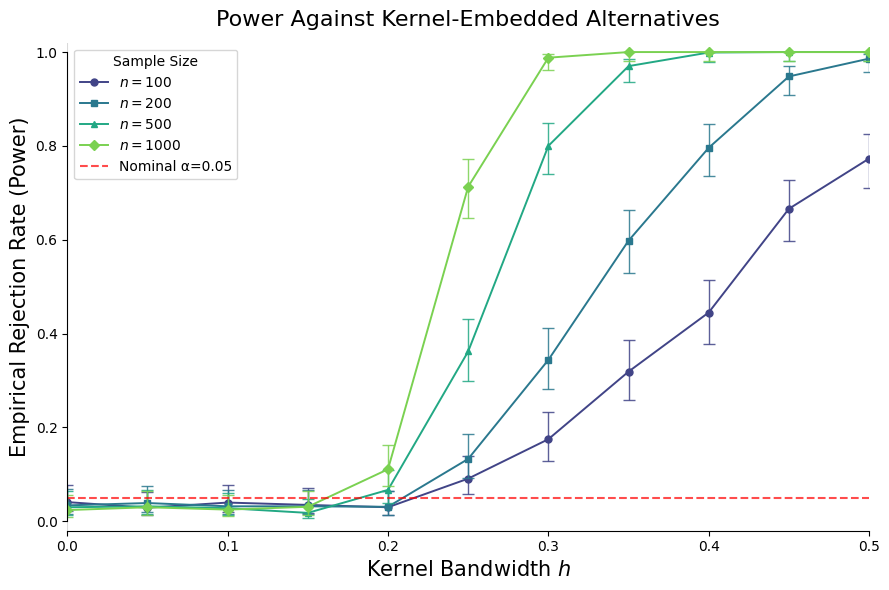}
\includegraphics[height=0.22\textheight]{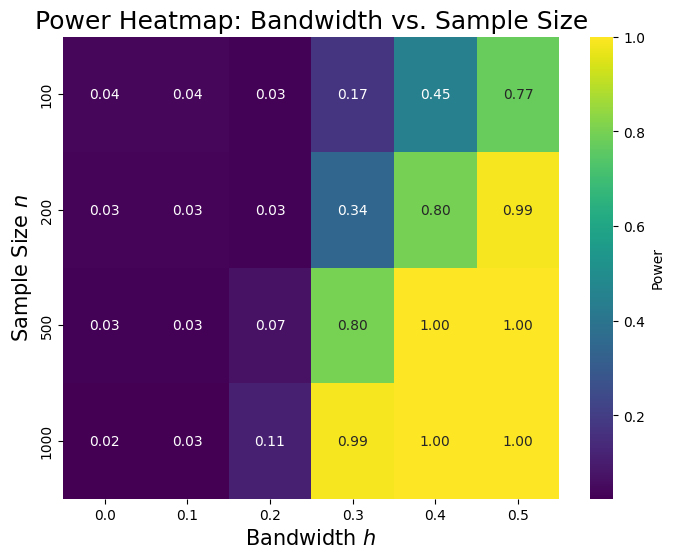}
\caption{Power of the proposed test against local alternatives generated by kernel-embedded Brownian motions with bandwidth $h \in [0,0.5]$, for $p=10$.}
\end{figure}

\begin{figure}[h!]
\centering
\includegraphics[height=0.22\textheight]{figures/experiments/testing/power_alternatives_ci_20.png}
\includegraphics[height=0.22\textheight]{figures/experiments/testing/power_heatmap_20.png}
\caption{Power of the proposed test against local alternatives generated by kernel-embedded Brownian motions with bandwidth $h \in [0,0.5]$, for $p=20$.}
\end{figure}

\begin{figure}[h!]
\centering
\includegraphics[height=0.22\textheight]{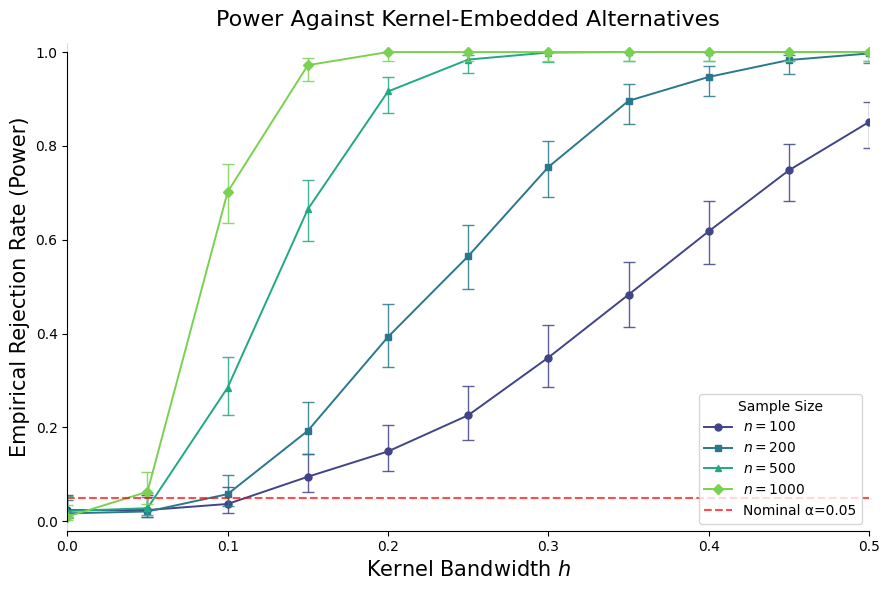}
\includegraphics[height=0.22\textheight]{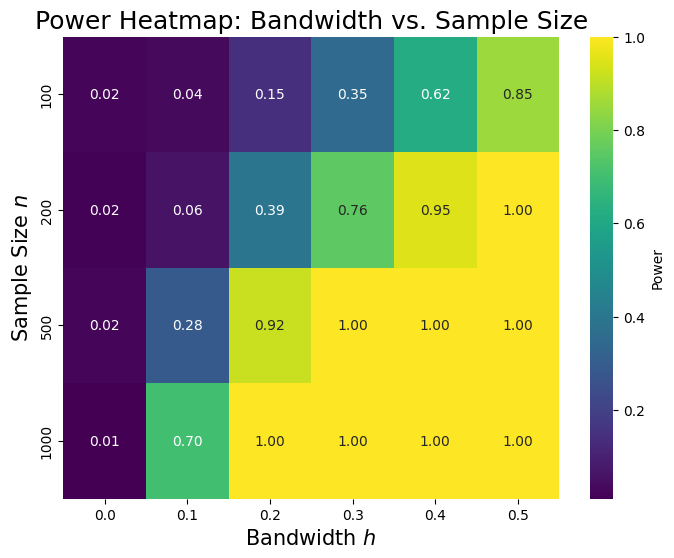}
\caption{Power of the proposed test against local alternatives generated by kernel-embedded Brownian motions with bandwidth $h \in [0,0.5]$, for $p=50$.}
\end{figure}

\begin{figure}[h!]
    \centering
    \begin{minipage}{0.32\textwidth}
        \centering
        \includegraphics[width=\textwidth]{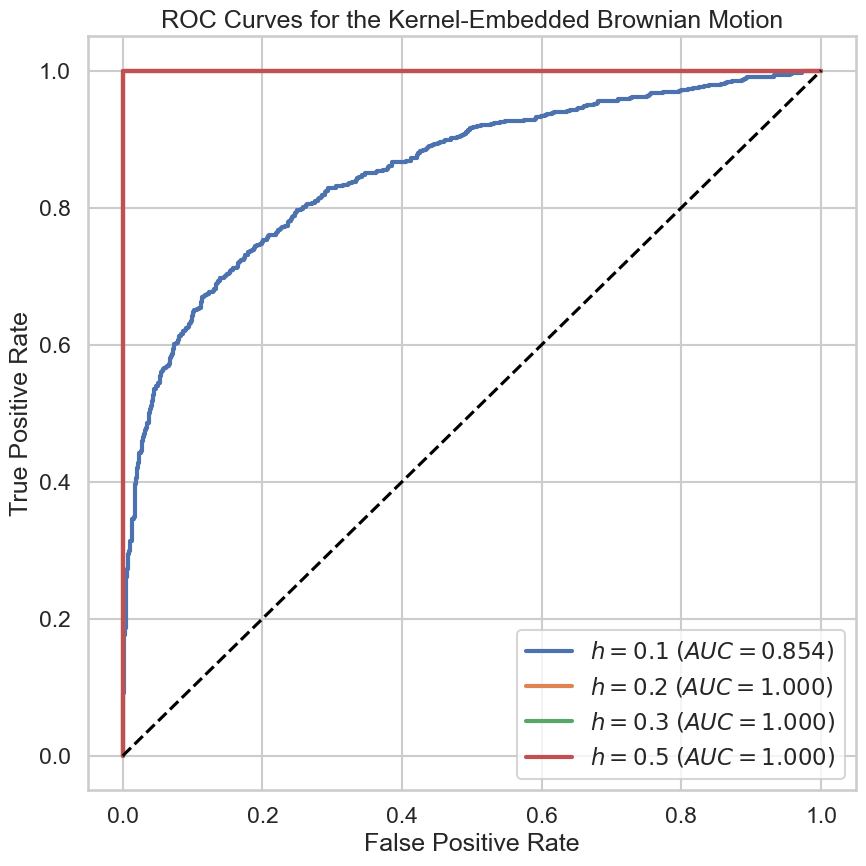}
    \end{minipage}
    \begin{minipage}{0.32\textwidth}
        \centering
    \includegraphics[width=\textwidth]{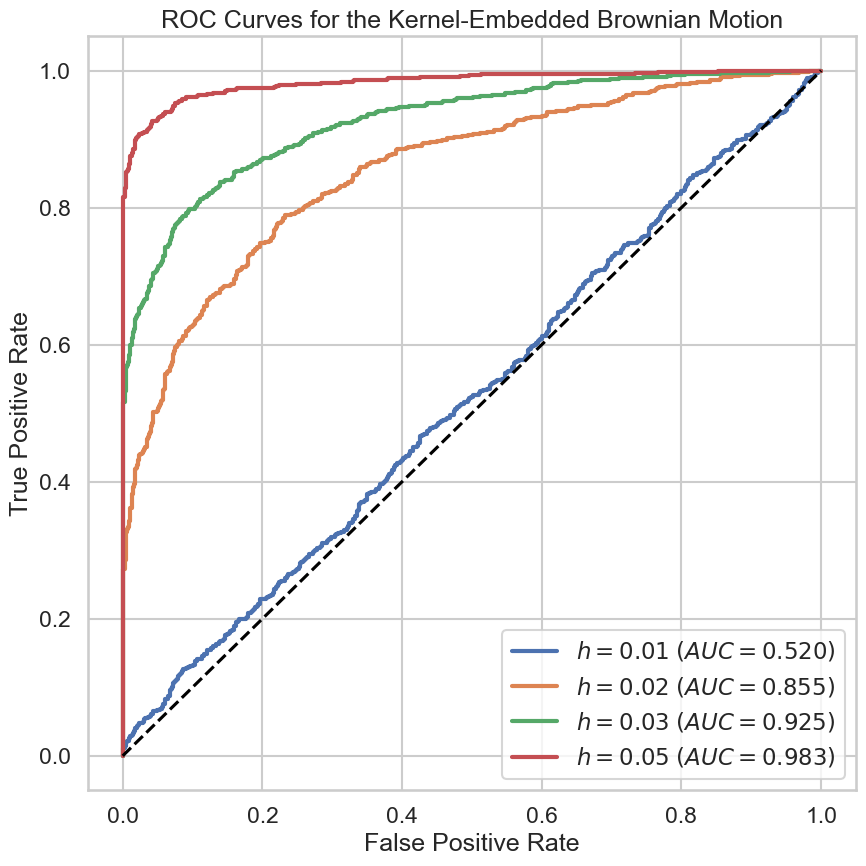}
    \end{minipage}
    \begin{minipage}{0.32\textwidth}
        \centering
        \includegraphics[width=\textwidth]{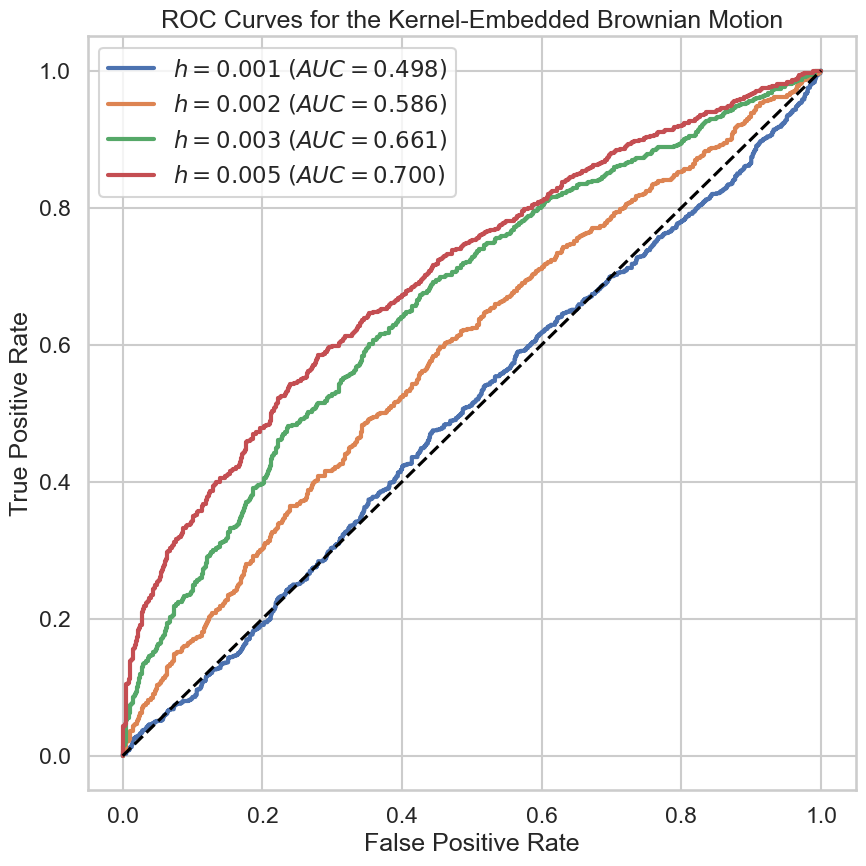}
    \end{minipage}
    \caption{ROC curves for the kernel-embedded Brownian motion with $p =10, 100, 1000$.}
    \label{fig:testing-roc-kebm}
\end{figure}

%% file: references.bib
@book{doob1962stochastic,
  title={Stochastic Processes},
  author={Doob, Joseph L.},
  series={Wiley publications in statistics},
  year={1962},
  publisher={Wiley}
}

@article{mehr1965,
    author = {Mehr, Cyrus B. and McFadden, James A.},
    title = "{Certain Properties of Gaussian Processes and Their First-Passage Times}",
    journal = {Journal of the Royal Statistical Society: Series B (Methodological)},
    volume = {27},
    number = {3},
    pages = {505-522},
    year = {1965},
    issn = {0035-9246},
}

@article{mohammadi2024rough,
  title={Functional data analysis with rough sample paths?},
  author={Mohammadi, Neda and Panaretos, Victor M.},
  journal={Journal of Nonparametric Statistics},
  volume={36},
  number={1},
  pages={4--22},
  year={2024},
  publisher={Taylor \& Francis}
}

@article{delaigle2016approximating,
  title={Approximating fragmented functional data by segments of Markov chains},
  author={Delaigle, Aurore and Hall, Peter},
  journal={Biometrika},
  volume={103},
  number={4},
  pages={779--799},
  year={2016},
  publisher={Oxford University Press}
}

@article{waghmare2025continuously,
  title={Continuously indexed graphical models},
  author={Waghmare, Kartik G. and Panaretos, Victor M.},
  journal={Journal of the Royal Statistical Society Series B: Statistical Methodology},
  volume={87},
  number={1},
  pages={211--231},
  year={2025},
  publisher={Oxford University Press UK}
}

@article{waghmare2022completion,
  title={The completion of covariance kernels},
  author={Waghmare, Kartik G. and Panaretos, Victor M.},
  journal={The Annals of Statistics},
  volume={50},
  number={6},
  pages={3281--3306},
  year={2022},
  publisher={Institute of Mathematical Statistics}
}

@article{cambanis1973some,
  title={On some continuity and differentiability properties of paths of Gaussian processes},
  author={Cambanis, Stamatis},
  journal={Journal of Multivariate Analysis},
  volume={3},
  number={4},
  pages={420--434},
  year={1973},
  publisher={Elsevier}
}

@article{yao2005functional,
  title={Functional data analysis for sparse longitudinal data},
  author={Yao, Fang and M{\"u}ller, Hans-Georg and Wang, Jane-Ling},
  journal={Journal of the American statistical association},
  volume={100},
  number={470},
  pages={577--590},
  year={2005},
  publisher={Taylor \& Francis}
}

@book{hsing2015theoretical,
  title={Theoretical foundations of functional data analysis, with an introduction to linear operators},
  author={Hsing, Tailen and Eubank, Randall},
  year={2015},
  publisher={John Wiley \& Sons}
}

@book{ramsay2005,
  author = {Ramsay, James O. and Silverman, Bernard W.},
  isbn = {9780387400808},
  publisher = {Springer},
  title = {{Functional Data Analysis}},
  year = 2005
}

@article{alghattas2024covariance,
  title={Covariance Operator Estimation via Adaptive Thresholding},
  author={Al-Ghattas, Omar and Sanz-Alonso, Daniel},
  journal={arXiv preprint arXiv:2405.18562},
  year={2024}
}

@article{shah2018hardness,
author = {Shah, Rajen and Peters, Jonas},
year = {2018},
month = {04},
pages = {},
title = {The Hardness of Conditional Independence Testing and the Generalised Covariance Measure},
volume = {48},
journal = {The Annals of Statistics},
doi = {10.1214/19-AOS1857}
}

@article{chen2012testing,
  title={Testing for the Markov property in time series},
  author={Chen, Bin and Hong, Yongmiao},
  journal={Econometric Theory},
  volume={28},
  number={1},
  pages={130--178},
  year={2012},
  publisher={Cambridge University Press}
}

@article{zhou2023testing,
  title={Testing for the Markov property in time series via deep conditional generative learning},
  author={Zhou, Yunzhe and Shi, Chengchun and Li, Lexin and Yao, Qiwei},
  journal={Journal of the Royal Statistical Society Series B: Statistical Methodology},
  volume={85},
  number={4},
  pages={1204--1222},
  year={2023},
  publisher={Oxford University Press US}
}

@article{baker1973joint,
  title={Joint measures and cross-covariance operators},
  author={Baker, Charles R.},
  journal={Transactions of the American Mathematical Society},
  volume={186},
  pages={273--289},
  year={1973}
}

@article{qiao2019functional,
  title={Functional graphical models},
  author={Qiao, Xinghao and Guo, Shaojun and James, Gareth M},
  journal={Journal of the American Statistical Association},
  volume={114},
  number={525},
  pages={211--222},
  year={2019},
  publisher={Taylor \& Francis}
}

@article{qiao2020doubly,
  title={Doubly functional graphical models in high dimensions},
  author={Qiao, Xinghao and Qian, Cheng and James, Gareth M and Guo, Shaojun},
  journal={Biometrika},
  volume={107},
  number={2},
  pages={415--431},
  year={2020},
  publisher={Oxford University Press}
}

@article{yuan2007glasso,
  title={Model selection and estimation in the Gaussian graphical model},
  author={Yuan, Ming and Lin, Yi},
  journal={Biometrika},
  volume={94},
  number={1},
  pages={19--35},
  year={2007},
  publisher={Oxford University Press}
}

@article{chernozhukov2013gaussian,
  title={Gaussian approximations and multiplier bootstrap for maxima of sums of high-dimensional random vectors},
  author={Chernozhukov, Victor and Chetverikov, Denis and Kato, Kengo},
  journal={The Annals of Statistics},
  pages={2786--2819},
  year={2013},
  publisher={JSTOR}
}

@article{chao1972negative,
  title={Negative moments of positive random variables},
  author={Chao, Min-Te and Strawderman, William E.},
  journal={Journal of the American Statistical Association},
  volume={67},
  number={338},
  pages={429--431},
  year={1972},
  publisher={Taylor \& Francis}
}

@article{berman1973local,
  title={Local nondeterminism and local times of Gaussian processes},
  author={Berman, Simeon M. and Getoor, Ronald},
  journal={Indiana University Mathematics Journal},
  volume={23},
  number={1},
  pages={69--94},
  year={1973},
  publisher={JSTOR}
}

@article{cuzick1982joint,
  title={Joint continuity of Gaussian local times},
  author={Cuzick, Jack and DuPreez, Johannes P.},
  journal={The Annals of Probability},
  pages={810--817},
  year={1982},
  publisher={JSTOR}
}

@article{xiao2007strong,
  title={Strong local nondeterminism and sample path properties of Gaussian random fields},
  author={Xiao, Yimin},
  journal={Asymptotic theory in probability and statistics with applications},
  pages={136--176},
  year={2007},
  publisher={Higher Education Press Beijing}
}

@article{devroye1999almost,
author = {Luc Devroye and Gábor Lugosi},
title = {Almost sure classification of densities},
journal = {Journal of Nonparametric Statistics},
volume = {14},
number = {6},
pages = {675--698},
year = {2002},
publisher = {Taylor \& Francis},
doi = {10.1080/10485250215323},
URL = {https://doi.org/10.1080/10485250215323},
eprint = {https://doi.org/10.1080/10485250215323}
}

@article{waghmare2025functional,
  title={The functional graphical lasso},
  author={Waghmare, Kartik G. and Masak, Tomas and Panaretos, Victor M.},
  journal={The Annals of Statistics},
  volume={53},
  number={5},
  pages={1857--1885},
  year={2025},
  publisher={Institute of Mathematical Statistics}
}

@article{dempster1972covariance,
  title={Covariance selection},
  author={Dempster, Arthur P.},
  journal={Biometrics},
  pages={157--175},
  year={1972},
  publisher={JSTOR}
}

@article{carroll2013unexpected,
  title={Unexpected properties of bandwidth choice when smoothing discrete data for constructing a functional data classifier},
  author={Carroll, Raymond J. and Delaigle, Aurore and Hall, Peter},
  journal={Annals of statistics},
  volume={41},
  number={6},
  pages={2739},
  year={2013}
}

@article{bickel2008regularized,
author = {Peter J. Bickel and Elizaveta Levina},
title = {{Regularized estimation of large covariance matrices}},
volume = {36},
journal = {The Annals of Statistics},
number = {1},
publisher = {Institute of Mathematical Statistics},
pages = {199 -- 227},
year = {2008},
doi = {10.1214/009053607000000758},
URL = {https://doi.org/10.1214/009053607000000758}
}

@article{liu2020cholesky,
author = {Yu Liu and Zhao Ren},
title = {{Minimax estimation of large precision matrices with bandable Cholesky factor}},
volume = {48},
journal = {The Annals of Statistics},
number = {4},
publisher = {Institute of Mathematical Statistics},
pages = {2428 -- 2454},
year = {2020},
doi = {10.1214/19-AOS1893},
URL = {https://doi.org/10.1214/19-AOS1893}
}

@article{lam2009sparsistency,
 ISSN = {00905364, 21688966},
 URL = {http://www.jstor.org/stable/25662231},
 author = {Clifford Lam and Jianqing Fan},
 journal = {The Annals of Statistics},
 number = {6B},
 pages = {4254--4278},
 publisher = {Institute of Mathematical Statistics},
 title = {SPARSISTENCY AND RATES OF CONVERGENCE IN LARGE COVARIANCE MATRIX ESTIMATION},
 urldate = {2026-02-19},
 volume = {37},
 year = {2009}
}

@article{meinshausen2006graphs,
author = {Nicolai Meinshausen and Peter B{\"u}hlmann},
title = {{High-dimensional graphs and variable selection with the Lasso}},
volume = {34},
journal = {The Annals of Statistics},
number = {3},
publisher = {Institute of Mathematical Statistics},
pages = {1436 -- 1462},
year = {2006},
doi = {10.1214/009053606000000281},
URL = {https://doi.org/10.1214/009053606000000281}
}

@book{kaipio2005statistical,
  title={Statistical and computational inverse problems},
  author={Kaipio, Jari P. and Somersalo, Erkki},
  year={2005},
  publisher={Springer}
}

@article{zhou2025dynamic,
  title={Dynamic modelling of sparse longitudinal data and functional snippets with stochastic differential equations},
  author={Zhou, Yidong and Mueller, Hans-Georg},
  journal={Journal of the Royal Statistical Society Series B: Statistical Methodology},
  volume={87},
  number={3},
  pages={833--849},
  year={2025},
  publisher={Oxford University Press UK}
}
